\numberwithin{equation}{section}
\newtheorem*{proposition*}{\bf Proposition}
\journal{}
\begin{document}

\title{A novel hybrid approach for accurate simulation of compressible multi-component flows across all-Mach number }
\author[ad1]{Xi Deng \corref{cor}}

\author[ad2]{Bin Xie}

\author[ad1]{Omar K. Matar}

\author[ad3]{Pierre Boivin}

%\address[ad1]{Aix Marseille Univ, CNRS, Centrale Marseille, M2P2, Marseille, France}

\address[ad1]{Department of Chemical Engineering, Imperial College London, SW7 2AZ, United Kingdom}

\address[ad2]{School of Naval Architecture, Department of Ocean and Civil Engineering, Shanghai Jiaotong University, Shanghai, 200240, China}

\address[ad3]{Aix Marseille Univ, CNRS, Centrale Med, M2P2, Marseille, France}

\cortext[cor]{Corresponding author: x.deng@imperial.ac.uk}

\begin{abstract}
Numerical simulation of multi-component flow systems characterized by the simultaneous presence of pressure-velocity coupling and pressure-density coupling dominated regions remains a significant challenge in computational fluid dynamics. Thus, this work presents a novel approach that combines the Godunov-type scheme for high-speed flows with the projection solution procedure for incompressible flows to address this challenge. To simulate compressible multi-component flows, this study employs the homogeneous model and solves its conservative form by the finite-volume method, which enables the application of the one-fluid model solution procedure while satisfying conservation. The proposed hybrid approach begins by splitting the inviscid flux into the advection part and the pressure part. The solution variables are first updated to their intermediate states by solving the advection part with the all-speed AUSM (Advection Upwind Splitting Method) Riemann solver. The advection flux in AUSM is modified to eliminate the pressure flux term that deteriorates the accuracy at the low Mach region. To prevent the advection flux from causing spurious velocities when surface tension is present, the pressure-velocity coupling term is modified to ensure it vanishes at material interfaces. Then, we derive the pressure Helmholtz equation to solve the final pressure and update the intermediate states to the solution variables at the next time step. The proposed hybrid approach retains the upwind property of the AUSM scheme for high Mach numbers while recovering central schemes and the standard projection solution for low Mach limits. We also prove that the proposed hybrid method is capable of solving the stationary contact discontinuity exactly. To accurately resolve the complex flow structures including shock waves and material interfaces without numerical oscillations, a newly proposed homogenous ROUND (Reconstruction Operator on Unified Normalised-variable Diagram) reconstruction strategy is employed in this work.  By simulating high-speed compressible multiphase flows and incompressible multiphase flows, this study demonstrates the ability of the proposed method to accurately handle flow regimes across all Mach numbers. Its capability to address multi-physical processes at all Mach numbers is further validated through simulations of nucleate boiling and the Richtmyer-Meshkov instability with cavitation.

\end{abstract}

\begin{keyword}
 compressible multi-component flows; homogenous model; all Mach number; multi-physical process modelling; hybrid approach.   
\end{keyword}
\maketitle

\newpage
\section{Introduction}
Multi-component flows with moving material interfaces are common in various engineering applications and occur in both compressible and incompressible flow regimes. Examples such as underwater explosions and the combustion chambers of hypersonic vehicles involve highly compressible flows characterized by interactions between shock waves and material interfaces, where pressure-density coupling is dominant. In contrast, examples such as water tanks are characterized by low-speed flows in which pressure-velocity coupling is dominant. In some cases such as the lithium boiling tank in the cooling system of the nuclear reactors, the flow systems are characterized by the simultaneous presence of pressure-velocity coupling and pressure-density coupling dominated regions. The numerical simulation of such a flow system requires a model capable of accurately handling multiphase flows across all Mach numbers.

Numerical simulations of multiphase flows are generally classified into two approaches based on the treatment of material interfaces.
One of the approaches is the sharp interface method (SIM). SIM models the material interface as a sharp discontinuity. The interface is explicitly represented using techniques such as the front-tracking method \cite{tryggvason2001front}, the level-set function \cite{osher1988fronts}, or the ghost fluid method \cite{fedkiw1999non}. These methods excel in preserving a sharply resolved interface with high accuracy. However, challenges arise when the interface becomes highly wrinkled, particularly for the front-tracking method. For the level-set and ghost fluid methods, ensuring robustness and maintaining mass conservation require careful consideration. Additionally, handling dynamic creation and disappearance of interfaces due to phase transitions is not straightforward for SIM. In contrast, another approach, known as the diffuse interface method (DIM), treats the material interface as a diffusive zone. One of the drawbacks of DIM is that the numerical interface thickness can be substantially larger than the actual physical thickness. However, DIM offers advantages, such as its ability to straightforwardly handle large distortions of the material interface and model phase transitions. Furthermore, as pointed out by \citet{lemartelot2013liquid,demou2022pressure},
DIM provides additional advantages for compressible multiphase simulations, including the retention of thermodynamic consistency everywhere and the ability to robustly calculate property gradients due to the diffused interface.

Several models are available for modelling compressible multiphase flow systems using the DIM approach. The seven-equation model, also known as the Baer-Nunziato model \cite{baer1986two}, is the most general model. This model and its variant \cite{saurel1999multiphase} consist of two momentum equations, two energy equations, equations for the mass conservation in each phase, and the evolution equation for the volume fraction. In this model, each component is described by a complete set of parameters, which means that kinetic, mechanical, and thermal equilibria are not enforced within each cell state. Although the seven-equation model is a physically complete model, numerically solving the seven-equation model is complicated, since the model involves complex wave structures and stiff relaxation processes. Derived from the seven-equation model, several reduced models have been proposed through relaxation processes that force the system to achieve specific equilibrium states. For example, the six-equation model, which enforces kinetic equilibrium, has been developed and improved by \citet{saurel2009simple,pelanti2014mixture}. Through enforcing kinetic and mechanical equilibria in each cell state, the five-equation model, originally known as the Kapila model \cite{kapila2001two},  has been developed and studied by \citet{allaire2002five,murrone2005five,jain2020conservative}. Recent work by \citet{zhang2023mathematical} derives a five-equation model by assuming that the pressure and the velocity relaxation have a larger time scale than the original Kapila model. The five-equation model has been considered a good compromise between physical completeness and numerical complexity. 

The four-equation model, also known as the homogenous model, is derived by assuming kinetic, mechanical, and thermal equilibria\cite{abgrall1996prevent,saurel1999simple,johnsen2012preventing}. One drawback of the four-equation model is the emergence of pressure and velocity oscillations across material interfaces when the fully-conservative form is solved using the Godunov-type finite-volume method. Thus, special techniques, such as the non-conservative double-flux strategy \cite{abgrall2001computations}, or the inclusion of additional equations for thermodynamic parameters \cite{shyue1998efficient,shyue1999fluid}, are required to prevent model-induced oscillations. However, the four-equation model offers the advantage of conveniently handling an arbitrary number of components. Furthermore, when heat conductivity is included and the thermal boundary layers at the material interface are properly resolved, the thermal equilibrium assumption becomes appropriate \cite{demou2022pressure}. Thus, using the four-equation model for boiling is justified \cite{le2014towards}. Since our target is to simulate the lithium boiling tank, the four-equation model will be adopted in this work. 

The development of truly all-Mach methods for compressible multiphase flow simulations remains a challenge. Although density-based solvers are widely employed for high-speed flows, they encounter stiffness issues and reduced accuracy when applied to low-Mach number flows. To overcome this issue, the works of \citet{le2014towards,murrone2008behavior,pelanti2017low} adopt preconditioning techniques to simulate multiphase flows at low-Mach number using the density-based solver. However, as noted by \citet{turkel2005local,saurel2018diffuse}, there remains much room for improving the efficiency of preconditioning methods in simulating unsteady and multi-component flows. Another approach to simulating compressible multiphase flows using density-based solvers involves modifying the numerical flux via the Riemann solver. For instance, this strategy was adopted by \citet{jiao2024all}. As discussed by \citet{jiao2024all}, a potential drawback of this strategy is that it may not achieve the same level of accuracy as the original Riemann solver for high-speed flow simulations. On the other hand, the pressure-based solver, originally developed for incompressible flows, has been extended to simulate high-speed compressible multiphase flows, as demonstrated by \citet{jemison2014compressible,denner2018pressure}. 

Pressure-based solvers for all-Mach number flows have been further developed by \citet{dumbser2016conservative,park2005multiple,bermudez2020staggered,busto2021semi,xie2019high}. Although these pressure-based all-Mach-number solvers are robust in handling simulations, the results show that their accuracy for high-speed flows is often unsatisfactory. Therefore, there remains significant potential for developing methods that are both accurate and robust across all Mach numbers. Recent work by \citet{demou2022pressure} introduced a novel pressure-based formulation for four-equation models, enabling the simulation of low-Mach multiphase flows with phase-transition phenomena such as nucleate boiling. Since the mentioned work solves the non-conservative form, their proposed formulation may not be directly applied to scenarios involving shock waves, such as lithium boiling tanks whose simulation is the motivation of the current study. 

When simulating compressible multiphase flows using DIM, it is crucial to prevent excessive diffusion of the material interface, particularly in long-term simulations. The work of \cite{johnsen2006implementation,de2015new} applied the high-order weighted essentially non-oscillatory (WENO) schemes \cite{liu1994weighted} or discontinuous Galerkin methods to simulate compressible multiphase flows. To achieve a sharp resolution of material interfaces, compressive numerical schemes \cite{shukla2010interface,chiapolino2017sharpening} and anti-diffusion methods \cite{so2012anti} have been developed. A low-dissipation reconstruction strategy, known as the Boundary Variation Diminishing (BVD) scheme, has been applied to solve the five-equation model \cite{deng2018high} and has been further extended to unstructured grids \cite{cheng2021low}. However, as shown by \citet{cheng2022new}, both WENO and BVD schemes are computationally expensive when applied to unstructured grids. Recently, a high-resolution, scalar-structure-preserving scheme named ROUND (reconstruction operator on unified-normalised-variable diagram) has been developed in a framework that unifies existing non-linear schemes in a compact stencil \cite{deng2023unified}. The numerical tests demonstrate that ROUND schemes produce superior resolution than the fifth-order WENO scheme across the discontinuity. The work \cite{deng2023new,deng2023large} extended the ROUND scheme to unstructured grids and demonstrated its efficacy compared to conventional TVD (total variation diminishing) schemes. Therefore, applying ROUND schemes to simulate compressible multiphase flows holds significant potential.   

This work presents a novel approach that combines the density-based solver for high-speed flows with the projection solution procedure for incompressible flows to simulate compressible multiphase flows across all-Mach number. The homogeneous model and its conservative form are solved. Thus, the conservation law is strictly satisfied. The proposed hybrid approach retains the upwind property of the Godunov-type scheme for high Mach numbers while retrieving central schemes and the standard projection solution for low Mach limits. We also prove that the proposed hybrid method is capable of solving the stationary contact discontinuity exactly. To achieve high-resolution and oscillation-free simulation, a newly proposed homogenous ROUND reconstruction strategy is employed in this work.  By simulating high-speed compressible multiphase flows and incompressible multiphase flows with and without phase transition, this study demonstrates the ability of the proposed method to accurately handle complex compressible multicomponent flows across all Mach numbers.

The paper is organized as follows. First, we introduce the governing equations and the associated thermodynamic closures in Section \ref{sec:numericalModel}. The details of the proposed hybrid approach for all-Mach flows are given in Section \ref{sec:solutionProcedures}. Then, Section \ref{sec:multiphysics} presents the details of modelling surface tension and phase transition. In Section \ref{sec:numericalMethods}, the numerical methods employed in this work will be given. The performance of the proposed solver will be validated through numerical experiments in Section \ref{sec:numericalExperiments}. Conclusions and perspectives are drawn in Section \ref{sec:concluding}.

\section{Numerical models for multi-component systems} \label{sec:numericalModel}

\subsection{Governing equations}
A homogeneous mixture model also known as the four-equation model is employed in this work. The homogeneous model assumes kinetic, mechanical, and thermal equilibria between different components, which means that pressure, temperature, and velocity are shared by all components within a single computational cell. Detailed discussions of the model's underlying assumptions can be found in \cite{kapila2001two, saurel1999multiphase, abgrall2001computations}. As part of the diffuse interface methods, the homogeneous mixture model treats the material interface as a diffuse zone, allowing the same numerical method to be applied across different thermodynamic states (liquid/gaseous). This model can naturally handle the creation and destruction of material interfaces due to phase transitions \cite{saurel2001multiphase}. In this work, quantities for individual components are indicated with the subscript $k$, while mixture quantities are denoted without a subscript. 

The homogenous model including viscosity, heat conductivity, surface tension effect, and gravity  is given by
\begin{equation}
\begin{cases}
  \begin{aligned}
 %&\frac{\partial \rho}{\partial t} + \DIV( \rho \textbf{u}) = 0, \\
 &\frac{\partial \rho Y_k}{\partial t} +  \mathbf{\nabla} \cdot( \rho Y_k \textbf{u}  ) = \dot\omega_{k}, \\
 &\frac{\partial \rho \textbf{u}}{\partial t} + \mathbf{\nabla} \cdot \Bigl(\rho \textbf{u}\otimes\textbf{u} + p\overline{\overline{I}} \Bigl) = \mathbf{\nabla} \cdot( \overline{\overline{\tau}})+\mathbf{f_{g}}+\mathbf{f_{s}}, \\
 & \frac{\partial \rho E}{\partial t} +\mathbf{\nabla} \cdot \Bigl( (\rho E + p)\textbf{u} \Bigl)  = \mathbf{\nabla} \cdot( \overline{\overline{\tau}} \cdot \textbf{u})-\mathbf{\nabla} \cdot(\textbf{q})+(\mathbf{f_{g}}+\mathbf{f_{s}}) \cdot \textbf{u}, 
  \end{aligned}
  \end{cases}
   \label{eq_PDE}
\end{equation}
where $\rho$, $\textbf{u}$, and $p$ are the density, the velocity vector, and the pressure of the mixture, respectively. The mass fraction for each component is denoted as $Y_{k}$. The mass transfer for each component $k$ due to phase transition is $\dot\omega_{k}$. $E$ is the mixture total energy and defined as $E=e+\frac{1}{2}\textbf{u}^{2}$, where $e$ is the specific mixture internal energy. The viscous stress tensor $\overline{\overline{\tau}}$ is introduced to account for the viscous effect, expressed as follows
\begin{equation}
\overline{\overline{\tau}}= \mu \Bigl( \mathbf{\nabla}\textbf{u}+(\mathbf{\nabla}\textbf{u})^{\text{T}}-\frac{2}{3} \mathbf{\nabla} \cdot(\textbf{u})\overline{\overline{I}} \Bigl),
\end{equation}
where $\mu$ is the mixture dynamic viscosity. The energy flux $\textbf{q}$ is calculated by Fourier's law as 
\begin{equation}
\textbf{q}=-\lambda \cdot \mathbf{\nabla}T,
\label{eq:heatFlux}
\end{equation}
where $T$ is the mixture temperature and $\lambda$ is the mixture heat conduction coefficient. Following \cite{le2014towards}, the mixture dynamic viscosity and heat conduction are calculated by the mixture rule as
\begin{equation}
    \mu=\sum_{k=1}^N \alpha_k \mu_k,~~\text{and}~~\lambda=\sum_{k=1}^N \alpha_k \lambda_k,
\end{equation}
where $\alpha_k$ is the volume fraction for each component. $\mathbf{f_{g}}$ is the gravity force and calculated as $\mathbf{f_{g}}=\rho \mathbf{g}$, where $\mathbf{g}$ is the acceleration due to gravity. Finally, $\mathbf{f_{s}}$ is the surface tension force. The details of calculating this force will be given in the following sections. It is noted that mass conservation equations can also be solved alternatively as
\begin{equation}
\frac{\partial \rho_k \alpha_k}{\partial t} +  \mathbf{\nabla} \cdot( \rho_k \alpha_k \textbf{u}  ) =\dot\omega_{k},
\end{equation}
where $\rho_k$ is the material density of each component. 

\subsection{Equation of state}
The equation of state (EOS) which provides a pressure law $p_k(e_k,\rho_k)$ and a temperature law $T_k(p_k,\rho_k)$ is required to close the equation system. In this work, we use Noble-Abel Stiffened Gas (NASG) \cite{le2016noble} for thermodynamic closure. Compared with Stiffened Gas (SG) EOS, NASG improves the liquid specific volume accuracy by taking the repulsive molecular effects into account. NASG EOS for a given component is expressed as
\begin{equation}
\begin{cases}
  \begin{aligned}
 & p_k=\frac{(\gamma_k-1)\rho_k}{1-\rho_k b_k} (e_k-q_k)-\gamma_k p_{\infty,k}, \\
 & T_k=\frac{(1-\rho_k b_k)(p_k+p_{\infty,k})}{C_{v,k}\rho_k(\gamma_k-1)},\\
 &c_k= \sqrt{\gamma_k \frac{p_k+p_{\infty,k}}{\rho_k (1-\rho_k b_k)}},
  \end{aligned}
  \end{cases}
   \label{eq:EOS}
\end{equation}
where $c_k$ is the sound speed of pure component and  parameter $C_{v,k}$ is component's specific heat capacity at constant pressure. $\gamma_k$, $b_k$, $q_k$ and $p_{\infty,k}$ are other parameters of EOS for each component. It is noted that $\gamma_k=\frac{C_{p,k}}{C_{v,k}}$, where $C_{p,k}$ is the component's specific heat capacity. All parameters are assumed to be constant in this work.  

The EOS for the mixture is determined based on the four-equation model, which assumes temperature and pressure equilibrium, along with relations for mixture density and mixture energy:
\begin{equation}
\begin{cases}
T=T_k,~\forall k \\
p=p_k,~\forall k\\
v =\sum^{N}_{k=1}{Y_k v_k}, \\
e = \sum^{N}_{k=1}{Y_k e_k}, 
\end{cases}
\label{eq:equilibrium}
\end{equation}
where $v$ is specific volume. This work assumes that only one component may be found in its liquid form and is given index $k=1$. Then, there is $p_{\infty,k}=0$ and $b_k=0$, $\forall k>1$. Then, by solving Eq.~\ref{eq:EOS} and imposing the conditions give by Eq.~\ref{eq:equilibrium}, the mixture temperature is
\begin{equation}
T
= \frac{e-\sum^N_{k=1} Y_k q_k}{ \sum^N_{k=1} Y_k C_{v,k}  \left( \frac{p+ \gamma_k p_{\infty, k}}{p+p_{\infty, k}} \right) }, \\
\end{equation}
and the mixture pressure is obtained as
\begin{equation}
 p = \frac{\tilde{b}+\sqrt{\tilde{b}^2 + 4 \tilde{a} \tilde{c}}}{2\tilde{a}},
\end{equation}
with
 \begin{equation}
 \begin{cases}
  \begin{aligned}
&\tilde{a}=\bar{C_{v}}, \\
&\tilde{b}=   \left( \frac{e-\bar{q}}{v-\bar{b}}  \right) \left(  \bar{C_{p}}-\bar{C_{v}} \right)-p_{\infty, 1}\bar{C_{v}} - p_{\infty, 1}Y_1 \left( C_{p,1}-C_{v,1} \right)   ,    \\
&\tilde{c}=    \left(\frac{e-\bar{q}}{v-\bar{b}} \right) p_{\infty, 1} \left[  \bar{C_{p}}-\bar{C_{v}} - Y_1 \left( C_{p,1}-C_{v,1} \right)   \right],
  \end{aligned}
 \end{cases} \label{eq_eos_p_coeff}
\end{equation}
and
 \begin{equation}
  \bar{C_{v}}=\sum^N_{k=1} Y_k C_{v,k},  \hspace{0.5cm} 
  \bar{C_{p}}=\sum^N_{k=1} Y_k C_{p,k},  \hspace{0.5cm} 
  \bar{q} = \sum^N_{k=1} Y_k q_k,  \hspace{0.5cm} 
  \bar{b} = \sum^N_{k=1} Y_k b_k.
  \label{eq_coeffEOS}
\end{equation}

%\begin{equation}
%    p=\frac{(\gamma-1)\rho}{1-\rho b} (e-q)-\gamma p_{\infty}
%\end{equation}

\section{Solution procedures for all Mach number flow} \label{sec:solutionProcedures}
The key solution procedure for solving the equation systems is to update the inviscid flux term. Thus, we separate the inviscid flux term from the remainder of the control equation system. The equation system with only the inviscid flux term is written as
\begin{equation} \label{eq:invEquation}
   \frac{\partial \mathbf{U}}{\partial t} +  \mathbf{\nabla} \cdot \mathbf{F}_{inv}(\mathbf{U}) =0,
\end{equation}
where solution variables $\mathbf{U}=\Bigl(\rho Y_1,...\rho Y_k, \rho \mathbf{u}, \rho E\Bigl)^{\text{T}}$ and inviscid flux is $\mathbf{F}_{inv}=\Bigl(\rho Y_1 \textbf{u},...\rho Y_k \textbf{u}, \rho \textbf{u}\otimes\textbf{u} + p\overline{\overline{I}}, (\rho E + p)\textbf{u}\Bigl)^{\text{T}}$. To numerically solve the above equation system, there are generally two categories, namely, density-based solvers and pressure-based solvers. Density-based solvers are suitable for high-speed flows where the pressure-density coupling is dominant. However, density-based solvers such as Godunov-type schemes generally exhibit deteriorating accuracy for low-Mach number flow. Introducing preconditioning techniques is one of the solutions to overcome the limitations at low Mach number. However, as pointed out by \citet{saurel2018diffuse}, preconditioning techniques need to be more efficient for unsteady flow and multiphase flow simulations. On the other hand, pressure-based solvers have advantages in solving low-Mach number flows where the pressure-velocity coupling dominates. However, when solving the non-conservative form, the numerical solution of pressure-based solvers for strong shock waves may not be correct. For compressible multiphase flows, such as in the lithium boiling tank, pressure can be strongly coupled with both velocity and density due to the significant variations in sound speed between phases. Therefore, a hybrid approach that incorporates both strong pressure-density and pressure-velocity coupling is required for accurate all-Mach compressible multiphase simulations. 

Before giving the formulation of the proposed hybrid approach for all-Mach flows, we first review one of the widely-used all-Mach Riemann solvers.

\subsection{Simple low-dissipation AUSM Riemann solver} \label{sec:slau}
One of the widely-used Riemann solvers is Advection upstream splitting method (AUSM) which was proposed in \cite{liou1993new} and is a robust and accurate method to treat both linear and nonlinear waves in complex flow \cite{niu2016computations,niu2019development}. The AUSM schemes split the inviscid numerical flux into convective and acoustic parts and adaptively reconstruct the flux according to the local Mach number. One sequel of AUSM-type Riemann is SLAU (Simple Low-dissipation AUSM) \cite{shima2011parameter,kitamura2013towards} scheme. The SLAU scheme is formulated to reduce the numerical dissipation in low Mach region and is free from reference parameters. First of all, Eq.~\ref{eq:invEquation} is solved by the finite volume method and written into the following semi-discrete form
\begin{equation}
   \Omega_i \frac{d \overline{\mathbf{U}}_i}{d t}+\sum_j \tilde{\mathbf{F}}_{inv,ij} \Gamma_{ij}=0,
\end{equation}
where $\Omega_i$ is the volume of cell $i$, $\overline{\mathbf{U}}_i$ are cell averaged solution variables, $\tilde{\mathbf{F}}_{inv,ij}$ is the discretized inviscid flux at cell face between cell $i$ and $j$, and $\Gamma_{ij}$ is the size of cell face between cell $i$ and $j$. According to the idea of SLAU, the inviscid numerical flux $\tilde{\mathbf{F}}_{inv}$ for the multi-component system can be split into 
\begin{eqnarray}
\tilde{\mathbf{F}}_{inv} & = & \frac{\dot m+|\dot m|}{2}\mathbf{\Psi}^{L}+\frac{\dot m-|\dot m|}{2}\mathbf{\Psi}^{R}+\mathcal{P}\mathbf{N}, \label{eq:slau_flux}
\end{eqnarray}
with vectors defined as
\begin{eqnarray*}
	\mathbf{\Psi}=\left[Y_1,...Y_k,\,\mathbf{u},\,H\right]^{\mathrm{T}} &, & \mathbf{N}=\left[0,...0,\,\mathbf{n},\,0\right]^{\mathrm{T}},
\end{eqnarray*}
where superscripts $L/R$ denote the left and right states of physical fields at the
cell-interface and $H$ stands for total mixture enthalpy $H=\left.\left(\rho E+p\right)\right/\rho$. $\mathbf{n}$ is the normal unit vector to the cell boundary. The mass flux $\dot m$ is 
\begin{eqnarray}
&  & \dot m=\frac{1}{2}\rho^{L}\left(V^{L}+\left|\overline{V}\right|^{L}\right)+\frac{1}{2}\rho^{R}\left(V^{R}-\left|\overline{V}\right|^{R}\right)-\frac{\chi}{\bar{c}}(p^{R}-p^{L}), \label{eq:slau_convective}
\end{eqnarray}
where $V={\bf u}\cdot \mathbf{n}$ stands for the velocity normal to cell boundary. Other quantities in  Eq.\eqref{eq:slau_convective} are computed as follows, 
\begin{eqnarray*}
	&  & \left|\overline{V}\right|^{L}=\left(1-\phi\right)\left|\overline{V}\right|+\phi\left|V^{L}\right|,\ \left|\overline{V}\right|^{R}=\left(1-\phi\right)\left|\overline{V}\right|+\phi\left|V^{R}\right|,\\
	&  & \left|\overline{V}\right|=\frac{\rho^{L}\left|V^{L}\right|+\rho^{R}\left|V^{R}\right|}{\rho^{L}+\rho^{R}},\ \chi=\left(1-\widehat{M}\right)^{2},\ \bar{c}=\frac{c^{L}+c^{R}}{2}\\
	&  & \widehat{M}=\min\left(1.0,\,\frac{1}{\bar{c}}\sqrt{\frac{|\mathbf{u}^{L}|^{2}+|\mathbf{u}^{R}|^{2}}{2}}\right),\ M^{L/R}=\frac{V^{L/R}}{\bar{c}},\\
	&  & \phi=-\max\left[\min\left(M^{L},0\right),-1\right]\cdot\min\left[\max\left(M^{R},0\right),-1\right].
\end{eqnarray*}
The pressure flux $\mathcal{P}$ is computed by the improved version proposed in \cite{kitamura2013towards} as 
\begin{eqnarray}
&  & \mathcal{P}=\frac{p^{L}+p^{R}}{2}+\frac{f_{p}^{L}-f_{p}^{R}}{2}\left(p^{L}-p^{R}\right)+\sqrt{\frac{|\mathbf{u}^{L}|^{2}+|\mathbf{u}^{R}|^{2}}{2}}\left(f_{p}^{L}+f_{p}^{R}-1\right)\bar{\rho}\bar{c}, \label{eq:slau_acoustic}
\end{eqnarray}
with
\begin{eqnarray*}
	&  & f_{p}^{L/R}=\left\{ \begin{array}{ll}
		\frac{1}{2}\left(1\pm\mathrm{sign}(M^{L/R})\right), & \mathrm{if}~\left|M^{L/R}\right|\geqslant1\\
		\frac{1}{4}\left(M^{L/R}\pm1\right)^{2}(2\mp M^{L/R}), & \mathrm{otherwise}
	\end{array}\right.,\ \bar{\rho}=\frac{\rho^{L}+\rho^{R}}{2}.
\end{eqnarray*}
The above improved method to calculate the pressure flux is denoted as SLAU2. These formulations show that the numerical flux calculated by SLAU is adaptive to the local Mach number. 

Although the AUSM family schemes such as the above SLAU2 improve the accuracy for the low Mach number flow, a preconditioned matrix is still required to simulate incompressible flows. As demonstrated by the work \cite{chen2022anti}, the Godunov-type scheme contains the deteriorated-accuracy pressure source term $p_d$ which is directly proportional to the density, sound speed and velocity difference:
\begin{equation}
    p_d \propto - \rho c \Delta U.
\end{equation}
Following the method proposed in \cite{chen2022anti}, the quantity of the SLAU2 scheme with the identical physical scale as $p_d$ is
\begin{equation}
    \sqrt{\frac{|\mathbf{u}^{L}|^{2}+|\mathbf{u}^{R}|^{2}}{2}}\left(f_{p}^{L}+f_{p}^{R}-1\right)\bar{\rho}\bar{c}\approx-\frac{3}{4}\sqrt{\frac{|\mathbf{u}^{L}|^{2}+|\mathbf{u}^{R}|^{2}}{2}}\frac{\rho^{L}+\rho^{R}}{2} (|\mathbf{U}^R|-|\mathbf{U}^L|) \bar{c} \propto - \rho c \Delta U.
\end{equation}
Thus, the SLAU2 scheme still suffers from the deteriorated accuracy source term as other Godunov-type schemes. To overcome the defect of pressure flux of the Godunov-type scheme, we propose a novel hybrid approach for all-Mach number flow simulations.

\subsection{Hybrid approach for all-Mach flows. First step: updating advection flux}
The proposed hybrid approach uses the Godunov-type SLAU2 scheme to calculate the advection flux. The advection flux $\tilde{\mathbf{A}}$ is calculated as
\begin{eqnarray}
\tilde{\mathbf{A}} & = & \frac{\dot m+|\dot m|}{2}\mathbf{\Psi}^{L}+\frac{\dot m-|\dot m|}{2}\mathbf{\Psi}^{R}. \label{eq:advection_flux}
\end{eqnarray}
For the multiphase system, $\dot m$ is modified as
\begin{eqnarray}
&  & \dot m=\frac{1}{2}\rho^{L}\left(V^{L}+\left|\overline{V}\right|^{L}\right)+\frac{1}{2}\rho^{R}\left(V^{R}-\left|\overline{V}\right|^{R}\right)-\theta\frac{\chi}{\bar{c}}(p^{R}-p^{L}), \label{eq:modifiedMassFlux}
\end{eqnarray}
where $\theta$ is the material interface indicator and is calculated as
\begin{eqnarray}
	&  & \theta=\left\{ \begin{array}{ll}
		0, & \mathrm{if}~0.01<\alpha_1<0.99~\mathrm{and}~|\mathbf{f_s}|>0,\\
		1, & \mathrm{otherwise}
	\end{array}\right..
\end{eqnarray}
Here, $\alpha_1$ is the volume fraction of the liquid phase. In Eq.~\ref{eq:modifiedMassFlux}, the last term $\frac{\chi}{\bar{c}}(p^{R}-p^{L})$ is the pressure-velocity coupling term for mass flux at low Mach number. When Ma$>1$, $\chi=0$ and the pressure-velocity coupling term for mass flux vanishes. When the surface tension effect exists, $\Delta p \neq 0$ across the material interface. To prevent spurious velocities caused by surface tension forces in the advection flux, this work modifies $\dot m$ such that the pressure-velocity coupling term vanishes at material interfaces where surface tension is present. 

With advection flux $\tilde{\mathbf{A}}$, the solution variable $\overline{\mathbf{U}}_i^n$ at time level $n$ are updated with the second-order SSP-RK (strong-stability-preserving Runge Kutta) time scheme \cite{gottlieb2009high} as
\begin{equation}
\begin{split}
   \overline{\mathbf{U}}_i^{\ast}&=\overline{\mathbf{U}}_i^{n}+\Delta t \mathbf{Res}(\overline{\mathbf{U}}_i^n), \\
   \overline{\mathbf{U}}_i^{\ast\ast}&=\frac{1}{2}\biggl(\overline{\mathbf{U}}_i^{n}+\overline{\mathbf{U}}_i^{\ast}+\Delta t \mathbf{Res}(\overline{\mathbf{U}}_i^{\ast})\biggl),
\end{split}  \label{eq:rk2} 
\end{equation}
where $\overline{\mathbf{U}}_i^{\ast\ast}$ are the updated intermediate solution using advection flux and the 2nd order SSP-RK scheme. $\mathbf{Res}(\overline{\mathbf{U}}_i)$ is the residual function and $\mathbf{Res}(\overline{\mathbf{U}}_i)=-\frac{\sum_j \tilde{\mathbf{A}}(\overline{\mathbf{U}}_i) \Gamma_{ij}}{\Omega_i}$ here. Through Eqs.~\ref{eq:rk2}, we obtain the intermediate solution as $(\rho Y_k)^{\ast\ast}$, $(\rho \mathbf{u})^{\ast\ast}$ and $(\rho E)^{\ast\ast}$. Since the pressure term does not affect the mass conservation equations, updating the advection flux leads to the solution at time level $n+1$ for mass as $(\rho Y_k)^{n+1}=(\rho Y_k)^{\ast \ast}$ and $\rho^{n+1}=\sum_{k=1}^N (\rho Y_k)^{n+1}$.

\subsection{Hybrid approach for all-Mach flows. Second step: updating advection flux}
In order to update the intermediate momentum and energy to time level $n+1$, we solve the following rest of the control equation system 
\begin{equation} \label{eq:interMomentum}
     \frac{(\rho \mathbf{u})^{n+1}-(\rho \mathbf{u})^{\ast \ast}}{\Delta t} =\mathbf{\nabla} p^{n+1},
\end{equation}
and
\begin{equation} \label{eq:interEnergy}
  \frac{(\rho E)^{n+1}-(\rho E)^{\ast \ast}}{\Delta t} =\mathbf{\nabla} \cdot (p \mathbf{u})^{n+1}.   
\end{equation}
%\begin{equation}
%\begin{cases}
%  \begin{aligned}
% &\frac{(\rho \mathbf{u})^{n+1}-(\rho \mathbf{u})^{\ast}}{\Delta t} =\mathbf{\nabla} p^{n+1},\\
% & \frac{(\rho E)^{n+1}-(\rho E)^{\ast}}{\Delta t} =\mathbf{\nabla} \cdot (p \mathbf{u})^{n+1}. 
%  \end{aligned}
%  \end{cases}
%   \label{eq:finalStep}
%\end{equation}
As pointed out in Section \ref{sec:slau}, the pressure term in Godunov-type SLAU schemes deteriorates the accuracy for low-Mach number flow. Therefore, this work updates the pressure term by solving the pressure equation which is derived as follows. First, we adopt the standard incompressible flow formulation, which applies the momentum equation to derive the implicit equation for pressure. Thus, we divide Eq.~\ref{eq:interMomentum} and take its divergence to obtain:
\begin{equation} \label{eq:divergenceU}
    \nabla \cdot \mathbf{u}^{n+1}= \nabla \cdot \mathbf{u}^{\ast \ast}-\Delta t \nabla \cdot \biggl(\frac{\nabla p ^{n+1}}{\rho^{n+1}}\biggl), 
\end{equation}
where $\mathbf{u}^{\ast \ast}=(\rho \mathbf{u})^{\ast \ast}/\rho^{n+1}$ is the provisional velocity which excludes the pressure effect. Considering compressible flow and the homogeneous single-fluid model assumption, we apply the following evolution equation for mixture pressure which is derived in \cite{fedkiw2002general} as
\begin{equation} \label{eq:pressureEvolution}
    \frac{d p}{d t}+\mathbf{u} \cdot \nabla p=-\rho c^2 \nabla \cdot \mathbf{u}.
\end{equation}
By combining Eq.~\ref{eq:divergenceU} and Eq.~\ref{eq:pressureEvolution}, and assuming that $\nabla \cdot \mathbf{u}$ is fixed through the time step, we derive
\begin{equation}
    \frac{1}{\rho c^2}\frac{d p}{d t}+\frac{1}{\rho c^2} \mathbf{u} \cdot \nabla p=-\nabla \cdot \mathbf{u}^{\ast \ast}+\Delta t \nabla \cdot \biggl(\frac{\nabla p ^{n+1}}{\rho^{n+1}}\biggl).
\end{equation}
We approximate the above time derivative with the forward Euler time scheme. To simplify the above equation, we also apply the advection equation for pressure, which reads 
\begin{equation}\label{eq:advectionPressure}
    p^{\ast \ast}=p^{n}-(\mathbf{u}^n \cdot \nabla p^n) \Delta t.
\end{equation}
Finally, we can obtain the following pressure Helmholtz equation as 
\begin{equation}\label{eq:pressureEquation}
    \biggl(-\frac{1}{\rho^{n+1} (c^{\ast \ast})^2} \frac{1}{\Delta t}+\mathbf{\nabla} \cdot \Bigl( \frac{\Delta t}{\rho^{n+1}} \mathbf{\nabla} \Bigl) \biggl) p^{n+1}=-\frac{1}{\rho^{n+1}(c^{\ast \ast})^2} \frac{p^{\ast \ast}}{\Delta t} +\mathbf{\nabla} \cdot \mathbf{u}^{\ast \ast},
\end{equation}
where $p^{\ast \ast}$ is the intermediate pressure and $c^{\ast \ast}$ is the intermediate sound speed. It is noteworthy that in order to obtain the correct solution for high-Mach number flow, intermediate $p^{\ast \ast}$ should be calculated by using the conservative form, i.e., by applying EOS after updating advection flux in Eq.~\ref{eq:rk2}. The advection equation Eq.~\ref{eq:advectionPressure} is only used here for deriving the pressure Helmholtz equation. It is also noted here that after applying the finite volume method to the pressure Helmholtz equation, the last term in the right-hand side of Eq.~\ref{eq:pressureEquation}  becomes
\begin{equation}
    \int_{\Omega_i} \nabla\mathbf{u}^{\ast \ast}=\sum_j u^{\ast \ast}_{ij} \Gamma_{ij}.  
\end{equation}
In order to ensure that $u^{\ast \ast}_{ij}$ is calculated in the spirit of all-Mach number flows, upwinding should be applied for high-speed flows while velocity-pressure coupling should be considered for low-Mach number flows. Thus, we apply the SLAU Riemann solver here again to calculate discretized interface velocity $u^{\ast \ast}_{ij}$ as
\begin{equation} \label{eq:interfaceVelocity}
    u^{\ast \ast}_{ij}=\frac{\dot m^{\ast \ast}}{\rho_{ij}^{n+1}},
\end{equation}
where intermediate mass flux $\dot m^{\ast \ast}$ is obtained following Eq.~\ref{eq:modifiedMassFlux} and $\rho_{ij}^{n+1}$is approximated by $(\rho^{L}+\rho^{R})/2$ at time level $n+1$. The detailed algorithm for solving Eq.~\ref{eq:pressureEquation} using the finite volume method will be presented in the following section. After solving $p^{n+1}$ from Eq.~\ref{eq:pressureEquation}, the momentum $(\rho \mathbf{u})^{n+1}$ and energy $(\rho E)^{n+1}$ at time level $n+1$ can be obtained by using Eq.~ \ref{eq:interMomentum} and Eq.~\ref{eq:interEnergy}.  

\subsection{Properties and remarks of the proposed hybrid approach}
In this subsection, we examine the properties of the proposed hybrid approach. Some remarks are also given to clarify the proposed method. 
\begin{proposition*}
The proposed hybrid method is capable of solving the stationary contact discontinuity exactly.     
\end{proposition*}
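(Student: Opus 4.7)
The plan is to verify that a stationary contact discontinuity, characterized at each face by $\mathbf{u}^{L} = \mathbf{u}^{R} = \mathbf{0}$ and $p^{L} = p^{R} = p_{0}$ with arbitrary jumps permitted in $\rho$ and $Y_{k}$, is a fixed point of the two-step update. I would proceed in the order the algorithm executes: first the SLAU2-based advection stage, then the pressure Helmholtz solve, and finally the momentum and energy corrections.

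First, I would substitute the stationary-contact values into the mass-flux formula Eq.~\ref{eq:modifiedMassFlux}. With $V^{L} = V^{R} = 0$, every average $|\overline{V}|$, $|\overline{V}|^{L}$, and $|\overline{V}|^{R}$ vanishes, and with $p^{L} = p^{R}$ the pressure-jump correction $\theta \chi (p^{R}-p^{L})/\bar{c}$ vanishes independently of the value of the interface indicator $\theta$. Hence $\dot m = 0$ at every face. Because the advection flux in Eq.~\ref{eq:advection_flux} is a pure multiple of $\dot m$, we get $\tilde{\mathbf{A}} = \mathbf{0}$, the residual in Eq.~\ref{eq:rk2} is zero at both Runge--Kutta stages, and therefore $\overline{\mathbf{U}}^{\ast\ast} = \overline{\mathbf{U}}^{n}$. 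This immediately yields $(\rho Y_{k})^{n+1} = (\rho Y_{k})^{n}$ and $\rho^{n+1} = \rho^{n}$.

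Next, I would use the fact that the intermediate pressure $p^{\ast\ast}$ is obtained by inverting the mixture EOS built from Eq.~\ref{eq_eos_p_coeff} on the unchanged conservative variables, so $p^{\ast\ast} = p^{n} = p_{0}$ uniformly; likewise the provisional velocity is $\mathbf{u}^{\ast\ast} = (\rho \mathbf{u})^{\ast\ast}/\rho^{n+1} = \mathbf{0}$. For the pressure Helmholtz problem Eq.~\ref{eq:pressureEquation}, the discrete source $\sum_{j} u^{\ast\ast}_{ij}\Gamma_{ij}$ assembled through Eq.~\ref{eq:interfaceVelocity} vanishes because $\dot m^{\ast\ast}$ is again the same SLAU2 combination evaluated on a zero-velocity, uniform-pressure state. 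The right-hand side therefore reduces to the constant $-p^{\ast\ast}/(\rho^{n+1}(c^{\ast\ast})^{2}\Delta t)$, and the constant field $p^{n+1} = p_{0}$ is a (and, given consistent boundary data, the) solution.

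Finally, I would close the argument through Eq.~\ref{eq:interMomentum} and Eq.~\ref{eq:interEnergy}: since $p^{n+1}$ is spatially constant, $\nabla p^{n+1} \equiv \mathbf{0}$ gives $(\rho \mathbf{u})^{n+1} = (\rho \mathbf{u})^{\ast\ast} = \mathbf{0}$, and with $\mathbf{u}^{n+1} = \mathbf{0}$ the energy correction $\nabla \cdot (p \mathbf{u})^{n+1}$ vanishes, so $(\rho E)^{n+1} = (\rho E)^{\ast\ast} = (\rho E)^{n}$. The main obstacle is really the Helmholtz step: one must check that the finite-volume assembly of Eq.~\ref{eq:pressureEquation} actually reproduces a constant solution when the face velocities $u^{\ast\ast}_{ij}$ all vanish and $p^{\ast\ast}$ is spatially constant, which reduces to verifying that the chosen discretization of $\nabla \cdot \bigl(\Delta t/\rho^{n+1}\,\nabla\bigr)$ annihilates constants across a density jump. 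Everything else is a direct substitution check driven by the vanishing of $\dot m$ at the stationary contact.
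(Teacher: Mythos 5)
Your proposal is correct and follows essentially the same route as the paper's proof: show $\dot m=0$ so the advection stage leaves the state unchanged, then show the pressure Helmholtz equation returns the unperturbed constant pressure (the paper phrases this as the perturbation $\delta p$ satisfying a homogeneous equation and hence vanishing, where you verify the constant directly), and conclude via the momentum and energy corrections. Your closing observation that the discrete divergence operator must annihilate constants across a density jump is a reasonable extra point of rigour that the paper's continuous-level argument glosses over, but it does not change the structure of the proof.
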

\begin{proof}
Without compromising the generality, we assume that the flow system comprises two components and the stationary contact discontinuity is located between cell $i$ and $j$. Thus, there is $u_i^n=u_j^n=0$, $p_i^n=p_j^n$ and $(\rho Y_1)_i^n \neq (\rho Y_1)_j^n$. Following the first step of the proposed method, there is $\dot m=0$ from Eq.~\ref{eq:modifiedMassFlux} and $\tilde{\mathbf{A}}=\mathbf{0}$ from Eq.~\ref{eq:advection_flux}. Thus, from Eq.~\ref{eq:rk2}, we can conclude that the intermediate state is the same as the initial condition. Therefore, the intermediate pressure $p^{\ast \ast}=p^n$. 

To calculate the $p^{n+1}$ from the second step. We write the pressure at time level $n+1$ as $p^{n+1}=p^{n}+\delta p$. Then, the pressure Helmholtz equation Eq.~\ref{eq:pressureEquation} becomes
\begin{equation*}
        \biggl(-\frac{1}{\rho^{n+1} (c^{\ast \ast})^2} \frac{1}{\Delta t}+\mathbf{\nabla} \cdot \Bigl( \frac{\Delta t}{\rho^{n+1}} \mathbf{\nabla} \Bigl) \biggl) \biggl(p^{n}+\delta p \biggl)=-\frac{1}{\rho^{n+1}(c^{\ast \ast})^2} \frac{p^{n}}{\Delta t} +\mathbf{\nabla} \cdot \mathbf{u}^{\ast \ast}
\end{equation*}
From the definition of contact discontinuities, there is $\nabla p^n=\mathbf{0}$. We can also conclude that $\nabla \cdot \mathbf{u}^{\ast \ast}=0$ from Eq.~\ref{eq:interfaceVelocity}. Thus, the above pressure Helmholtz equation is simplified as
\begin{equation*}
    \delta p=\rho^{n+1} (c^{\ast \ast})^2 \Delta t^2 \mathbf{\nabla} \cdot \Bigl( \frac{\mathbf{\nabla} \delta p}{\rho^{n+1}} \Bigl).
\end{equation*}
By solving the above equation, we can obtain $\delta p=0$ and thus $\nabla p^{n+1}=\nabla p^{n}=\mathbf{0}$. Combining the first step of the solution procedure where $\tilde{\mathbf{A}}=\mathbf{0}$ and the second step where $\nabla p^{n+1}=\nabla p^{n}=\mathbf{0}$, we can conclude the stationary contact discontinuity is exactly solved by the proposed hybrid method.
\end{proof}
Since material interfaces can be considered as contact discontinuities when excluding the surface tension effect, the capability of solving stationary contact exactly is essential for multiphase flow simulations. To further highlight the properties of the proposed method, we give the following remarks.
\begin{description}
\item[Remark 1.]{The pressure Helmholtz equation eliminates the accuracy-deteriorated pressure term in the Godunov scheme and establishes the coupling between pressure-velocity and pressure-density. Thus it can be applied to all Mach number flow regimes. For supersonic flows, the upwind scheme is employed by the mass flux of the proposed hybrid method, which retains the original property of the SLAU scheme. For incompressible limits, the central scheme is recovered by the mass flux and the pressure Helmholtz equation becomes
\begin{equation}
    \mathbf{\nabla} \cdot \Bigl( \frac{\Delta t}{\rho^{n+1}} \mathbf{\nabla} p^{n+1}\Bigl) =\mathbf{\nabla} \cdot \mathbf{u}^{\ast \ast},
\end{equation}
which is similar to the standard projection solution procedure and satisfies the divergence-free condition. 
}
\item[Remark 2.]Unlike the approach in \cite{demou2022pressure}, the proposed solution procedure solves the conservative form, making it well-suited for flows involving shock waves. However, as the Godunov-type scheme is applied in the first step and the intermediate pressure is solved using EOS, the proposed approach requires special techniques to prevent numerical oscillations across material interfaces. The mechanism of generating oscillations by conservative schemes has been well discussed in \cite{abgrall2001computations}. 
\end{description}

\section{Modelling of multi-physical processes } \label{sec:multiphysics}
Besides the convection term which is the essential part of all-Mach number flow modelling, additional physical processes including viscosity, heat conduction, gravity force, surface tension and phase transition should also be modelled. The modelling strategies for these additional physical processes are introduced in this section. 

\subsection{Surface tension model} \label{sec:surfaceTension}
We follow the continuum surface force (CSF) model proposed by \cite{brackbill1992continuum} to model the surface tension as
\begin{equation}
    \mathbf{f_s}=\sigma \kappa \nabla \alpha,
\end{equation}
where $\sigma$ is the surface tension coefficient and $\kappa$ is the curvature calculated as
\begin{equation}
    \kappa= - \nabla \cdot \mathbf{\hat{n}}=-\nabla \cdot \Bigl( \frac{\nabla \alpha}{|\nabla \alpha|} \Bigl),
\end{equation}
where $\mathbf{\hat{n}}$ is the unit normal vector at the interface. In order to calculate the interface normal vectors more accurately, this work adopts the smoothed CSF model proposed in \cite{raeini2012modelling}. The smoothed volume fraction field used for calculating the interface normal vectors is determined by interpolating it between cell centers and face centers recursively as 
\begin{equation}
    \alpha_{s,l+1}=0.5 \langle \langle \alpha_{s,l} \rangle_{c\rightarrow f} \rangle_{f\rightarrow c}+0.5 \alpha_{s,l}, ~~\alpha_{s,0}=\alpha,
\end{equation}
where $\alpha_{s,l}$ represents the smoothed volume fraction field after $l$th iteration. The operators $\langle \rangle_{c\rightarrow f}$ and  $\langle \rangle_{f\rightarrow c}$ represent interpolation processes, where $\langle \rangle_{c\rightarrow f}$ interpolates a field from cell centers to face centers, and $\langle \rangle_{f\rightarrow c}$ interpolates a field from face centers to cell centers. In our simulation, we set $l=2$. After obtaining smoothed volume fraction $\alpha_{s,l+1}$, the unit normal vector at the interface is calculated as
\begin{equation}
    \mathbf{\hat{n}}=\frac{\nabla \alpha_{s,l+1}}{|\nabla \alpha_{s,l+1}|}. 
\end{equation}
It is noteworthy that other smoothing techniques such as the Gaussian filter can also be employed to obtain the smoothed volume fraction field.

\subsection{Phase transition model} \label{sec:phaseTransition}
We only consider the phase transition between the liquid denoted by phase 1 and its vapour denoted by phase 2 here. The source terms modelling the phase transition can be written as 
\begin{equation}
\begin{cases}
  \begin{aligned}
	&\dot\omega_{1} =  \rho \tau (g_2 -g_1), \\
	&\dot\omega_{2} = - \rho \tau (g_2 -g_1), 
  \end{aligned}
  \end{cases}
   \label{eq:PDE_phase_transition}
\end{equation}
where $g_k$ represents Gibbs free energy of phase $k$ and $\tau$ is the rate at which thermodynamic equilibrium is reached and is usually a function of the specific interfacial area, temperature and pressure. Instead of solving Eq.~\eqref{eq:PDE_phase_transition}, the present work employs a fast phase transition relaxation model proposed in \cite{chiapolino2017simple,chiapolino2017simple2} which assumes $\tau$ is very large and relaxation to thermodynamic equilibrium is immediate. 

During the phase transition, the mass fractions of the liquid and vapor phases vary, while the pressure and temperature adjust to their equilibrium values. However, the mixture specific volume $v$, mixture energy $e$ and mass fractions for non-condensable gases $Y_{k\geq3}$ remain constant. In the fast phase transition relaxation process, the fractional step method is used to achieve the equilibrium state $(Y_1^{\lozenge},Y_2^{\lozenge},p^{\lozenge},T^{\lozenge})$ where the vapour partial pressure $p_{partial}$ is equal to the saturation pressure $p_{sat}$. Thus, if the phase transition occurs, the following system is solved
\begin{equation}
\begin{cases}
p_{partial}=x^{\lozenge}_vp^{\lozenge}=p_{sat}(T^{\lozenge}), \\
v = Y^{\lozenge}_{1} v_1(T^{\lozenge}, p^{\lozenge}) + Y^{\lozenge}_{2} v_2(T^{\lozenge}, p^{\lozenge}) + \sum^{N}_{k=3} Y_{k} v_k(T^{\lozenge}, p^{\lozenge}),\\
e = Y^{\lozenge}_{1} e_1(T^{\lozenge}, p^{\lozenge}) + Y^{\lozenge}_{2} e_2(T^{\lozenge}, p^{\lozenge}) + \sum^{N}_{k=3} Y_{k} e_k(T^{\lozenge}, p^{\lozenge}),\\
\end{cases}
\label{eq:Phase_transition_thermoClosure_system}
\end{equation}
where vapour molar fraction $x^{\lozenge}_v$ is defined as
\begin{equation}
x^{\lozenge}_v=\frac{Y^{\lozenge}_{2}/{W}_{2}}{Y^{\lozenge}_{2}/{W_{2}}+\sum^{N}_{k=3}{Y_{k}/{W_{k}}}}.
\label{eq:eq_titrevap}
\end{equation}
Here, $W_k$ is the molecular weight of gas $k$.
Then the solution $Y_2^{\lozenge}$ can be approximated by solving the constraint conditions in Eq.~\ref{eq:Phase_transition_thermoClosure_system} respectively as
 \begin{equation}
 \begin{cases}
Y^{sat}_{2}(p,T)=\frac{p_{sat}(T) W_2}{p-p_{sat}(T)}\sum_{k\geq 3}Y_k/W_k, \\
Y^m_2(p,T) = \frac{v-v_g(p,T)}{v_2(p,T)-v_1(p,T)}, \\
Y^e_2(p,T) = \frac{e-e_g(p,T)}{e_2(p,T)-e_1(p,T)},
\end{cases}
 \label{eq:y2appro}
\end{equation}
where 
\begin{equation*}
v_g(p,T) = \left(1- \sum^N_{k=3} Y_k \right) v_2(p,T)  + \sum^N_{k=3} Y_k v_k(p,T),~~\text{and}~~e_g(p,T) = \left(1- \sum^N_{k=3} Y_k \right) e_2(p,T)  + \sum^N_{k=3} Y_k e_k(p,T). 
\end{equation*}
The solution of Eq.~\ref{eq:y2appro} is to find conditions satisfying  $Y^m_2(p^{\lozenge},T^{\lozenge})=Y^e_2(p^{\lozenge},T^{\lozenge})=Y^{sat}_{2}(p^{\lozenge},T^{\lozenge})$ which is not trivial to solve. The relaxation solver approximates the solution in the following way
\begin{itemize}
\item $Y^m =Y^m_2(p,T)$ and $Y^e=Y^e_2(p,T)$ are evaluated with the initial values of $(p, x_v)$ and $T=T_{sat}(x_vp)$ which are before phase transition relaxation process, 
\item $Y^{sat}=Y^{sat}_{2}(p,T)$ is evaluated at the initial $(p,T)$.
\end{itemize}
Then the variations of mass fraction during the phase transition are calculated as 
\begin{equation}
\begin{cases}
z_1 = \left(Y^{m}-Y_2\right) \left(Y^{e}-Y_2\right),\\
z_2 = \left(Y^{m}-Y_2\right) \left(Y^{sat}-Y_2\right),
\end{cases}
\label{eq_prodY}
\end{equation}
where $Y_2$ is the initial mass fraction before the relaxation phase transition. The idea of the selection process is to select the value with the smallest variation. The selection process states:
\begin{itemize}
\item If $z_1<0$, or $z_2<0$, no mass transfer happens:  $Y_2^{\lozenge}=Y_2$.
\item Otherwise, the one which produces the minimum variation of mass transfer is selected.
\end{itemize}
Under the second condition, $Y^{\lozenge}_2$ is calculated as:
\begin{equation}
Y^{\lozenge}_2 =  Y_2 + \text{sgn}\left[ Y^{m}-Y_2\right] \times \text{Min} \left[ | Y^m-Y_2 |, | Y^{e}-Y_2 |,  | Y^{sat}-Y_2 |  \right].
\label{eq_choixY}
\end{equation}
After $Y^{\lozenge}_2$ is determined, the other variables $Y^{\lozenge}_1, p^{\lozenge}, T^{\lozenge}$ at phase equilibrium state can be updated.  It is noteworthy that in this work the saturation property $p_{sat}(T)$ is obtained through a fitted Antoine equation 
 \begin{equation}
    p_{sat}(T)=10^{A-\frac{B}{C+T}},
\label{eq_Antoine}
\end{equation}   
with the $A,B,C$ parameters reported in the NIST database. As proposed by \citet{boivin2019thermodynamic}, this approach eliminates the iterative procedure required by the original NASG formulation, reducing numerical cost while maintaining accuracy.

\subsection{Overall algorithm for modelling multi-component flows at all-Mach number}
The viscosity, heat conduction, gravity and surface tension terms in Eq.~\ref{eq_PDE} are updated with the advection flux during the first step of the solution procedure and are added to the residual function $\mathbf{Res}$ in Eq.~\ref{eq:rk2}. In this work, we add the phase transition model after the second step of the solution procedure. The overall algorithm for modelling multi-physical processes is listed in Algorithm \ref{algorithm:overall}. It is noted that we implement the proposed algorithm for all-Mach number flows into unstructured-grid-based OpenFoam which can solve flows with irregular geometry. 
%The source code is released at \cite{}. 

\begin{algorithm}
\caption{Overall algorithm for modelling multi-component flows at all-Mach number}\label{alg:cap}
\begin{algorithmic}[1]
\State Initialise primitive variables $Y_k$, $T$, $p$, $\mathbf{u}$. Calculate conservative variables $\mathbf{U}=\Bigl(\rho Y_k, \rho \mathbf{u}, \rho E\Bigl)^{\text{T}}$.
\State set $n=0$.
\While{$t <$ end time}
\State Set $n=n+1$. Set Runge Kutta sub-step $rk=0$.
    \While{$rk<2$}
    \State Calculate advection flux $\tilde{\mathbf{A}}$ using Eq.~\ref{eq:advection_flux} and Eq.~\ref{eq:modifiedMassFlux}, add to the residual function $\mathbf{Res}$.
    \State Calculate the viscosity, heat conduction, and gravity force, add to $\mathbf{Res}$.
    \State Calculate the surface tension by following Section~\ref{sec:surfaceTension}, add to $\mathbf{Res}$
    \State Update conservative variables $\mathbf{U}$ by Eq.~\ref{eq:rk2}.
    \EndWhile
\State Obtain $(\rho Y_k)^{n+1}$, $\rho^{n+1}$, and intermediate state $(\rho \mathbf{u})^{\ast \ast}$ and $(\rho E)^{\ast \ast}$.
\State Calculate the intermediate pressure $p^{\ast \ast}$ and sound speed $c^{\ast \ast}$.
\State Calculate $\nabla \cdot \mathbf{u}^{\ast \ast}$ using Eq.~\ref{eq:interfaceVelocity}.
\State Solve the pressure Helmholtz equation Eq.~\ref{eq:pressureEquation} and obtain pressure $p^{n+1}$.
\State Obtain $(\rho \mathbf{u})^{n+1}$ and $(\rho E)^{n+1}$ using Eq.~\ref{eq:interMomentum} and Eq.~\ref{eq:interEnergy}.
\If{phase transition}
    \State Calculate the phase equilibrium state $(Y^{\lozenge},p^{\lozenge},T^{\lozenge})$ by following Section~\ref{sec:phaseTransition}.
    \State Set $(Y^{n+1},p^{n+1},T^{n+1})$ using equilibrium state.
\EndIf
\EndWhile
\State End of simulation.
\end{algorithmic}
\label{algorithm:overall}  
\end{algorithm}

\section{Numerical methods}\label{sec:numericalMethods}
\subsection{Low-dissipation, scalar-structure-preserving ROUND schemes}
Reconstruction schemes are required to obtain the left and right state at cell boundaries in Eq.~\ref{eq:advection_flux}. In order to achieve a balanced accuracy for both discontinuous solutions and smooth solutions, this work employs the low-dissipation, scalar-structure-preserving ROUND schemes proposed in \cite{deng2023unified,deng2023new}. We adopted the ROUND schemes of the unstructured-grid-based version, of which source code has been released at DOI:10.5281/zenodo.8399000. In our solver, the ROUND schemes are applied to primitive variable $\phi$. To obtain the reconstructed variable $\phi_{ij}^{L}$ at the cell boundary for the left-side state, the reconstruction scheme reads
\begin{equation}
    \phi_{ij}^{L}=\phi_{i}+\frac{|\mathbf{d}_{i,ij}|}{|\mathbf{d}_{i,j}|} \psi(r_{ij}^L) \Bigl(\phi_j-\phi_i \Bigl),
\end{equation}
where $\mathbf{d}_{i,j}$ is the vector pointed from the mass center of cell $i$ to the mass center of cell $j$, and $\mathbf{d}_{i,ij}$ is the vector from cell $i$ to cell face $ij$. $\psi(r_{ij}^L)$ is the limited function which is calculated by the ROUND scheme. To simultaneously satisfy the low-dissipation, scalar-structure-preserving, and total variation diminishing (TVD) properties, a ROUND scheme is 
\begin{equation}
    \psi(r_{ij}^L) =\left\{
\begin{array}{ll}
\text{min}\biggl(\frac{(1+r_{ij}^L)^{8}(2-5r_{ij}^L)+6r_{ij}^L\bigl(\beta_1(0.5-0.5r_{ij}^L)^{4}+(1+r_{ij}^L)^{4}\Bigl)^2}{3\bigl(\beta_1(0.5-0.5r_{ij}^L)^{4}+(1+r_{ij}^L)^{4}\bigl)^2},2r_{ij}^L\biggl) & 0 < r_{ij}^L \leq 1 \\
\text{min}\biggl(\frac{6\beta_2(1-\lambda)(-1+r_{ij}^L)^{4}\bigl(\beta_2(1-r_{ij}^L)^{4}+32(1+r_{ij}^L)^{4}\bigl)+256(1+r_{ij}^L)^{8}(2+r_{ij}^L)}{3\bigl(\beta_2(1-r_{ij}^L)^{4}+16(1+r_{ij}^L)^{4}\bigl)^2},1.7\biggl) & 1 < r_{ij}^L \\
0 & r_{ij}^L \leq 0
\end{array}
\right.,
\end{equation}
where parameters $\beta_1=1100$ and $\beta_2=800$. The above ROUND scheme is denoted as ROUND\_A+ in \cite{deng2023new}. $r_{ij}^L$ is the left-side gradient ration for cell interface $ij$. Following the work \cite{jasak1996error}, $r_{ij}^{L}$ is calculated as
\begin{equation}
    r_{ij}^{L}=\frac{2 (\nabla \phi)_i \cdot \textbf{d}_{i,j}}{\phi_j-\phi_i}-1.
\end{equation} 
The reconstructed value $\phi_{ij}^{R}$ at the cell boundary for the right-side state can be obtained in a symmetrical manner. As demonstrated in the numerical tests, the ROUND\_A+ scheme produces less diffusive results across material interfaces compared to classic TVD schemes. More importantly, the ROUND scheme effectively preserves the structure of material interfaces, whereas classic TVD schemes usually distort the shape of interfaces.

\subsection{Homogeneous reconstruction strategy}
As noted in \cite{abgrall2001computations}, employing Godunov-type conservative schemes to solve the homogeneous four-equation model can result in pressure and velocity oscillations at the material interface. Therefore, special techniques, such as the non-conservative double-flux strategy \cite{jasak1996error}, should be implemented to prevent numerical oscillations. The alternative strategy proposed by \cite{williams2019fully,deng2020diffuse} is to choose the primitive variables that are thermodynamically consistent for the reconstruction process. For example, the set of primitive variables $\Bigl(\rho_i, \mathbf{u}, p, \alpha_i \Bigl)$ or $\Bigl(s_i, \mathbf{u}, p, \alpha_i \Bigl)$ is used in \cite{williams2019fully}, where $s_i$ is the specific entropy. In this work, given that the underlying assumption of a homogeneous model is that the material interfaces are at an isothermal and isobaric state, we will simply choose $\Bigl(T, \mathbf{u}, p, Y_i \Bigl)$ as the set of primitive variables by following the work \cite{deng2020diffuse}. Instead of using $\Bigl(\rho, \mathbf{u}, p, Y_i \Bigl)$, employing $\Bigl(T, \mathbf{u}, p, Y_i \Bigl)$ as the set of primitive variables for the reconstruction process is in line with the assumption of the homogenous model and can effectively prevent pressure and velocity oscillations. However, it should be noted that such a strategy cannot eliminate oscillations for the compressible multiphase flows where the non-equilibrium effect is strong. For the flow problems considered in this study, particularly those involving phase transitions where material interface cells are assumed to achieve phase equilibrium instantaneously, the homogeneous reconstruction strategy adopted here is well-suited. 

\subsection{Discretization of pressure equation}
We apply the finite volume method to solve the pressure equation in Eq.~\ref{eq:pressureEquation} as
\begin{equation}\label{eq:FVMpressureEquation}
    -\frac{p_i^{n+1}}{\rho_i^{n+1} (c^{\ast \ast})^2 \Delta t} + \frac{1}{\Omega_i} \sum_j\biggl(\Bigl( \frac{\Delta t}{\rho^{n+1}} \mathbf{\nabla} p^{n+1} \Bigl)_{ij} \cdot \mathbf{n}_{ij} \Gamma_{ij} \biggl) =-\frac{p^{\ast \ast}_i}{\rho_i^{n+1}(c^{\ast \ast})^2} +\frac{1}{\Omega_i} \sum_j u_{ij}^{\ast \ast} \Gamma_{ij}.
\end{equation}
This work discretizes the elliptic flux of the pressure variation by following the deferred-correction method proposed in \cite{xie2016multi} as 
\begin{equation}\label{eq:deferred}
    (\nabla p)_{ij} \cdot \mathbf{n}_{ij}=\frac{p_j^{l+1}-p_i^{l+1}}{|\mathbf{d}_{i,j}|}+\Bigl( (\nabla p)^{l}_{ij}-\frac{p_j^l-p_i^l}{|\mathbf{d}_{i,j}|}\mathbf{n}_{ij} \Bigl)\cdot \mathbf{n}_{ij},
\end{equation}
where superscript $l$ represents the iteration step. Combining Eq.~\ref{eq:FVMpressureEquation} and Eq.~\ref{eq:deferred}, the semi-implicit formulation results in a set of linear equation as
\begin{equation}
    -\frac{p_i^{l+1}}{\rho_{ij}^{n+1} (c^{\ast \ast})^2\Delta t}+\sum_j \frac{\Gamma_{ij} \Delta t}{\rho_{ij}^{n+1} |\mathbf{d}|_{i,j}}(p_j^{l+1}-p_i^{j+1})=-\frac{p_i^{\ast\ast}}{\rho_{ij}^{n+1} (c^{\ast \ast})^2\Delta t}+\sum_j\biggl(\Gamma_{ij}\Bigl(u_{ij}^{\ast \ast}-\frac{\Delta t}{\rho_{ij}^{n+1}} \mathbf{n}_{ij}\bigl((\nabla p)^{l}_{ij}-\frac{p_j^l-p_i^l}{|\mathbf{d}_{i,j}|}\mathbf{n}_{ij}\bigl) \Bigl) \biggl).
\end{equation}
The above linear equation system is solved iteratively until the residual is diminished below a give tolerance. Once the updated pressure field $p^{n+1}$, the pressure gradient $\nabla p^{n+1}$ can be calculated by the least-square approach following the work \cite{mahesh2004numerical}.

\section{Numerical experiments}\label{sec:numericalExperiments}
In this section, we conduct numerical experiments involving high-speed compressible and low-speed incompressible multi-component flows to validate the proposed solver's capability to handle a wide range of Mach numbers and to model multi-physics phenomena including phase transition. The thermodynamic parameters for the materials considered in the numerical experiments are provided in Table~\ref{tab:waterThermal}.   

\begin{table}[]
\centering
\caption{Thermodynamic parameters for materials used in the present numerical experiments}
\label{tab:waterThermal}
\begin{tabular}{lllllllll}
\hline
Phase & $C_v$ (J/kg/K) & $\gamma$ & $P_{\infty}$ (Pa) & $q$ (J/kg)  & $b$ ($\text{m}^3/kg$)     \\ \hline
water liquid & 3610 & 1.19 & $7.02\times10^{8}$ & -1177788 & $6.61\time 10^{-4}$   \\
water vapour & 955 & 1.47 & 0 & 2077616 & 0   \\
aerogel & 8 & 1.4 & 0 & 0 & 0 \\
air & 719 & 1.4 & 0 & 0 & 0    \\ 
helium &3110 &1.4 & 0 & 0 & 0    \\ \hline
\end{tabular}
\end{table}

\subsection{Rising bubble problems}
In this subsection, we will simulate the rising bubble problem which is a well-known benchmark test for incompressible two-phase flows \cite{hysing2009quantitative}. This problem has been extensively studied by \citet{gamet2020validation}. We used this test to assess the ability of the proposed method to handle incompressible two-phase flows and accurately capture interface topological changes in the presence of surface tension. Two test cases from \citet{hysing2009quantitative} are considered. The first involves moderate bubble deformation with density ratio $\rho_l/\rho_g \approx 10$, while the second features highly complex topological changes due to a large density ratio ($\rho_l/\rho_g \approx 1000$) and viscosity ratio.

The computational domain is $[0,1]\times[0,2]$ $\text{m}^2$. A static circular bubble with a diameter of 0.5 m is initially positioned at $(0.5, 0.5)$ in the static fluid. The initial conditions and physical parameters for both cases are provided in Table \ref{tab:riseBubblePhy}. The thermodynamic parameters for liquid water and aerogel are used in Case 1, while those for liquid water and air are applied to Case 2. Using these thermodynamic parameters, the initial density ratio is approximately $\rho_l/\rho_g \approx 10$ for Case 1 and $\rho_l/\rho_g \approx 1000$ for Case 2. By introducing the initial bubble radius $r_0$ as the reference length, we can define the dimensionless Reynolds number as $Re=\rho_l r_0 \sqrt{2|\mathbf{g}|r_0}/\mu_l$ and the E\"otv\"os number as $Eo=4\rho_l|\mathbf{g}|r_0^2/\sigma$. Then for Case 1, $Re=420$ and $Eo=10$. For Case 2, $Re=300$ and $Eo=125$.  The boundary conditions on the side walls are free-slip, while the top and bottom walls have no-slip boundary conditions. Hexahedral meshes of size $h=6.25\times 10^{-3}$ are used for the simulation. The simulation is run until $t=3.0$s. 

The interface topological changes for Case 1 at different time instants are presented in Fig.~\ref{fig:interfaceRise1}. Case 1 is characterized by a high surface tension coefficient and low density and viscosity ratios. As a result, the bubble undergoes moderate deformation, primarily driven by surface tension effects. The initial circular bubble becomes an ellipsoidal shape at $t=3.0$s, as shown in Fig.~\ref{fig:interfaceRise1}(d). The reference solutions for the shape of the interface, taken from \citet{hysing2009quantitative}, are included in Fig.~\ref{fig:interfaceRise1}(d) and marked with circle symbols. It can be seen that the current solution agrees well with the reference solution. For quantitative comparisons, the location of the bubble's center of mass and its rising velocity are presented by the solid black lines in Fig.~\ref{fig:risingBubbleFirstPlot} where the reference solutions from \cite{hysing2009quantitative} and numerical solutions from \cite{manzanero2020entropy} are also included. It can be seen that the current solutions agree well with the reference solution of \cite{hysing2009quantitative}. In contrast, the numerical solution using the entropy-stable artificial compressibility method exhibits oscillatory behaviours, which may be caused by their boundary condition implementation. Similar oscillations are also observed in the preconditioning method used in \cite{yoo2021homogeneous}. However, our proposed method produces significantly fewer oscillations, demonstrating the accuracy and robustness of the current method to solve multiphase flows in the incompressible limit.

\begin{table}[]
\centering
\caption{Initial conditions and physical parameters for the rising bubble problems}
\label{tab:riseBubblePhy}
\begin{tabular}{lllllllll}
\hline
Case & $p_0$ (Pa) & $T_0$ (K) & $\mu_l$ (Pa $\cdot$ s) & $\mu_g$ (Pa $\cdot$ s)  & $\sigma$ (N/m) & $|\mathbf{g}|~(\text{m}/s^2)$    \\ \hline
1 & 101000 & 298 & 10 & 1 & 24.5 &  0.98  \\
2 & 101000 & 298 & 10 & 0.1 & 1.96 & 0.98   \\ \hline
\end{tabular}
\end{table}

\begin{figure}
	\begin{center}
	\subfigure[t=0.5 s]{\includegraphics[width=.24\textwidth,trim={11.5cm 2.5cm 11.5cm 1.5cm},clip]{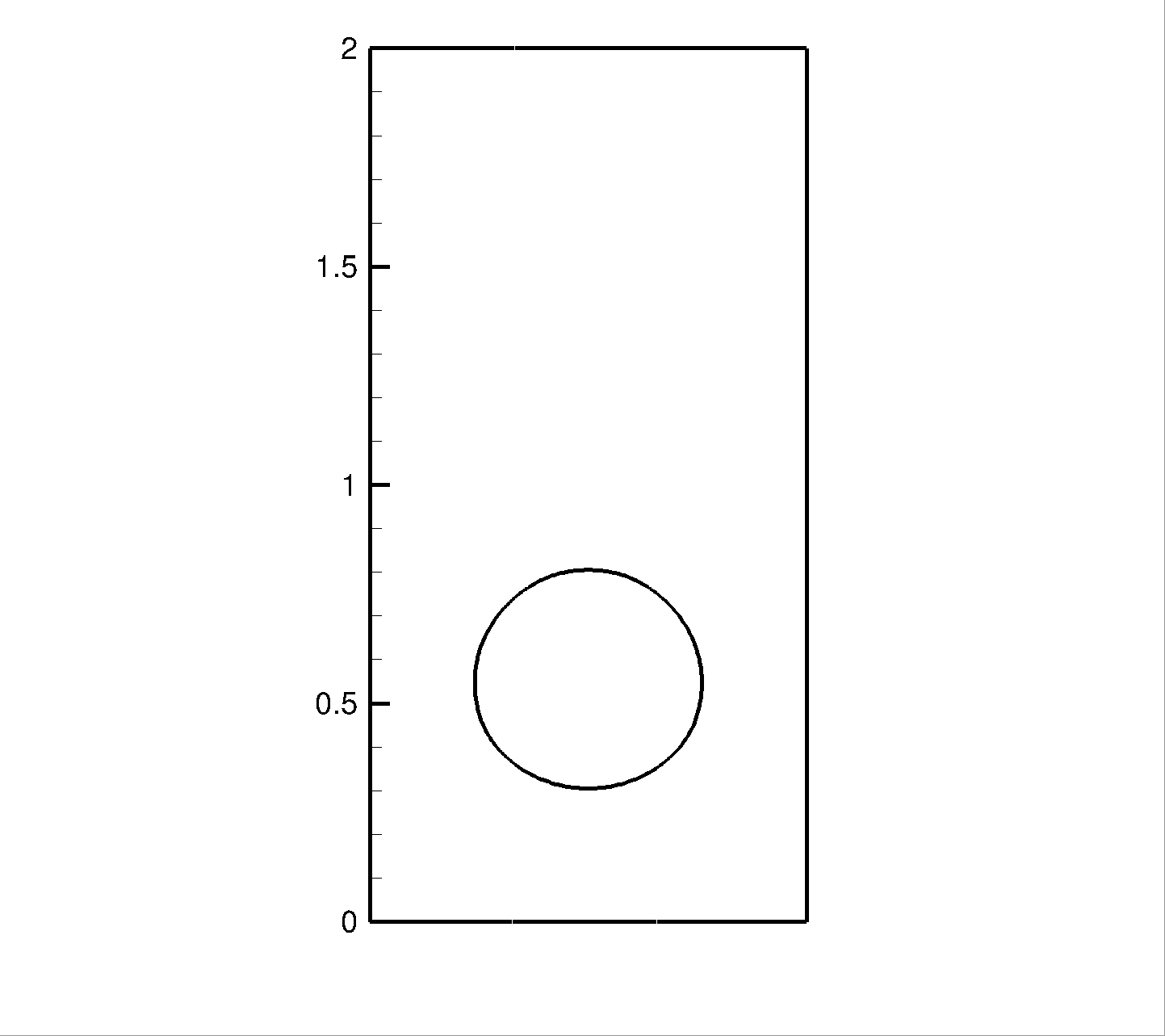}}
	\subfigure[t=1.0 s]{\includegraphics[width=.24\textwidth,trim={11.5cm 2.5cm 11.5cm 1.5cm},clip]{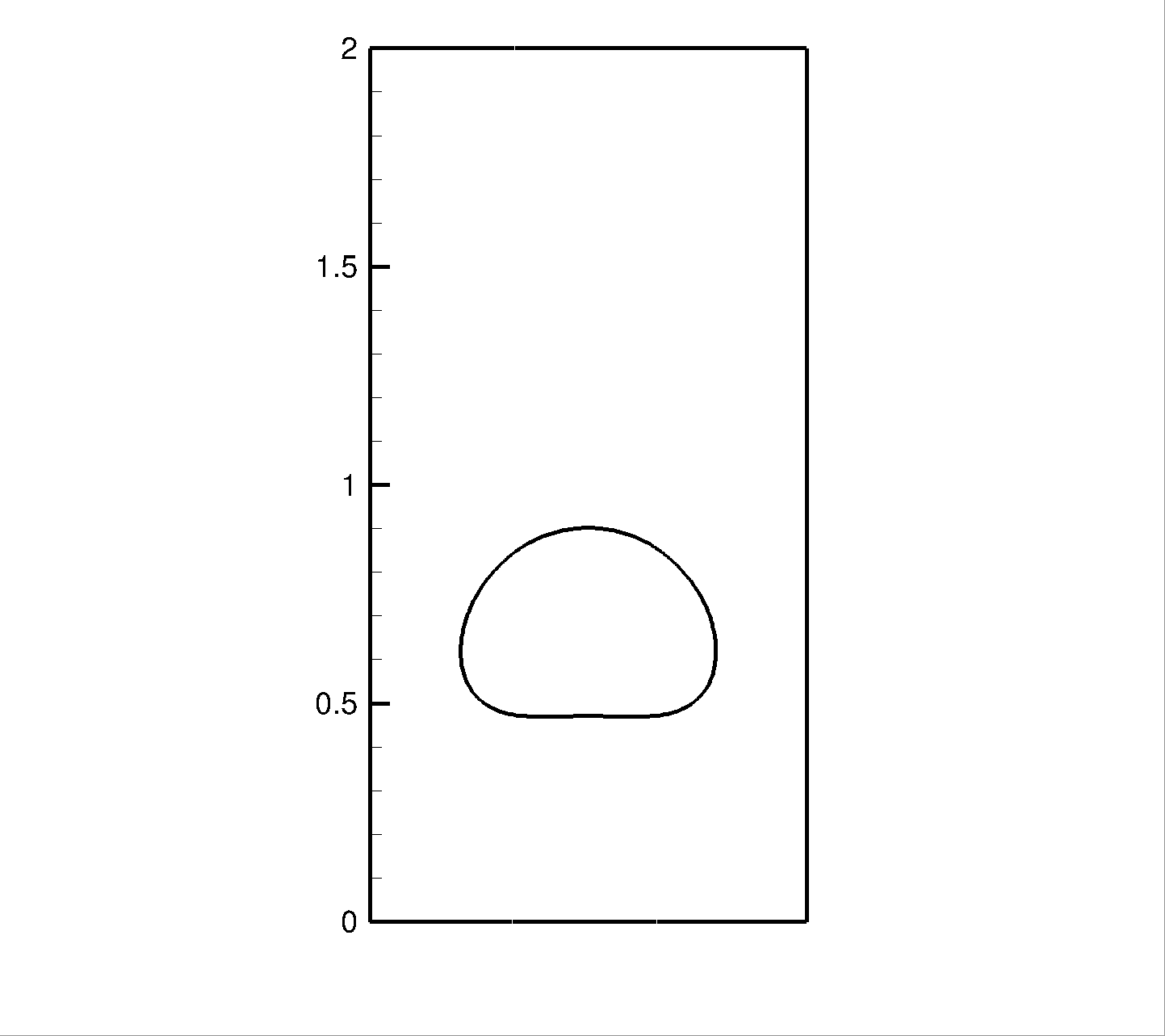}}
 	\subfigure[t=2.0 s]{\includegraphics[width=.24\textwidth,trim={11.5cm 2.5cm 11.5cm 1.5cm},clip]{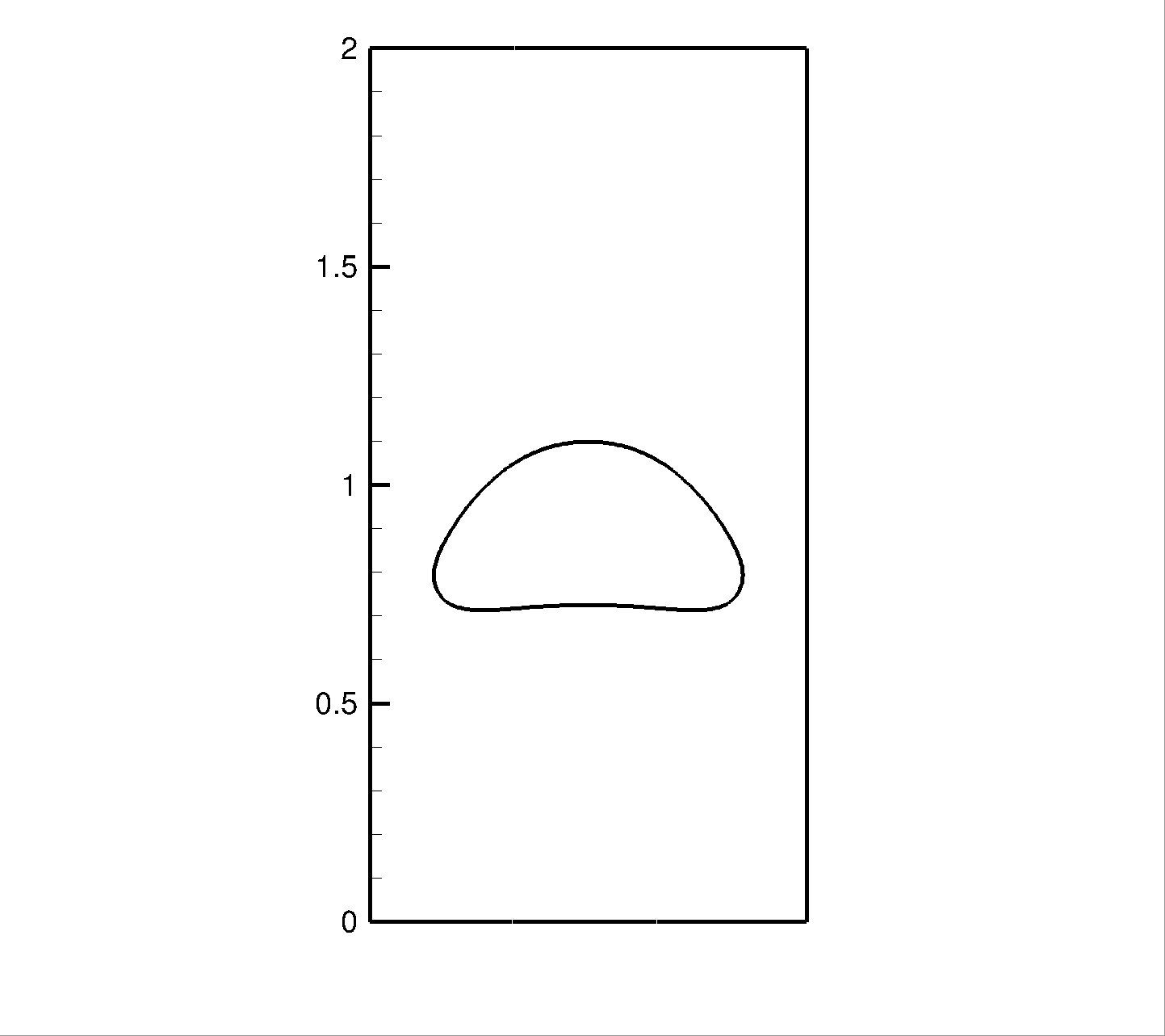}}
	\subfigure[t=3.0 s]{\includegraphics[width=.24\textwidth,trim={11.5cm 2.5cm 11.5cm 1.5cm},clip]{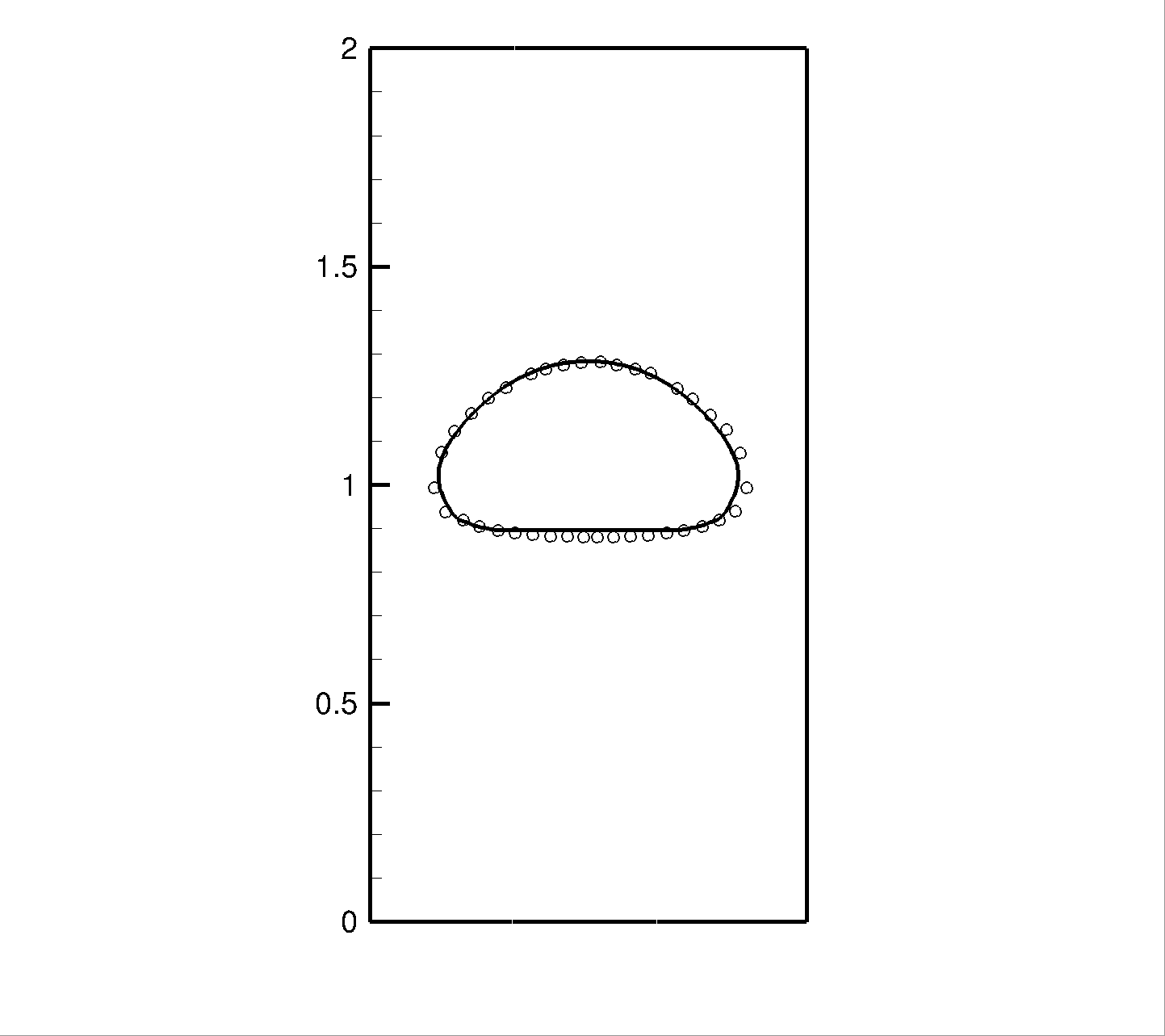}}
	\end{center}
	\protect\caption{Interface topological changes at different time instants for the case 1 rising bubble problem. The solid black lines show the interface. The circle symbols at the final time instant are the reference solution from \cite{hysing2009quantitative}.
    \label{fig:interfaceRise1}}	
\end{figure}

\begin{figure}
	\begin{center}
	\subfigure[position of mass center]{\includegraphics[width=.42\textwidth,trim={0.5cm 0.5cm 0.5cm 0.5cm},clip]{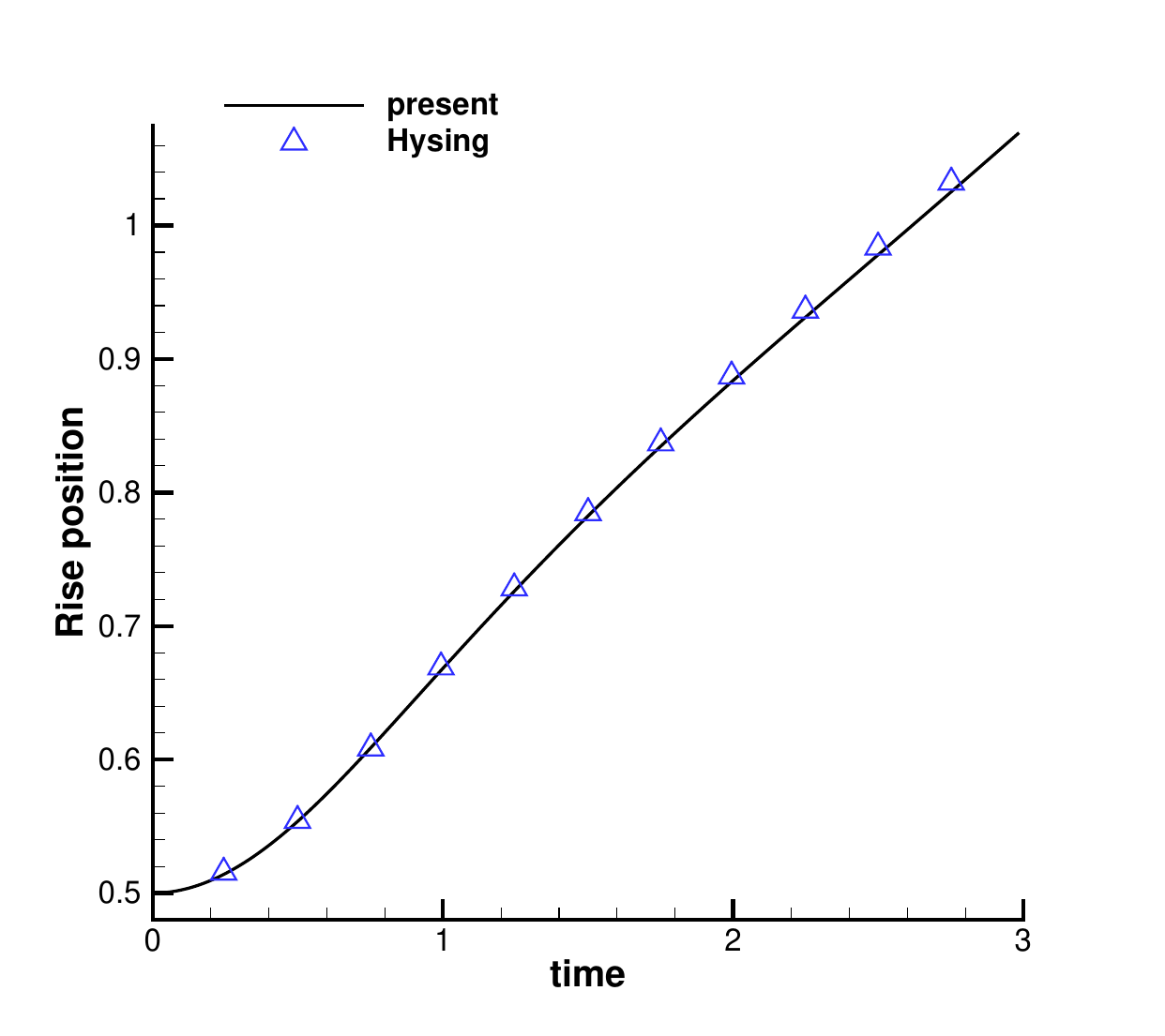}}
	\subfigure[rising velocity]{\includegraphics[width=.42\textwidth,trim={0.5cm 0.5cm 0.5cm 0.5cm},clip]{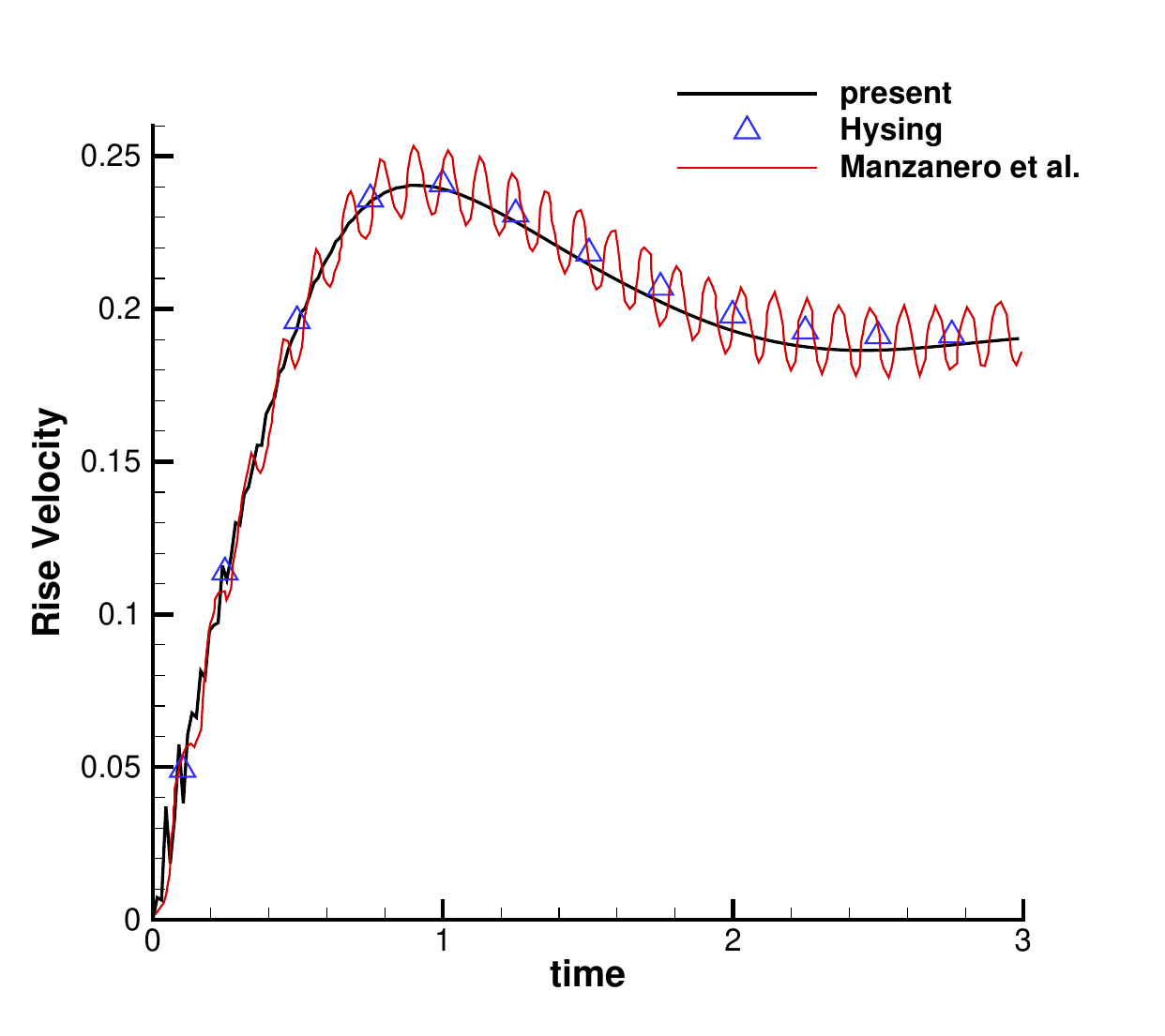}}
	\end{center}
	\protect\caption{Evolution of the centre of mass (a) and rising velocity (b) for the case 1 rising bubble problem. The present solution is plotted using the black solid lines. The reference solution from \cite{hysing2009quantitative} is represented by the blue triangle. The solution from \cite{manzanero2020entropy} is shown by the red lines.  
    \label{fig:risingBubbleFirstPlot}}	
\end{figure}

For Case 2, as shown in Fig.~\ref{fig:risingBubbleSecond}, the bubble undergoes significant deformation due to the combined effects of low surface tension and a large disparity in material properties across the interface. The interfacial topological changes calculated by the proposed method are presented in Fig.~\ref{fig:risingBubbleSecond}. At later times, the edges of the bubble tend to elongate into thin ligaments, a behaviour caused by the relatively weak surface tension forces compared to gravitational forces. For comparison, the reference solution obtained by the MULES method \cite{gamet2020validation} using an incompressible flow solver and a finer mesh size of $h=1.5625\times 10^{-3}$ is also included. It is observed that the bubble shape obtained by the current method generally agrees with the reference solution. The observed disparity is due to the much coarser mesh used in the current simulation. The evolution of the centre of mass and the increase in velocity are presented in Fig.~\ref{fig:risingBubbleSecondPlot}. It can be seen that the current solution agrees well with the reference solution calculated by MULES \cite{gamet2020validation}. Unlike the oscillatory solution obtained in \cite{manzanero2020entropy, yoo2021homogeneous}, the proposed method does not produce spurious oscillations. These results confirm that the proposed method can solve incompressible multiphase flow accurately and robustly.     

\begin{figure}
	\begin{center}
	\subfigure[t=0.7 s]{\includegraphics[width=.24\textwidth,trim={11.5cm 2.5cm 11.5cm 1.5cm},clip]{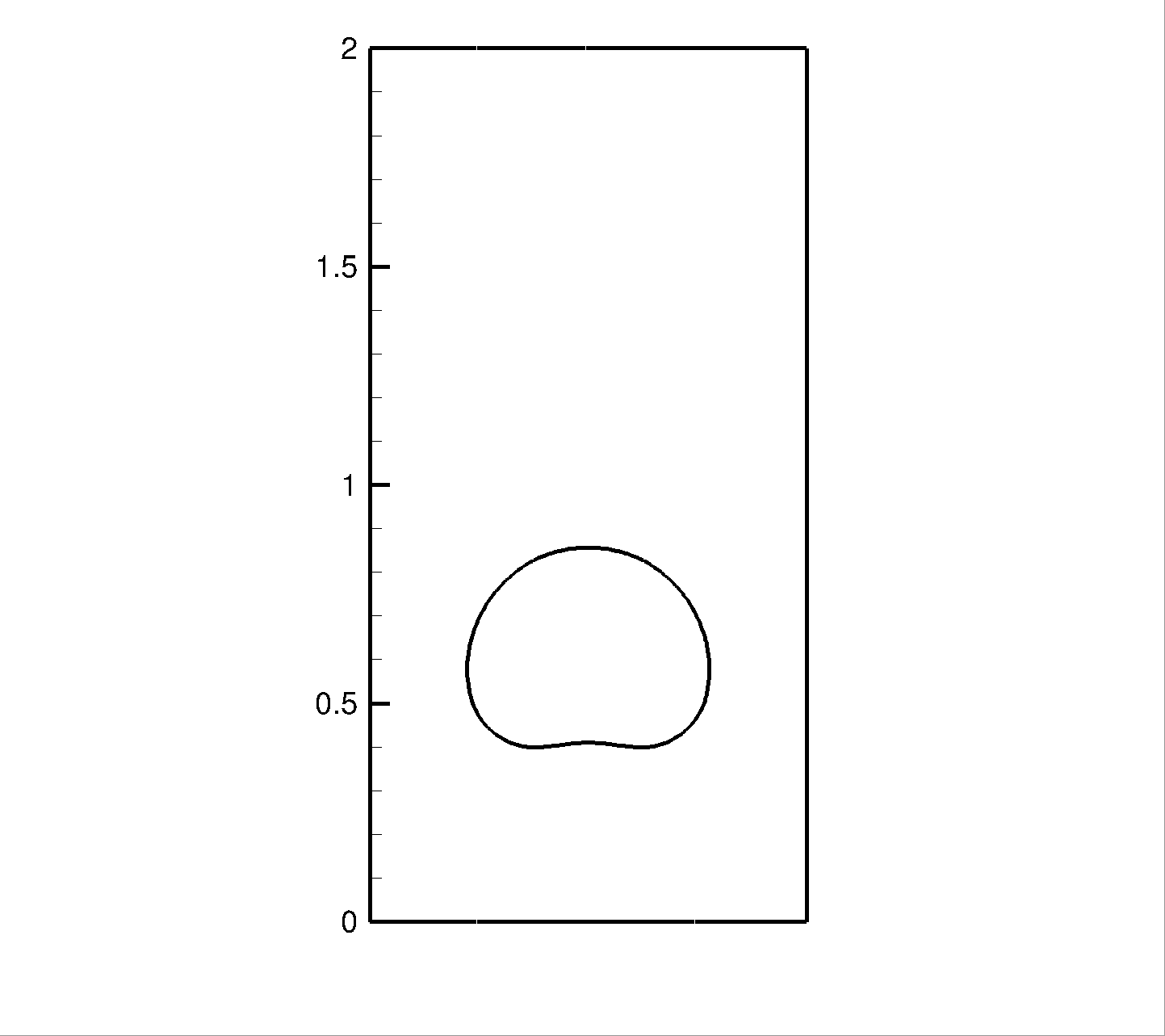}}
	\subfigure[t=1.2 s]{\includegraphics[width=.24\textwidth,trim={11.5cm 2.5cm 11.5cm 1.5cm},clip]{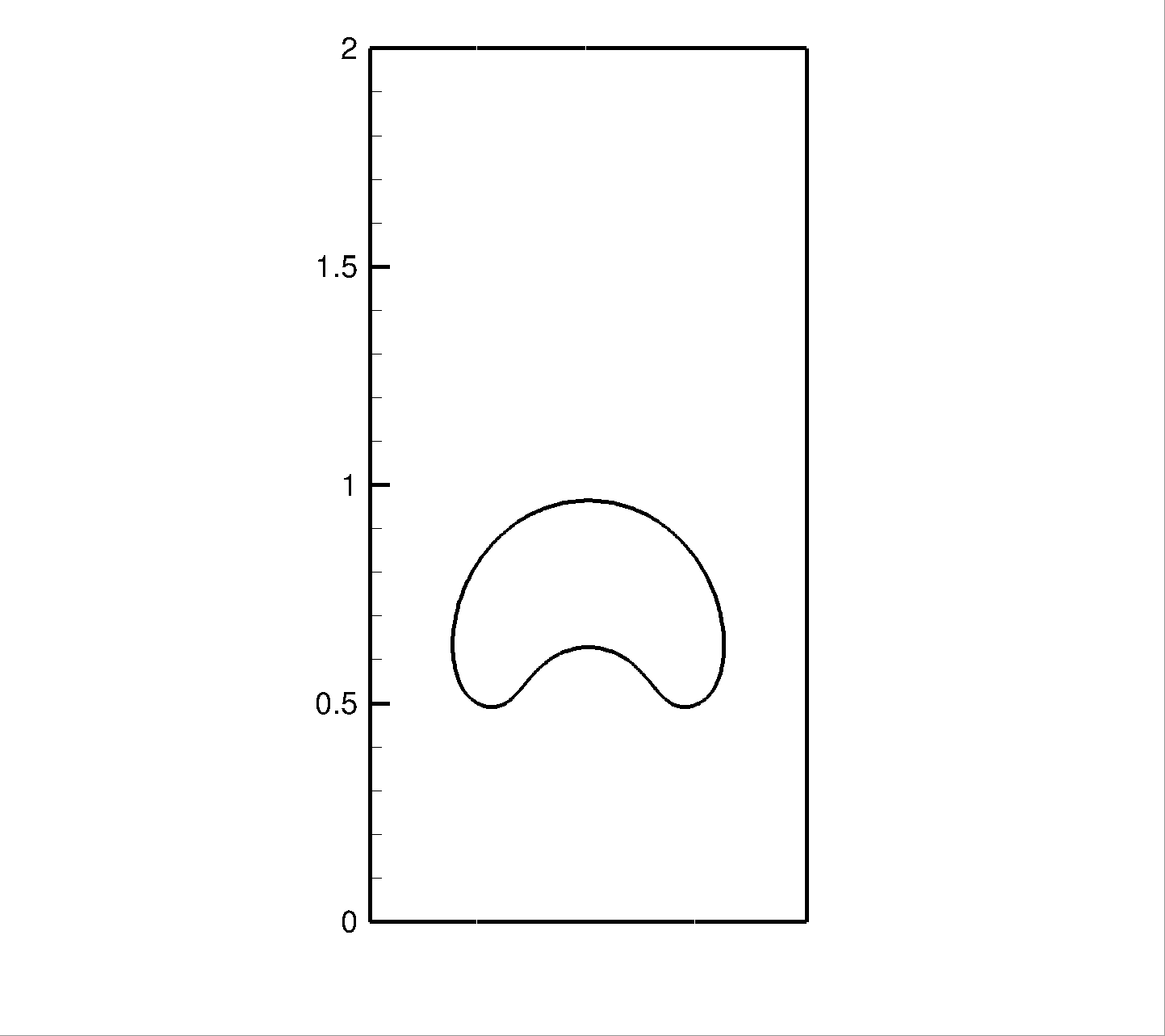}}
 	\subfigure[t=2.2 s]{\includegraphics[width=.24\textwidth,trim={11.5cm 2.5cm 11.5cm 1.5cm},clip]{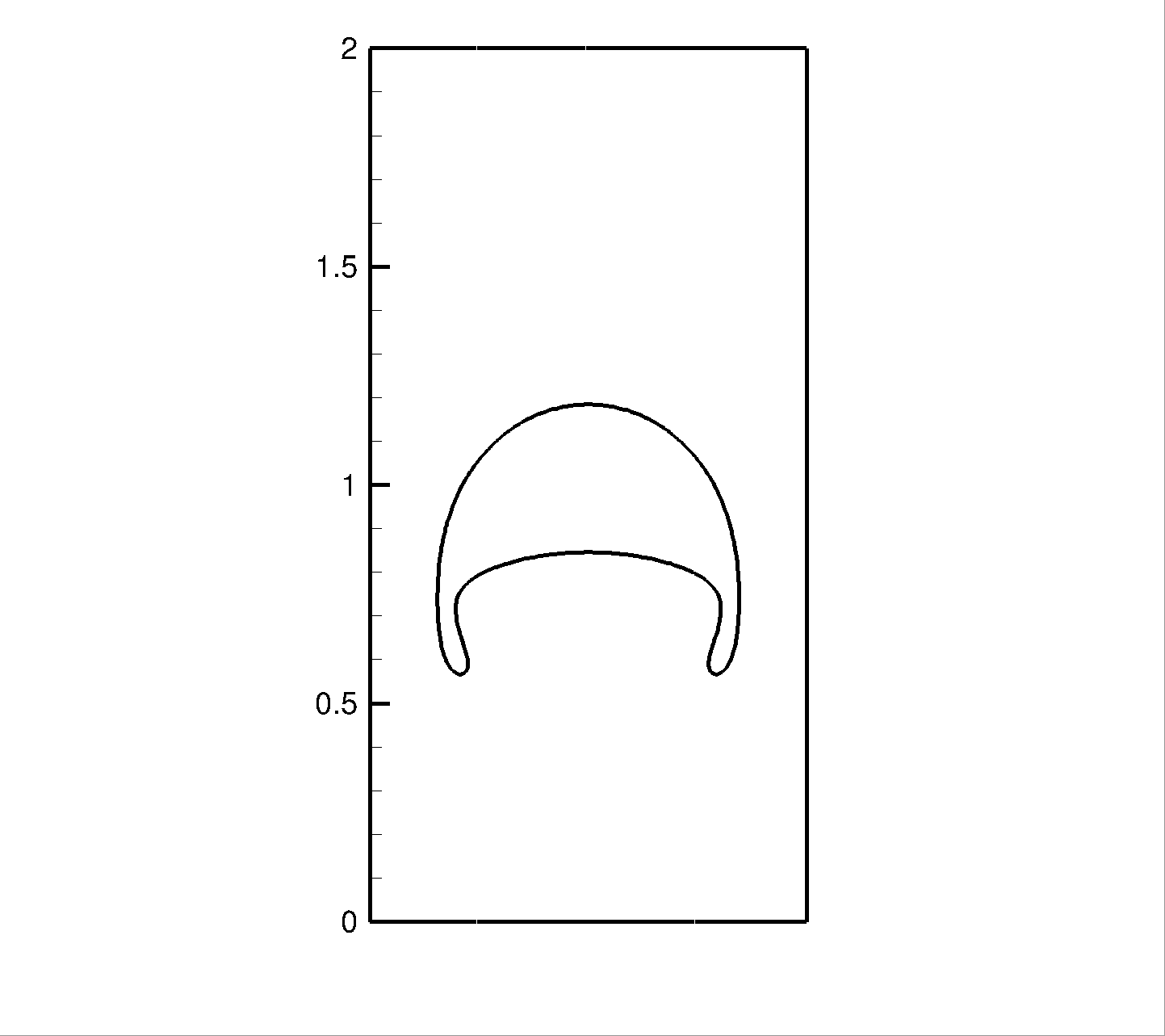}}
	\subfigure[t=3.0 s]{\includegraphics[width=.24\textwidth,trim={11.5cm 2.5cm 11.5cm 1.5cm},clip]{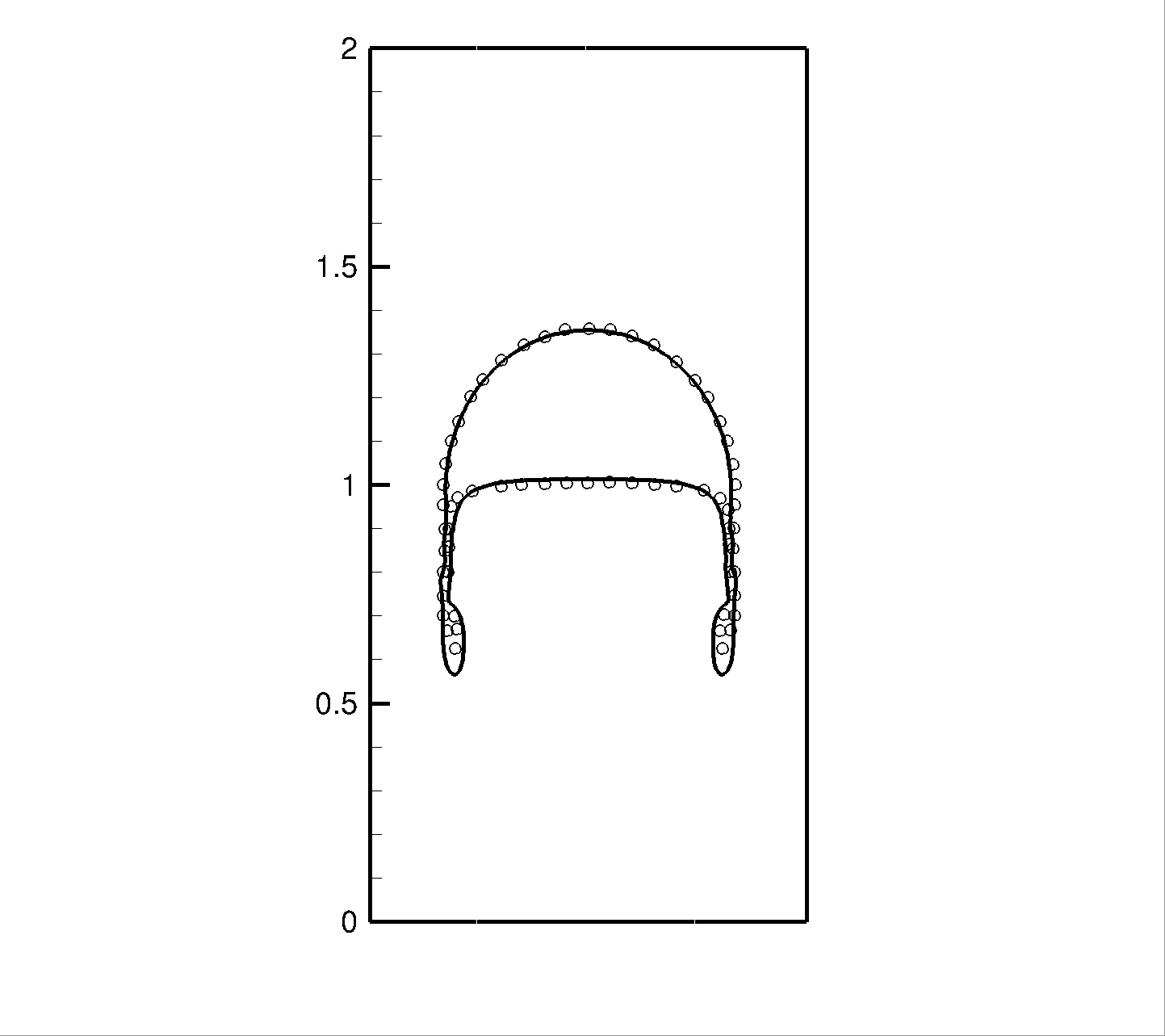}}
	\end{center}
	\protect\caption{Interface topological changes at different time instants for the case 2 rising bubble problem. The solid black line shows the interface calculated by the proposed method with mesh size $h=6.25\times 10^{-3}$. The circle symbols at the final time instant are the reference solution calculated by the MULES method with $h=1.5625\times 10^{-3}$ \cite{gamet2020validation}.
    \label{fig:risingBubbleSecond}}	
\end{figure}

\begin{figure}
	\begin{center}
	\subfigure[position of mass center]{\includegraphics[width=.42\textwidth,trim={0.5cm 0.5cm 0.5cm 0.5cm},clip]{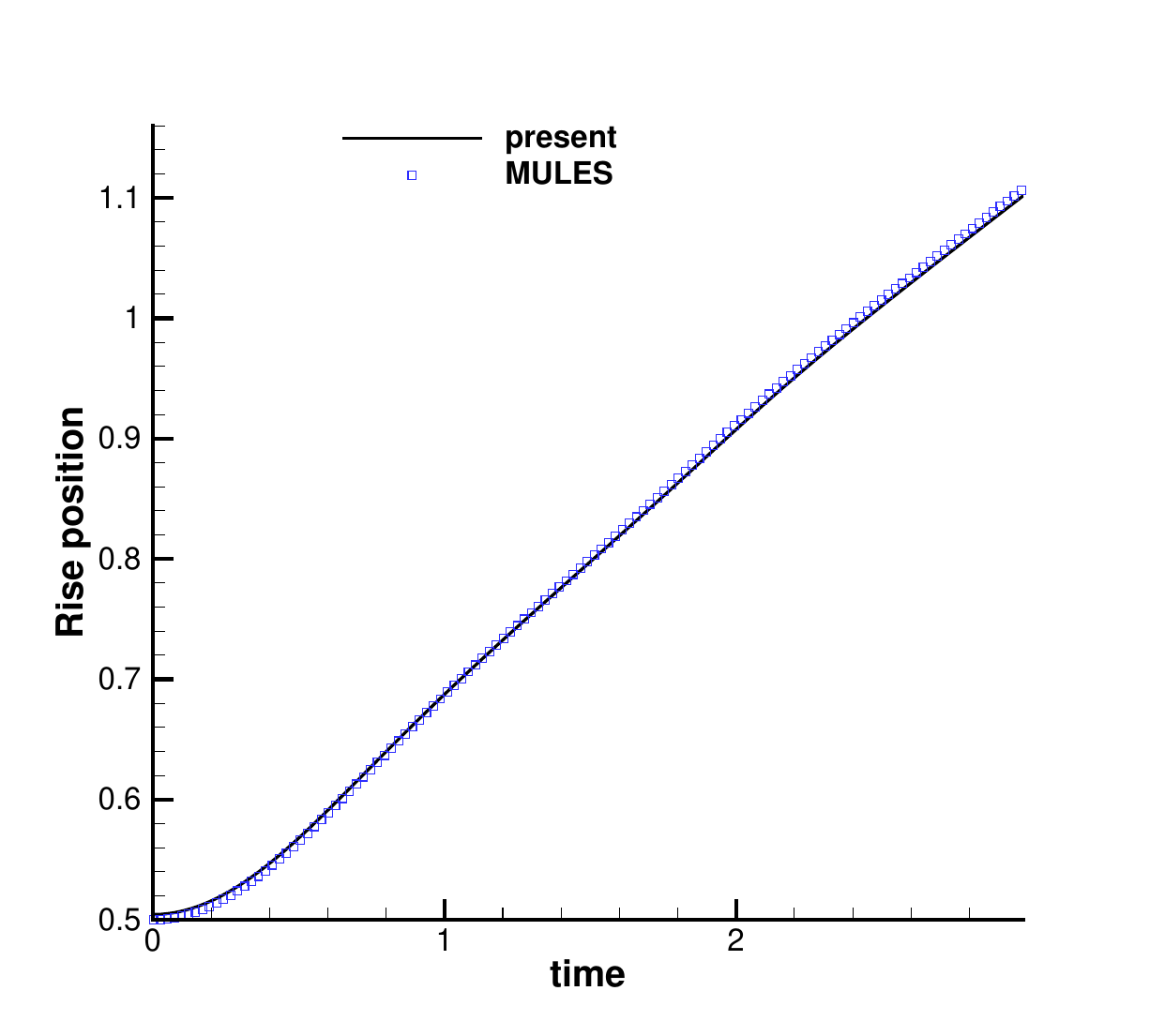}}
	\subfigure[rising velocity]{\includegraphics[width=.42\textwidth,trim={0.5cm 0.5cm 0.5cm 0.5cm},clip]{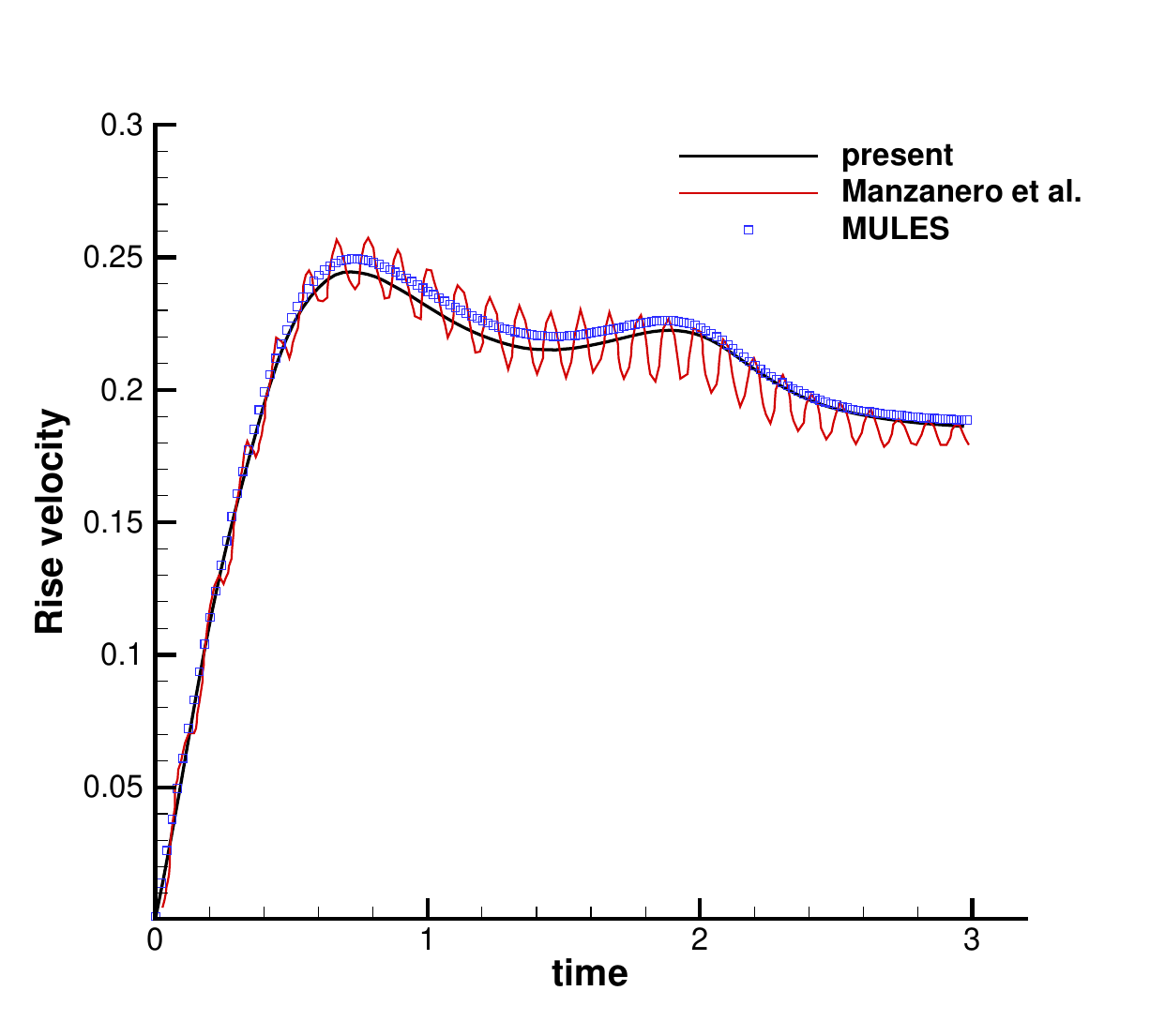}}
	\end{center}
	\protect\caption{Evolution of mass center (a) and rising velocity (b) for the case 2 rising bubble problem. The present solution is plotted in the black solid lines. The reference solution from \cite{gamet2020validation} is marked by the blue squares. The solution from \cite{manzanero2020entropy} is in red lines.  
    \label{fig:risingBubbleSecondPlot}}	
\end{figure}

\subsection{Air helium shock bubble interaction problem }
To validate the current solver's capability to handle high-speed compressible multiphase flows, we simulate the air-helium shock bubble interaction problem. The computational domain is $[-0.1,0.3]\times [-0.045,0.045]$ $\text{m}^2$. Hexahedral meshes of size $h=1.78\times 10^{-4}$ are used for the simulation. A stationary helium bubble with a diameter of 0.5 cm is initially positioned at the center of the domain. A moving shock wave with $Ma=1.22$ in the air is initially prescribed. Transmissive boundary conditions are applied on the left and right boundaries, while reflective boundary conditions are applied on the top and bottom boundaries. The thermodynamic properties of helium and air are provided in Table~\ref{tab:waterThermal}. In this case, viscosity, heat transfer, gravity, and surface tension are neglected.   

The numerical solution for the magnitude of the density gradient fields and the experimental results of \citet{haas1987interaction} are presented in Fig.~\ref{fig:shockHelium}. It is observed that the large-scale wave structures and interface evolution resolved by the numerical simulation agree well with the experimental results. This confirms that the proposed method can correctly solve material interfaces that interact with shock waves. Moreover, finer small-scale interface structures are resolved by the current solver, which is attributed to the high-resolution property of ROUND schemes. 

\begin{figure}
	\begin{center}
    \includegraphics[width=.8\textwidth,trim={0.0cm 0.5cm 0.5cm 0.5cm},clip]{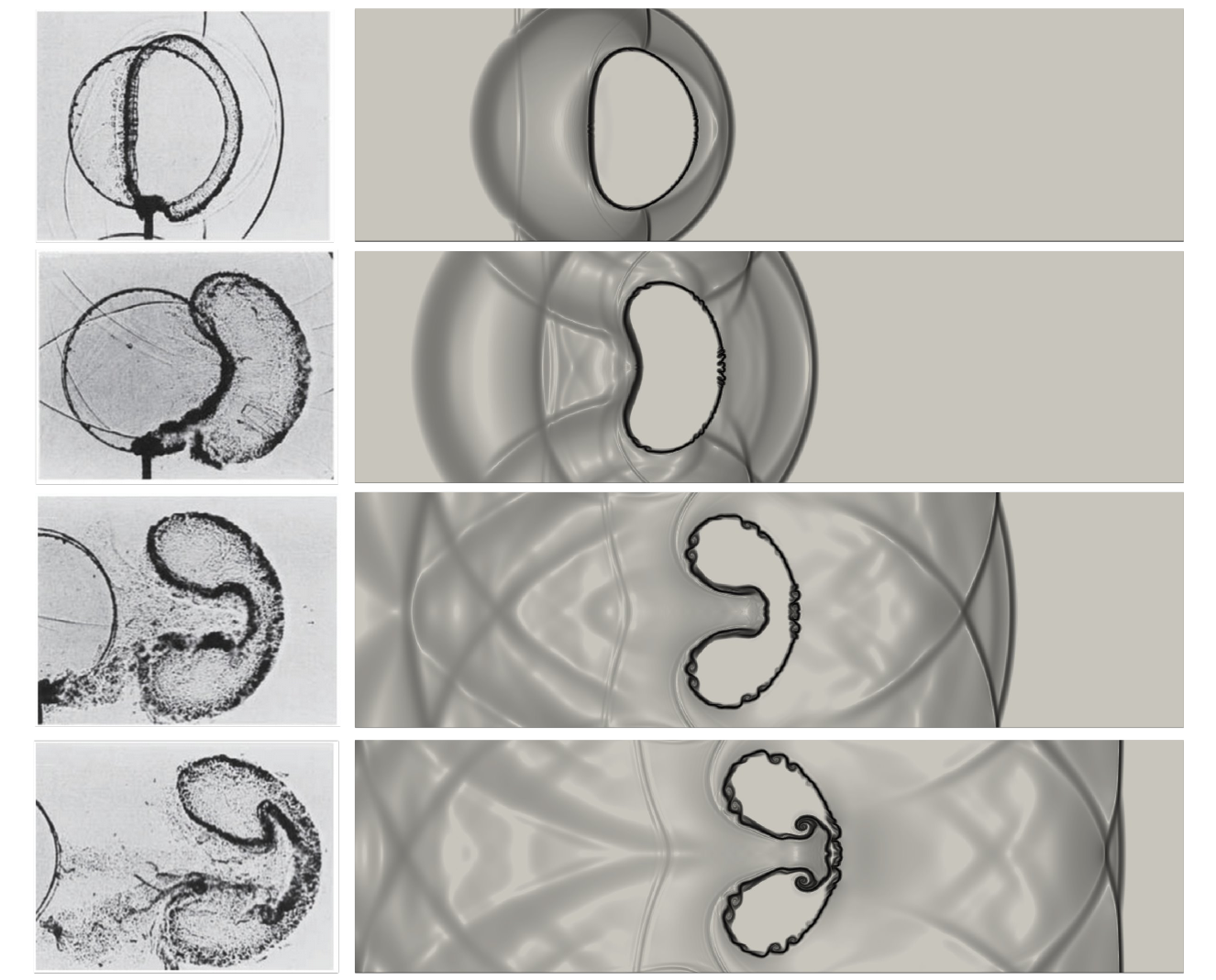}
	\end{center}
	\protect\caption{Numerical solution of the magnitude of the density gradient fields (right panel) and experimental results of \citet{haas1987interaction} (left panel) for air helium shock bubble interaction problem. The snapshots at $t=102 \mu \text{s}$, $t=245 \mu \text{s}$, $t=427 \mu \text{s}$ and $t=674 \mu \text{s}$ are presented from top to bottom. 
    \label{fig:shockHelium}}	
\end{figure}

\subsection{Shock wave interaction with a water column}
In this subsection, we simulate a challenging case involving a normal shock interacting with a cylindrical water column. This case is based on the experiment conducted by \citet{igra2001numerical}, which illustrates shock dynamics in the presence of an obstacle and primary atomization at a high
Mach number. This problem has also been simulated by a low-dissipation solver developed by \citet{kuhn2021all}. Initially, a water cylinder with a diameter of 4.8 mm is set in the center of the computational domain. A shock wave of $Ma= 1.47$ is prescribed in the air. The surface tension is included with $\sigma=0.0757 \text{kg}/\text{s}^2$. The viscosity of water is $\mu_l=1.0\times 10^{-3}$ Pa$\cdot$s and the viscosity of air is $\mu_g=1.8\times 10^{-5}$ Pa$\cdot$s. Thermal conductivity is not considered in this simulation. The thermodynamic parameters for water and air are given in Table \ref{tab:waterThermal}.    

A two-dimensional simulation is considered here. Hexahedral meshes with size $h=7.3684\times 10^{-5}$ are used, which corresponds to 65 cells across the diameter. The numerical solutions of the magnitude of the density gradient field at different snapshots are presented in Fig.~\ref{fig:shockWater} where the original holographic interferograms from the experiment are also included for comparison purposes. The numerical solution for the primary features of the shock wave, as shown in Fig.~\ref{fig:shockWater}(a) and Fig.~\ref{fig:shockWater}(b), agrees well with the experimental results. A pair of vortices behind the cylinder is also resolved by the current method, as shown in Fig.~\ref{fig:shockWater}(b). Compared to the numerical solution in \cite{kuhn2021all}, which uses a finer grid with 100 cells across the diameter, the present solver can still resolve wave propagations within the water column. The numerical solutions and experimental results at later times are presented in Fig.~\ref{fig:shockWater}(c). The numerical solution shows that filaments form behind the water column because of the strong shear flows. Since the simulation is two-dimensional, the atomization, an inherently three-dimensional phenomenon, cannot be accurately reproduced. For the shape of the interface in the numerical solution at the later time, the back side of the water column becomes flattened, which agrees with the experimental observation. These results demonstrate that the proposed method can accurately simulate compressible multiphase flows at a high Mach number.    

\begin{figure}
	\begin{center}
	\subfigure[$t=16~\mu \text{s}$]{\includegraphics[width=.5\textwidth,trim={0.5cm 0.5cm 0.5cm 0.5cm},clip]{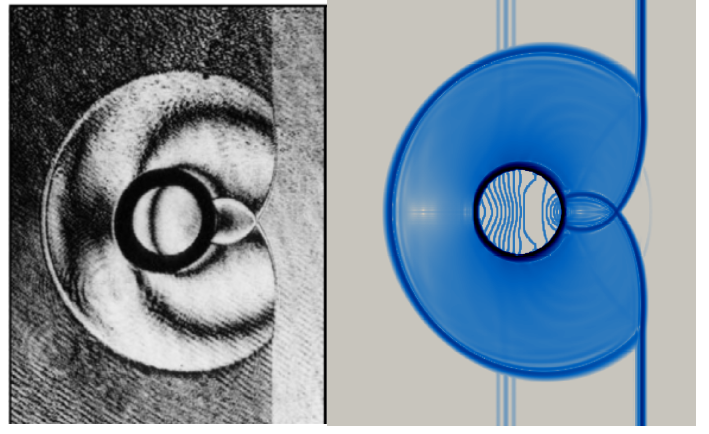}}
	\subfigure[$t=32~\mu \text{s}$]{\includegraphics[width=.5\textwidth,trim={0.5cm 0.5cm 0.5cm 0.5cm},clip]{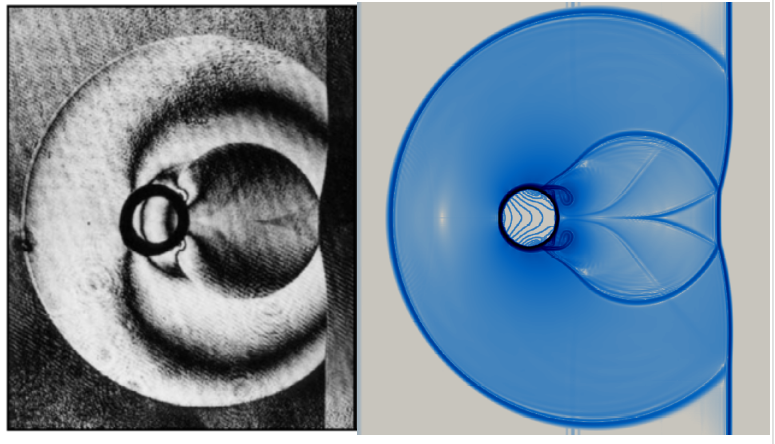}}
 	\subfigure[$t=206~\mu \text{s}$]{\includegraphics[width=.5\textwidth,trim={0.5cm 0.5cm 0.5cm 0.5cm},clip]{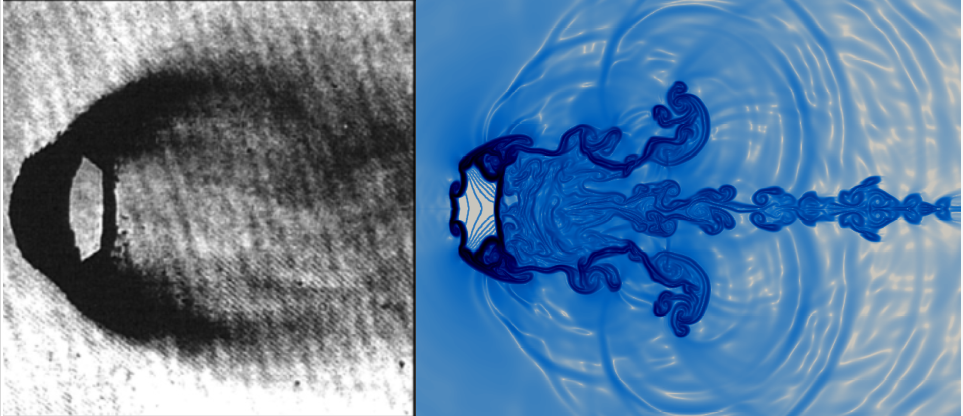}}
	\end{center}
	\protect\caption{Numerical solution of the magnitude of the density gradient fields (right panel) and holographic interferograms from the experiment by \citet{igra2001numerical} panel) for a shock wave interacting with a water column.
    \label{fig:shockWater}}	
\end{figure}

\subsection{Shock tube tests with multi-components and phase transitions}
In this subsection, several shock tube benchmark problems are used to verify the capability of the proposed hybrid approach in accurately resolving flow structures that contain shock waves, contact discontinuities, and rarefaction fans with phase transition effects. Conservative numerical schemes are generally required to resolve correct shock waves. The multicomponent mixtures consisting of liquid water, water vapour and air are considered. The thermodynamic parameters for these components are provided in Table \ref{tab:waterThermal}. The computational domain is $[0,1]$ with 100 mesh cells.

In this first shock tube test, the initial condition of a mixture far from the phase bounds is set as follows
\begin{equation*} \label{eq:initial}
(p, T, Y_1,Y_2,Y_3)=\left\{
\begin{array}{ll}
(0.2~\text{MPa},~T_{sat}(x_v p),~0.1,~0.2,~0.7),~~ x \leq 0.5, \\
(0.1~\text{MPa},~T_{sat}(x_v p),~0.1,~0.2,~0.7),~~\text{otherwise}.
\end{array}
\right.   
\end{equation*}
The initial condition indicates an initial pressure jump in the mixture, while the mixture is initially in the thermodynamic equilibrium state. The computation is run until $t=1 \text{ms}$. The numerical solutions calculated by the proposed solver are presented by the blue dashed lines, as shown in Fig.~\ref{fig:shockTube1}. The reference solutions marked by the black solid lines are produced by the density-based solver \cite{deng2020diffuse} using the same grid number. It can be seen that the proposed hybrid solver can correctly resolve the shock waves, contact discontinuity and rarefaction fan. The evaporation due to the shock compression and condensation due to expansion waves are also correctly resolved. It is also observed that the results produced by the proposed method are slightly more diffusive than the ones obtained by the density-based solver.

\begin{figure}
	\begin{center}
    \subfigure[mass fraction of liquid]   {\includegraphics[width=.42\textwidth,trim={0.5cm 10.5cm 0.5cm 10.5cm},clip]{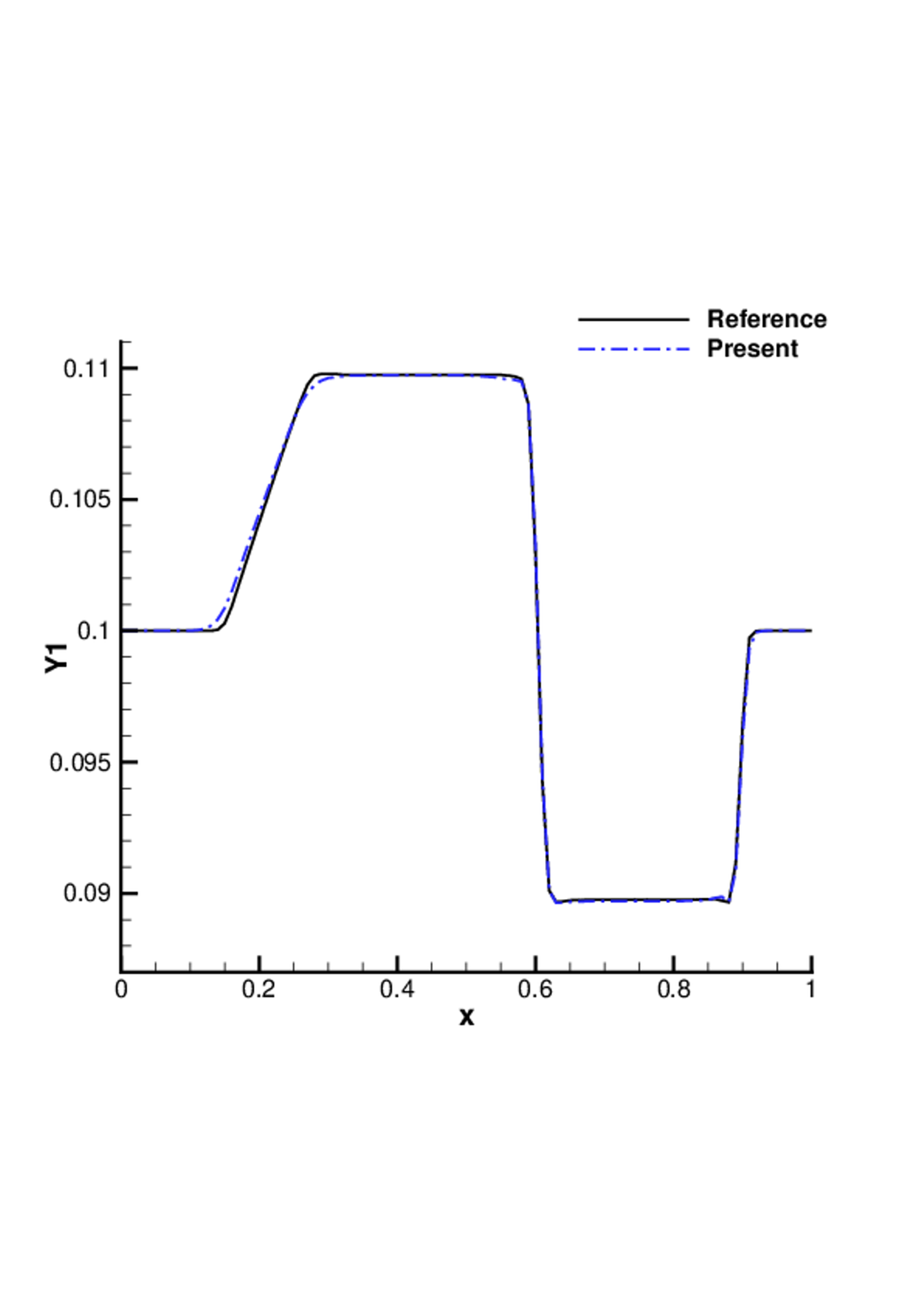}}
   	\subfigure[mass fraction of vapour]{\includegraphics[width=.42\textwidth,trim={0.5cm 10.5cm 0.5cm 10.5cm},clip]{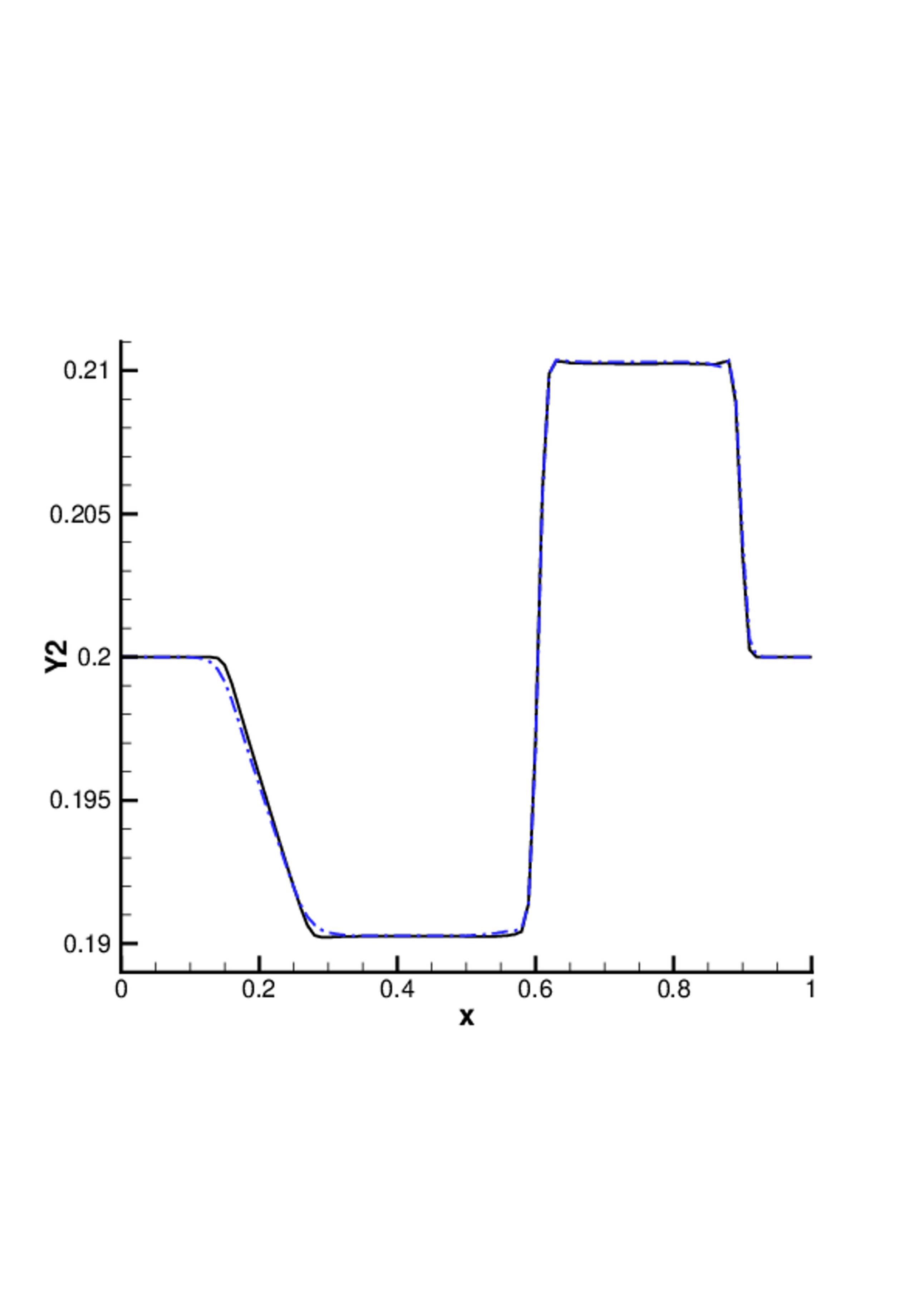}}
	\subfigure[pressure (Pa)]{\includegraphics[width=.42\textwidth,trim={0.5cm 10.5cm 0.5cm 10.5cm},clip]{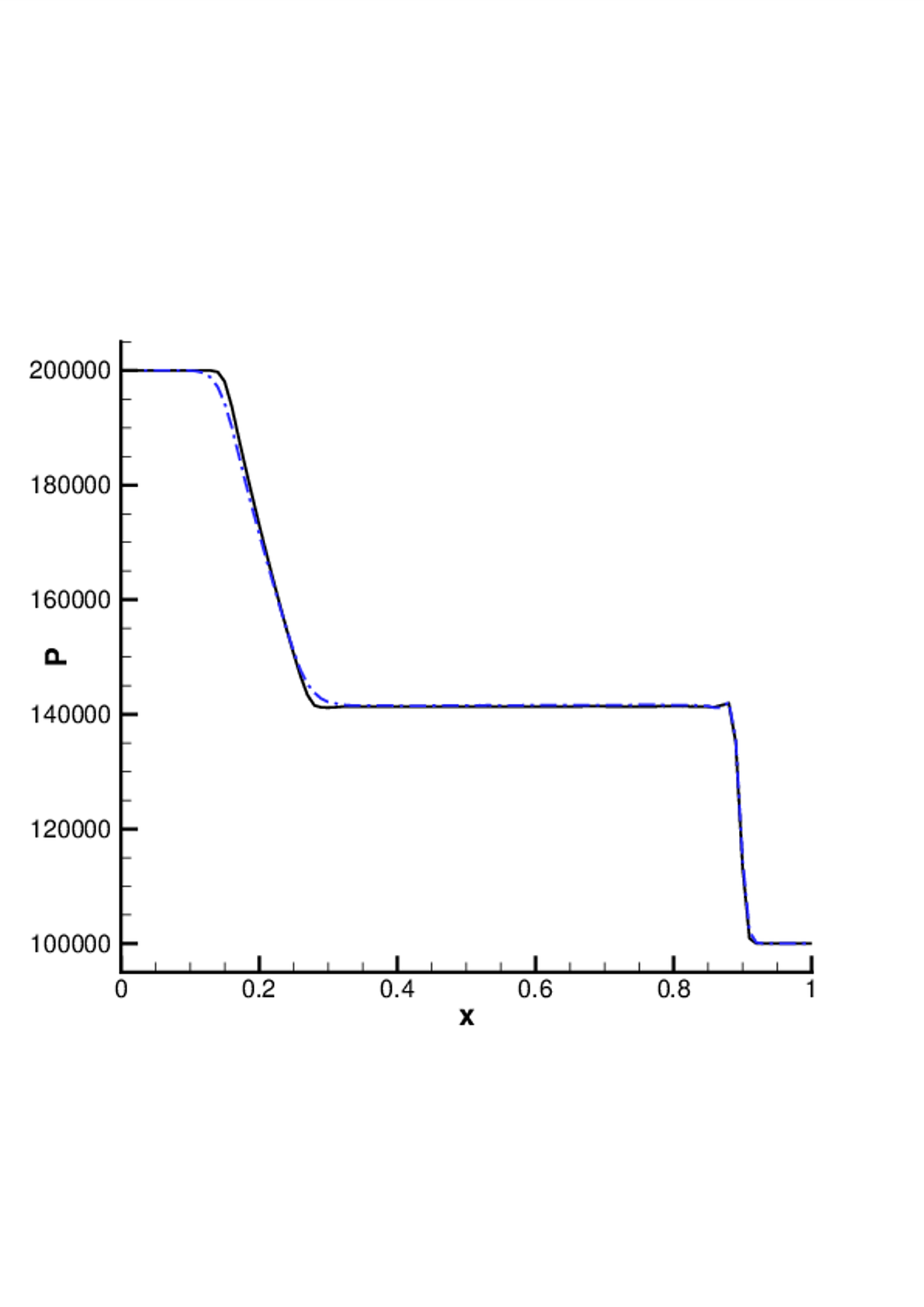}}
	\subfigure[temperature (K)]{\includegraphics[width=.42\textwidth,trim={0.5cm 10.5cm 0.5cm 10.5cm},clip]{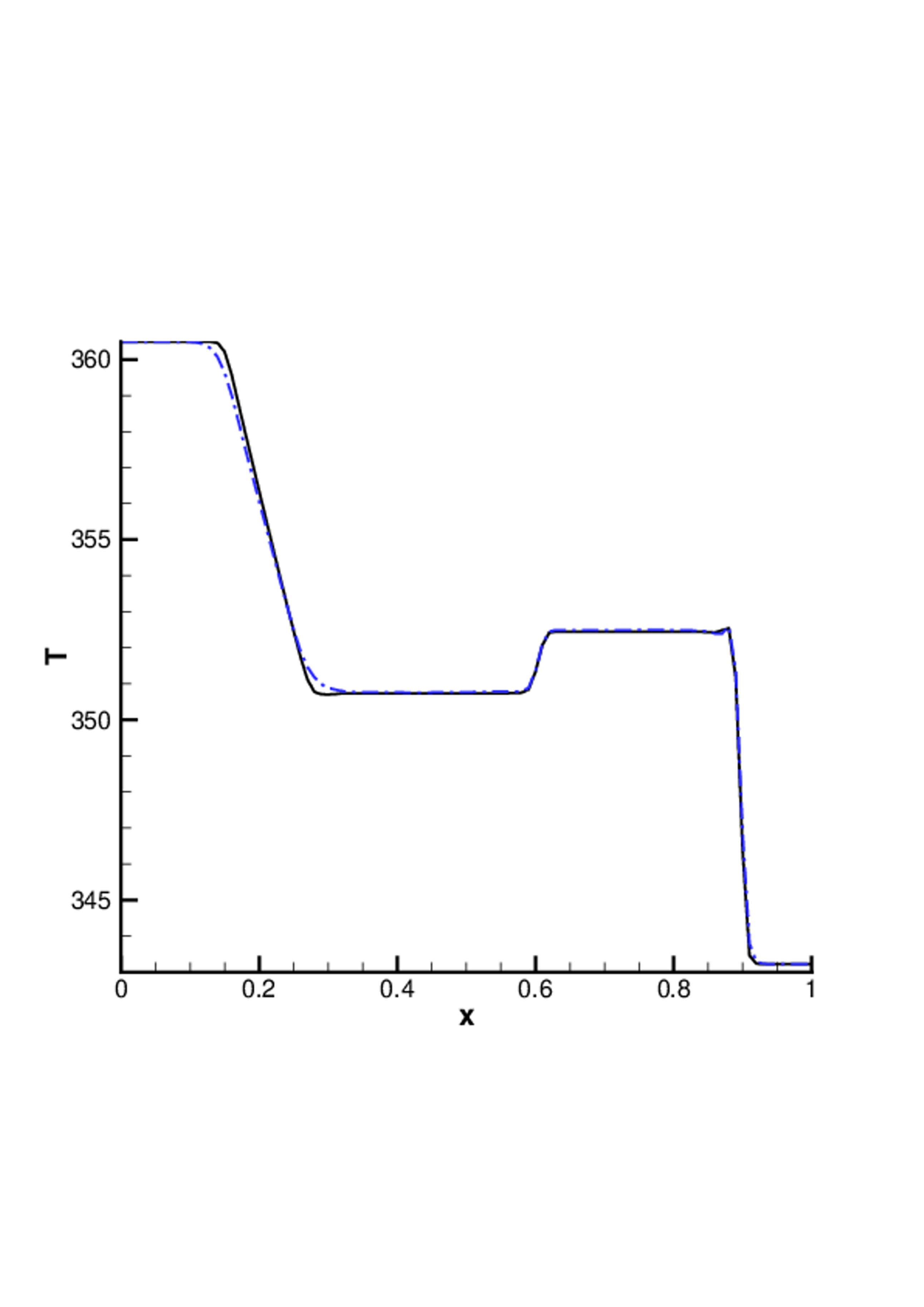}} 	
	\end{center}
	\protect\caption{Numerical solutions for the multi-component shock tube with the initial mixture far from the phase bounds. The reference solution by \citet{deng2020diffuse} is presented by the black solid lines. The solution by the proposed solver is plotted by the blue dashed lines. 
    \label{fig:shockTube1}}	
\end{figure}

The second shock-tube test considered here is initialised with an air-dominated mixture. The mass fraction of air is initially given as $Y_3=0.98$. An initial pressure ratio of 2 and an initial temperature of $T=293$K are provided in the computational domain. Then, the initial mass fraction of liquid water and water vapour is calculated by the thermodynamic equilibrium condition. The numerical solutions at $t=1$ms are presented in Fig.~\ref{fig:shocktube2}. The typical flow structures of the shock tube problem are again reproduced by the current numerical solver. Evaporation occurs alongside the shock wave, while condensation accompanies the rarefaction fan. As shown in the results, the solutions obtained by the proposed hybrid solver agree well with those calculated by the density-based solver.    

\begin{figure}
	\begin{center}
    \subfigure[mass fraction of liquid]    {\includegraphics[width=.42\textwidth,trim={0.5cm 0.5cm 0.5cm 0.5cm},clip]{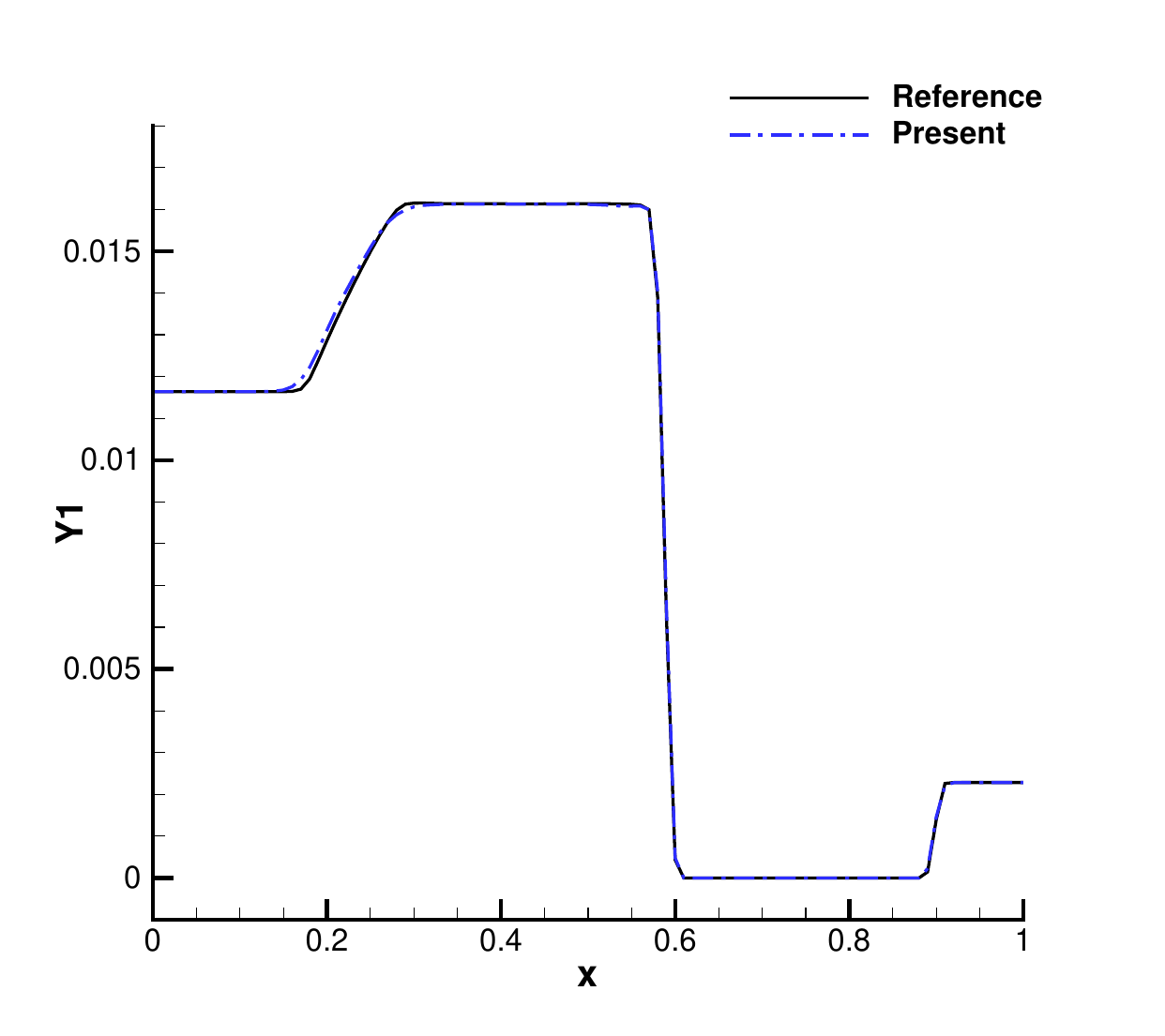}}
   	\subfigure[mass fraction of vapour]{\includegraphics[width=.42\textwidth,trim={0.5cm 0.5cm 0.5cm 0.5cm},clip]{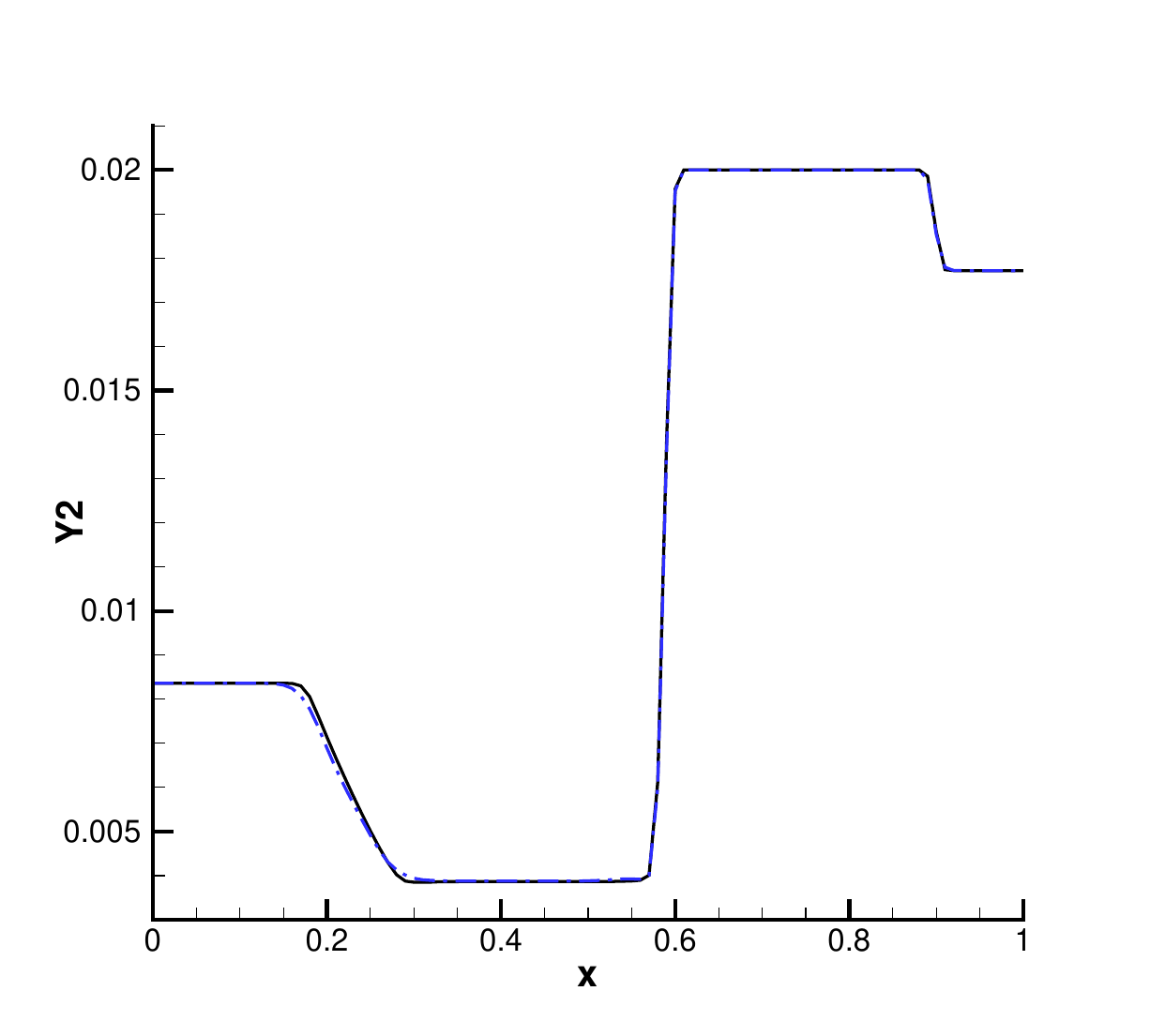}}
	%\subfigure[pressure (Pa)]%{\includegraphics[width=.42\textwidth,trim={0.5cm 0.5cm 0.5cm 0.5cm},clip]{}}
	%\subfigure[temperature (K)]{\includegraphics[width=.42\textwidth,trim={0.5cm 0.5cm 0.5cm 0.5cm},clip]{}}
	\end{center}
	\protect\caption{Numerical solutions for the multi-component shock tube with air-dominated mixture. The reference solution by \citet{deng2020diffuse} is presented by the black solid lines. The solution by the proposed solver is plotted by the blue dashed lines.
    \label{fig:shocktube2}}	
\end{figure}

\subsection{Water liquid-vapour filled tube with a superheated region}
In this subsection, we consider a test case proposed previously \cite{pelanti2014mixture, de2019hyperbolic, pelanti2022arbitrary,demou2022pressure} designed to verify the capability of our numerical procedure to handle conditions relevant to boiling flows. A 1.7 m 1D tube is used as the computational domain. Following the initial conditions used by \citet{demou2022pressure}, the tube is filled with a water mixture, characterised by a water volume faction of $\alpha_1=0.9$ for $x<0.68$m and $\alpha_1=0.1$ for $x>0.68$m. A uniform pressure of $p=3.0104051\times 10^6$Pa is initially imposed in the tube. For the gas-dominated region $x>0.68$m, the temperature is set to the saturation condition. For the liquid-dominated region, the mixture is overheated by 2.0 K. Viscosity, thermal conductivity, and surface tension effect are neglected in the simulation. 

A uniform grid of 1000 mesh cells is used. The simulation is conducted until $t=6\times10^{-6}$s. Fig.~\ref{fig:superheated} shows numerical solutions obtained by the present method, along with the reference solution generated with a finer mesh of 4000 mesh cells. The numerical results indicate that the pressure on the liquid side rises to the saturation level, generating a pressure discontinuity. This pressure discontinuity results in two propagating waves that induce thermodynamic equilibrium between the two regions. Compared with the numerical solution by \citet{demou2022pressure}, the present method produces the same wave structures and a good agreement is achieved.

\begin{figure}
	\begin{center}
    \subfigure[pressure (Pa)]{\includegraphics[width=.42\textwidth,trim={0.1cm 0.5cm 0.5cm 0.5cm},clip]{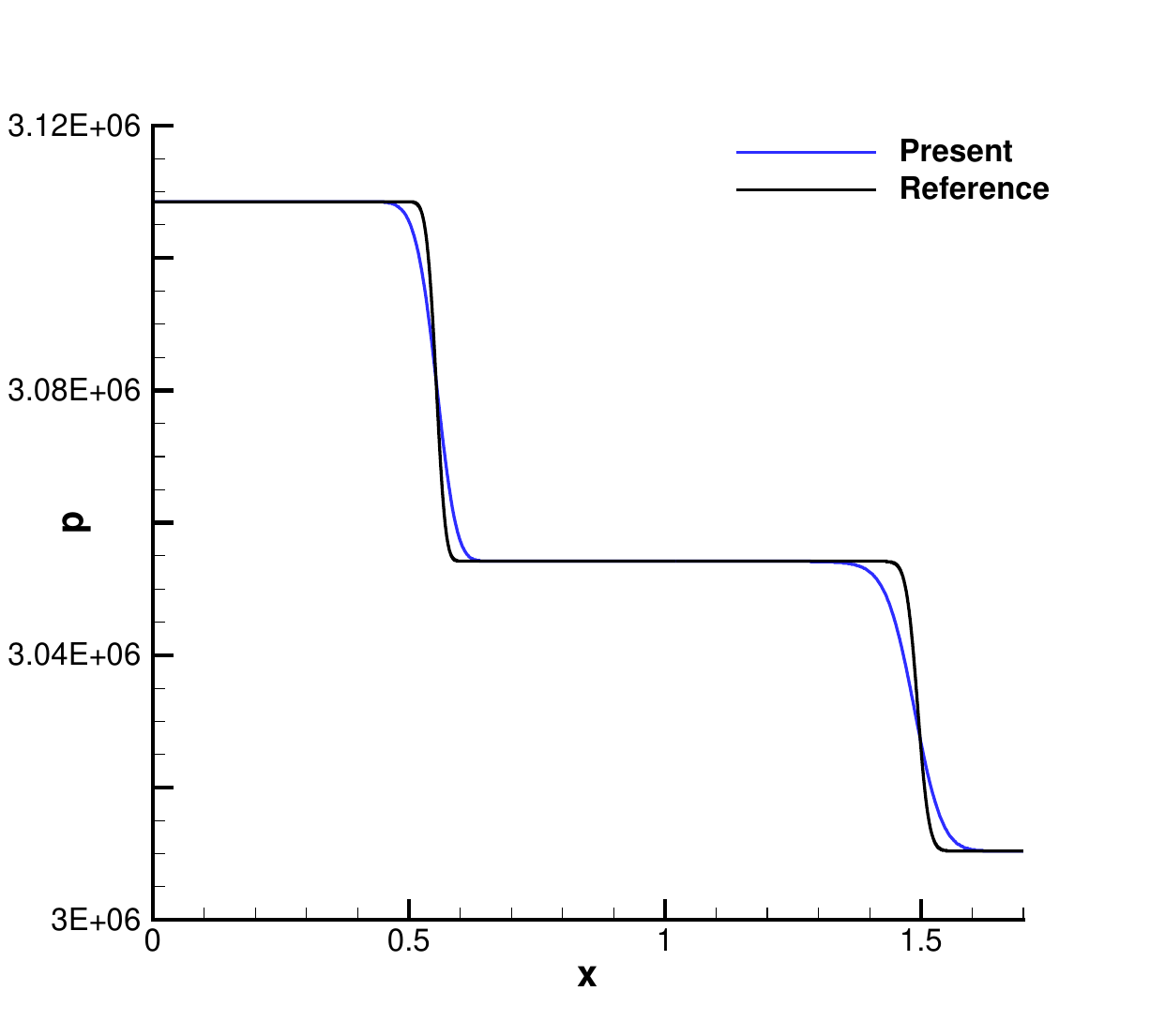}}
    \subfigure[volume fraction of vapour]    {\includegraphics[width=.42\textwidth,trim={0.1cm 0.5cm 0.5cm 0.5cm},clip]{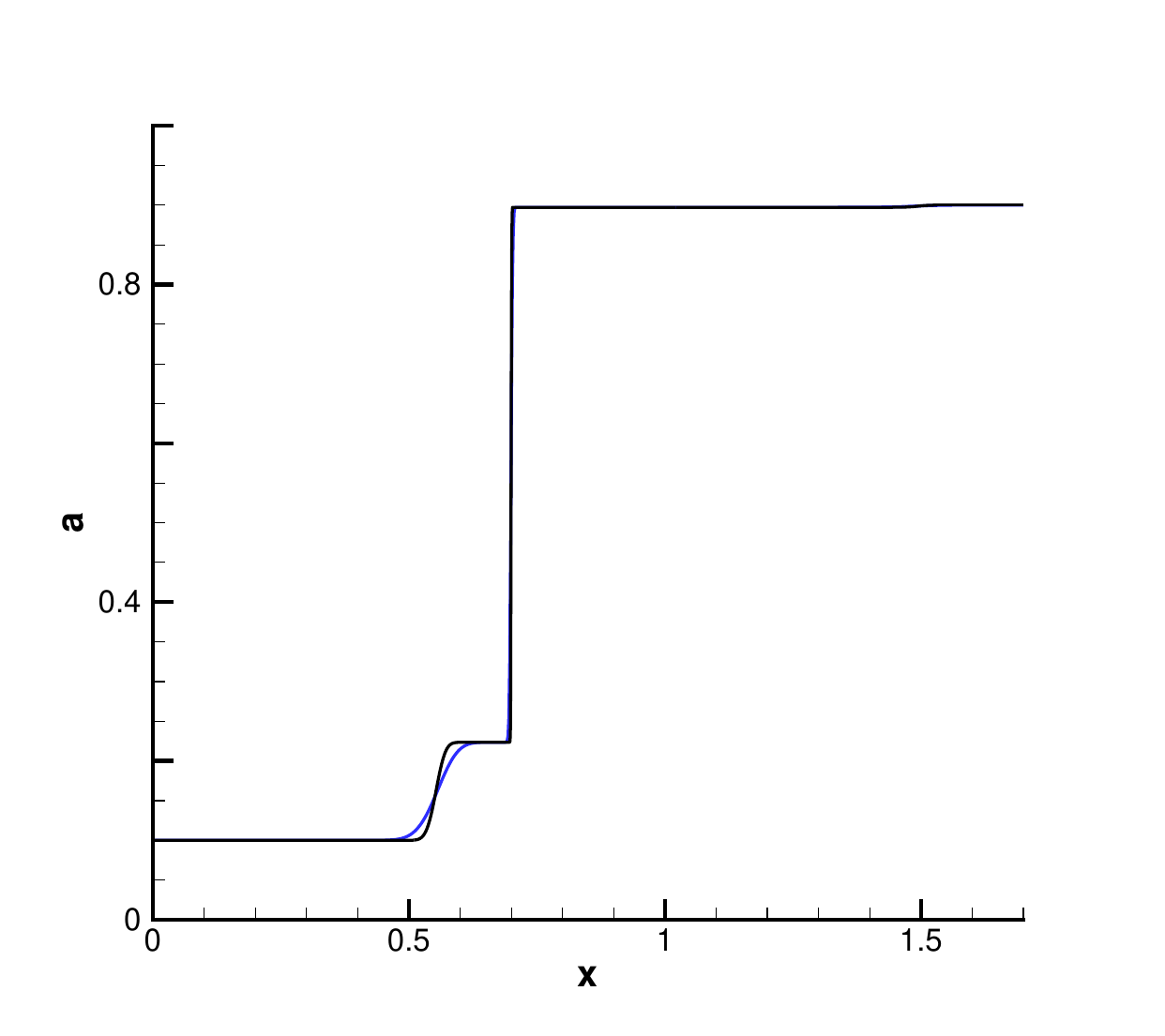}}
	\subfigure[temperature (K)]{\includegraphics[width=.42\textwidth,trim={0.1cm 0.5cm 0.5cm 0.5cm},clip]{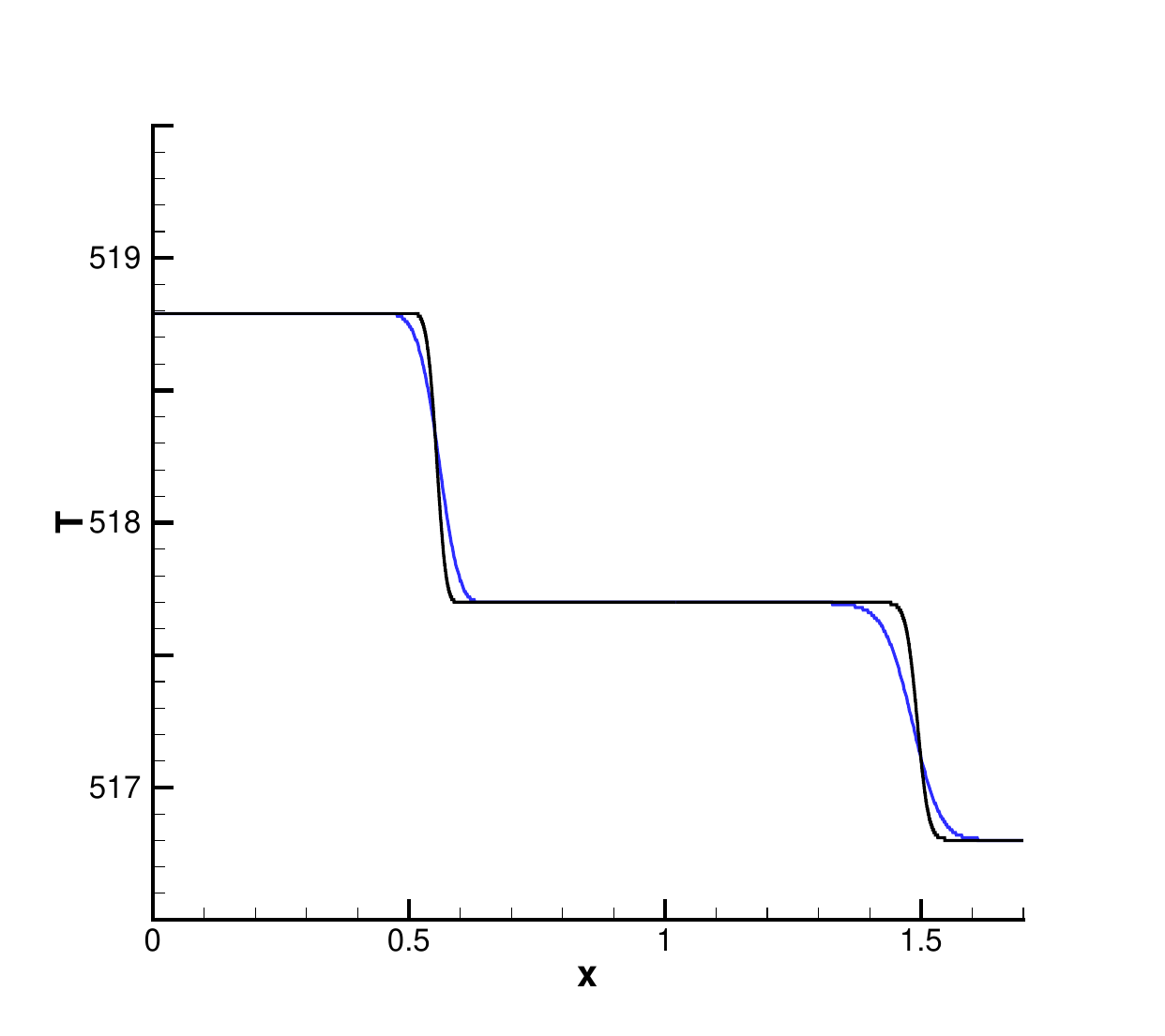}}
	\subfigure[mass fraction of vapour]{\includegraphics[width=.42\textwidth,trim={0.1cm 0.5cm 0.5cm 0.5cm},clip]{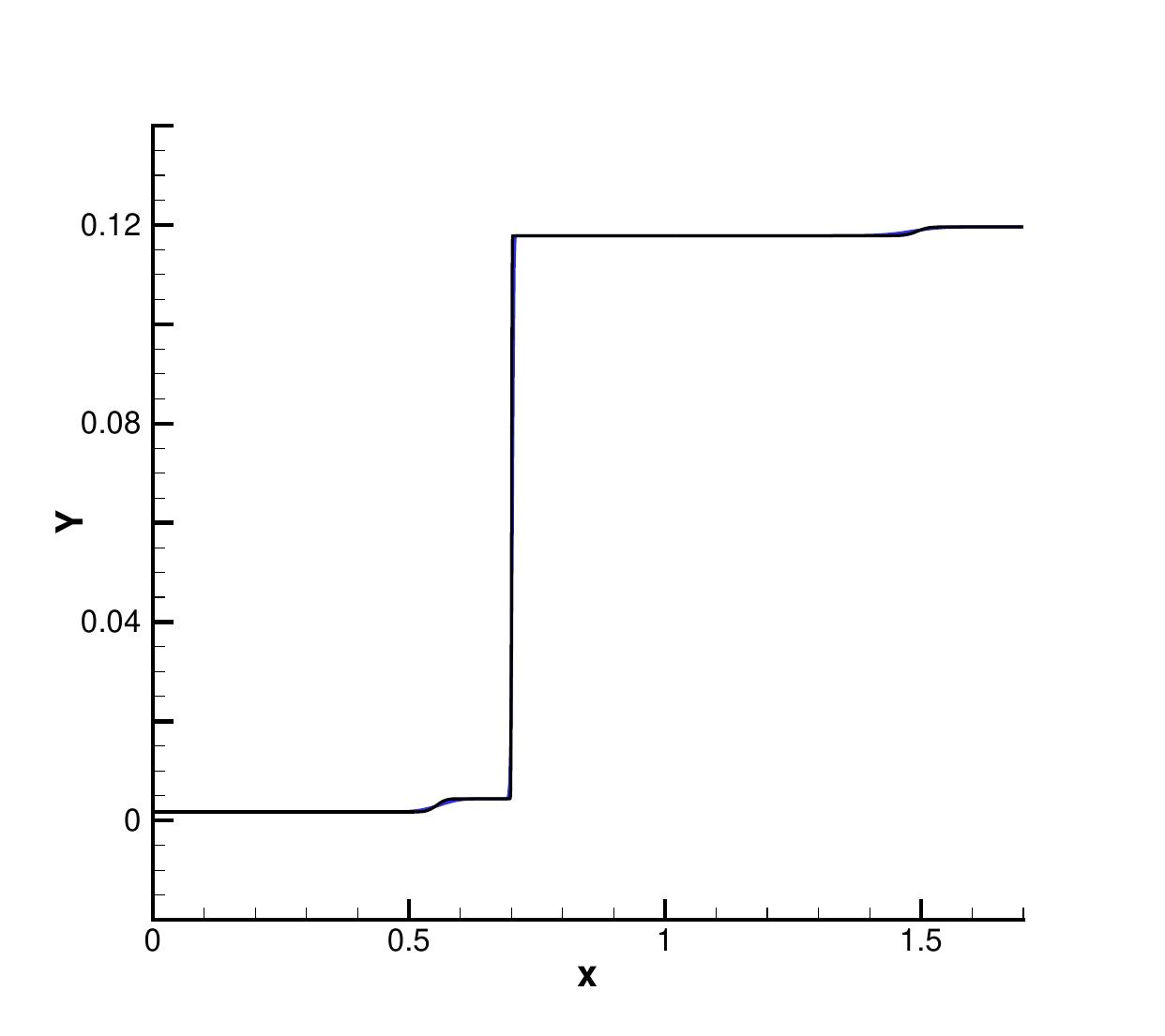}}
	\end{center}
	\protect\caption{Numerical solutions for the water liquid-vapour filled tube with a superheated region. The black solid lines present the reference solutions. 
    \label{fig:superheated}}	
\end{figure}

\subsection{2D nucleate boiling in water}
In order to demonstrate the capability of the proposed hybrid solver in solving the phase transition in the incompressible limit, in this section, a 2D boiling water test case is presented. The initial condition is set similar to that used by \citet{le2014towards}. The computational domain is 12 cm $\times$ 7 cm in which the lower half is filled with saturated liquid water ($\alpha=0.9999$) and the upper half is filled with saturated water vapour ($\alpha=10^{-4}$). The surface tension coefficient is $\sigma=0.073$N/m and the acceleration of gravity is $|\mathbf{g}|=9.81$m/$\text{s}^2$. The thermal conductivities of water in the liquid and vapour phases are $\lambda_1=0.6788\text{W}~\text{m}^{-1}~\text{K}^{-1}$ and $\lambda_1=0.0249\text{W}~\text{m}^{-1}~\text{K}^{-1}$, respectively. Similar to \citet{demou2022pressure}, the effects of viscosity and substrate wettability are neglected here. The initial pressure is $p=101325$ Pa. The initial temperature is then calculated as its saturation value. In this work, slip boundary conditions are imposed on all boundaries for the velocity field. The vertical and top walls are designated as adiabatic. The bottom wall is slowly heated, and a time-varying temperature is applied:
\begin{equation*} 
T(t)=\left\{
\begin{array}{ll}
T_{sat}+\Bigl(\frac{t}{0.15} \Bigl)\Delta T,~~ t \leq 0.15\text{s}, \\
T_{sat}+\Delta T,~~t > 0.15\text{s},
\end{array}
\right.   
\end{equation*}
where $T_{sat}$ is the saturation temperature. A uniform hexahedral grid of $210 \times 500$ is used. The numerical solutions of the volume fraction of water vapour are presented in Fig.~ \ref{fig:boiling}. It can be seen that in the early stages, a water vapour film starts to form at the bottom wall. A large elongated vapour bubble is created in the center as a result of the wall effects. The rising bubbles are created as a result of a combination of convective and surface tension effects. The phenomenological description of the current simulation results is similar to those reported in \citet{le2014towards,demou2022pressure}. This test case demonstrates that the proposed solver is capable of simulating phase transition and multi-physical processes at low Mach number.  

\begin{figure}
	\begin{center}
    \subfigure[t=0 s]{\includegraphics[width=.2\textwidth,trim={6.1cm 3.5cm 20.5cm 3.5cm},clip]{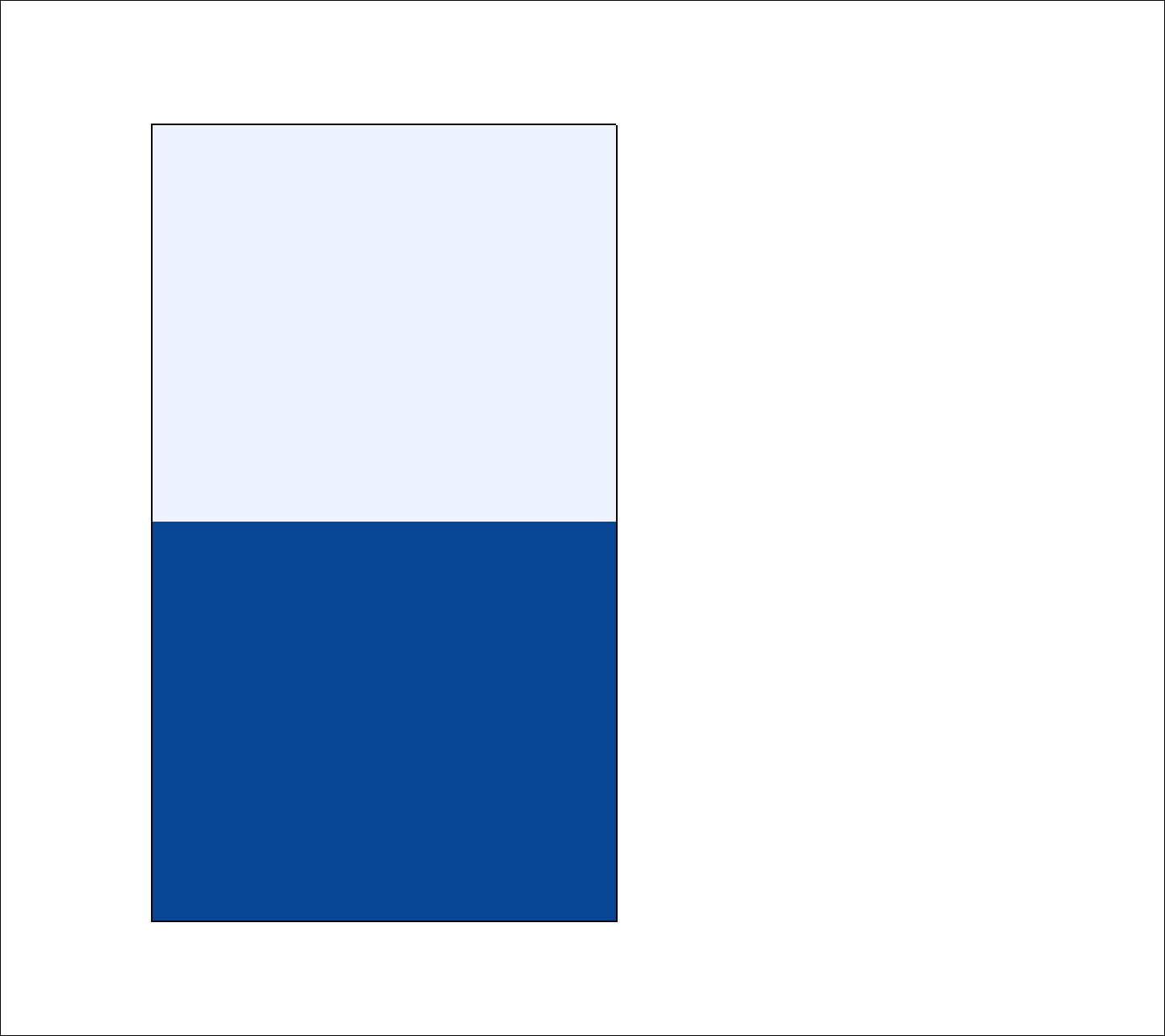}}
    \subfigure[t=250 ms]    {\includegraphics[width=.2\textwidth,trim={6.1cm 3.5cm 20.5cm 3.5cm},clip]{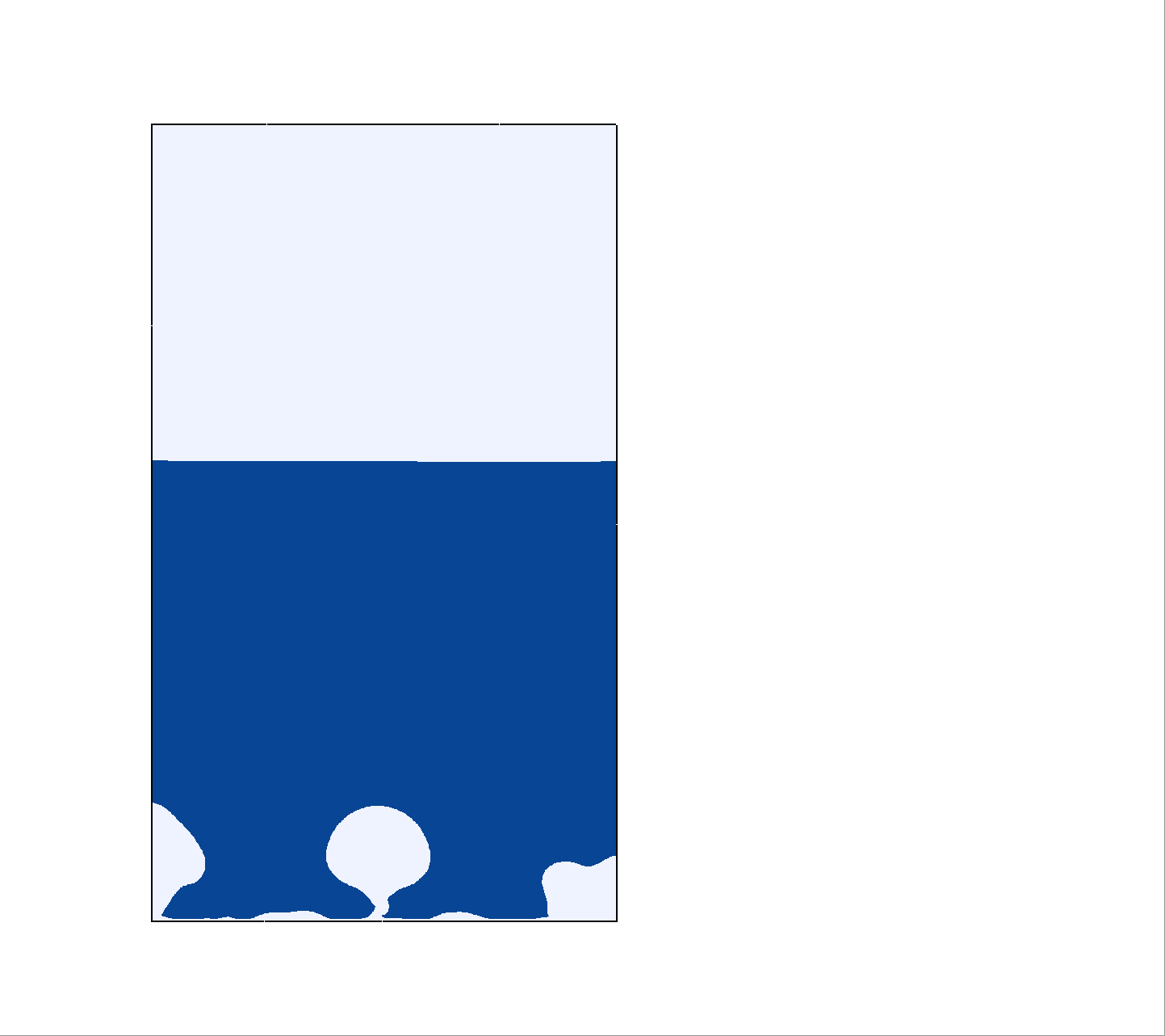}}
	\subfigure[t=300 ms]{\includegraphics[width=.2\textwidth,trim={6.1cm 3.5cm 20.5cm 3.5cm},clip]{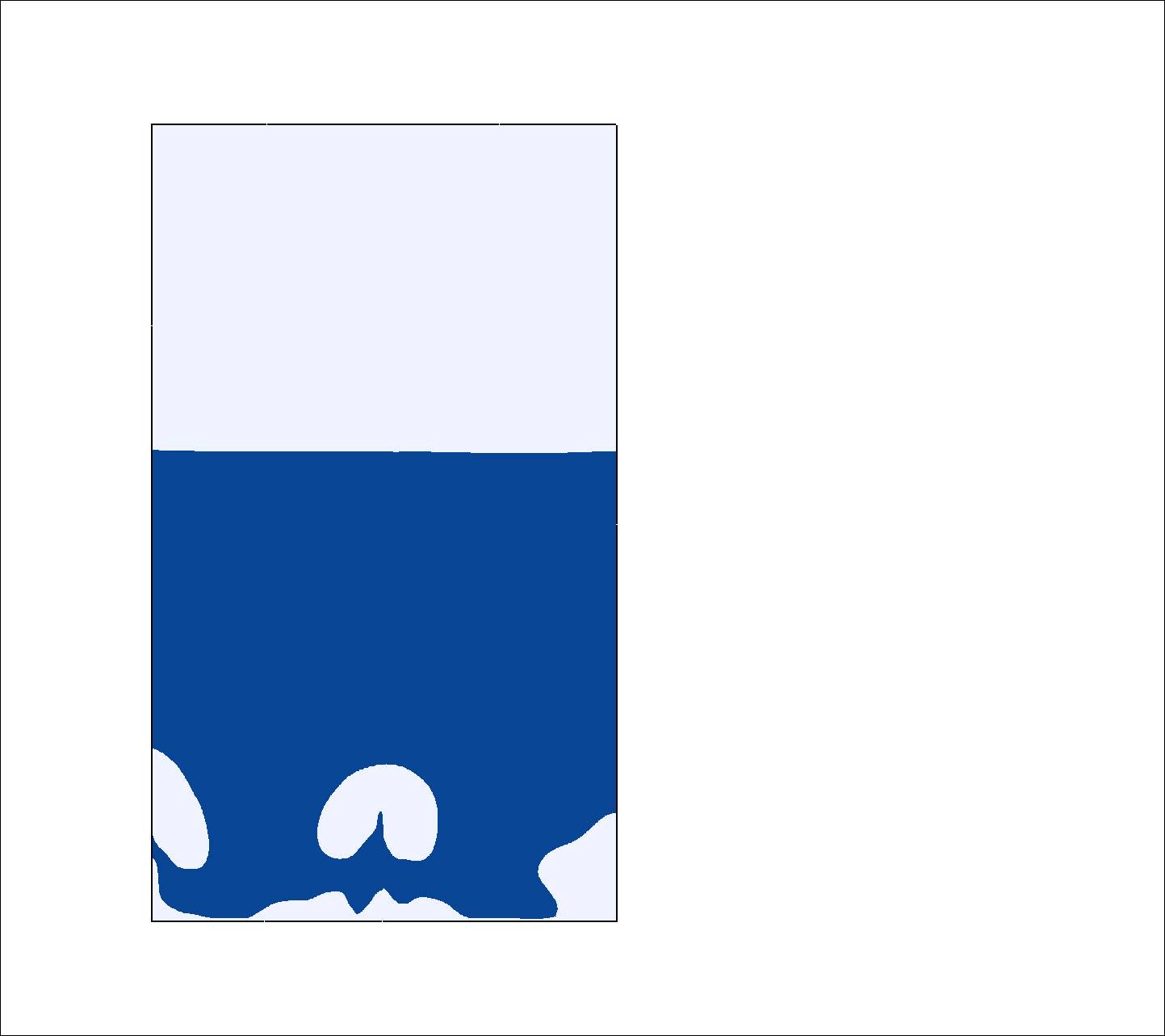}}
	\subfigure[t=350 ms]{\includegraphics[width=.2\textwidth,trim={6.1cm 3.5cm 20.5cm 3.5cm},clip]{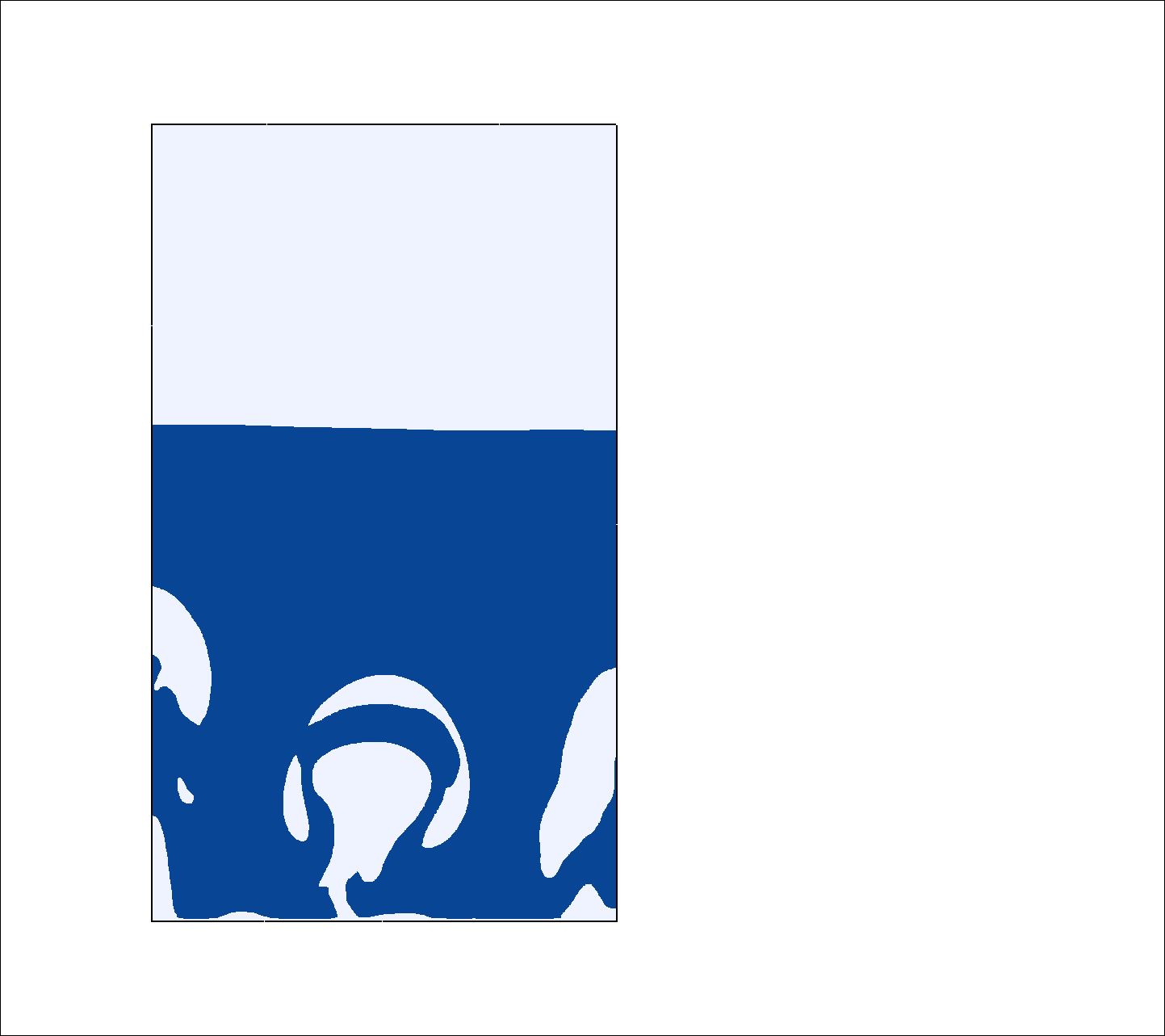}}
	\end{center}
	\protect\caption{Volume fraction of water vapour for the 2D nucleate boiling problem at different instances. 
    \label{fig:boiling}}	
\end{figure}

\subsection{Richtmyer-Meshkov instability with cavitation}
A shock-induced Richtmyer-Meshkov instability and cavitation bubbles are considered in this case. The computational domain is $[0,1]\times[0,1]$ $\text{m}^2$ with $500 \times 500$ hexahedral mesh cells. The slip boundary conditions are given at the boundaries. The liquid water-dominated mixture is filled in half of the domain of $y\leq0.36+0.15 \text{cos}(4\pi x)$. The aerogel-dominated mixture is filled in another half of the domain. An upward-moving planar shock of $Ma=1.7$ is initially positioned in the water-dominated mixture. The details of the initial condition for each physical variable are
\begin{equation*} \label{eq:initial}
(p, T,\bold{u}, Y_1,Y_2,Y_3)=\left\{
\begin{array}{ll}
(5\times10^8~\text{Pa},~420.7~\text{K},~(0,275)~\text{m}/\text{s},~0.99998,~1\times10^{-5},~1\times10^{-5}),~~ y \leq 0.2, \\
(1\times10^5~\text{Pa},~298~\text{K},~(0,0)~\text{m}/\text{s},~0.99998,~1\times10^{-5},~1\times10^{-5}),~~0.2<y\leq 0.36+0.15 \text{cos}(4\pi x), \\
(1\times10^5~\text{Pa},~298~\text{K},~(0,275)~\text{m}/\text{s},~1\times10^{-5},~1\times10^{-5},~0.99998),~~y> 0.36+0.15 \text{cos}(4\pi x)
\end{array}
\right..   
\end{equation*}
The initial distribution of the volume fraction field of liquid water is presented in Fig.~\ref{fig:RMinstability}(a). The thermodynamic parameters for modelling liquid water, water vapour and aerogel are provided in Table~\ref{tab:waterThermal}, resulting in an initial density ratio of approximately 10 between water and aerogel. Viscosity, thermal conductivity, surface tension, and gravity are neglected in the simulation. The numerical solutions regarding the volume fraction of liquid water at different instances are presented in Fig.~\ref{fig:RMinstability}. As shown in Fig.~\ref{fig:RMinstability}(b), two water jets form following the interaction of the upward-moving shock wave with the interface. As the shock wave continues upward, two jets grow and develop kelvin-Helmholtz-like shapes, as shown in Fig.~\ref{fig:RMinstability}(c)(d). Cavitation bubbles are also generated as rarefaction waves reflect from the upper interface. These simulation results are in agreement with the observations from the experimental results of \citet{strucka2023synchrotron}. This test case demonstrates the capability of the proposed method for simulation of the Richtmyer-Meshkov instability which involves shock waves and phase transition.       

\begin{figure}
	\begin{center}
    \subfigure[t=0 $\mu$s]{\includegraphics[width=.42\textwidth,trim={2.5cm 2.5cm 2.5cm 2.5cm},clip]{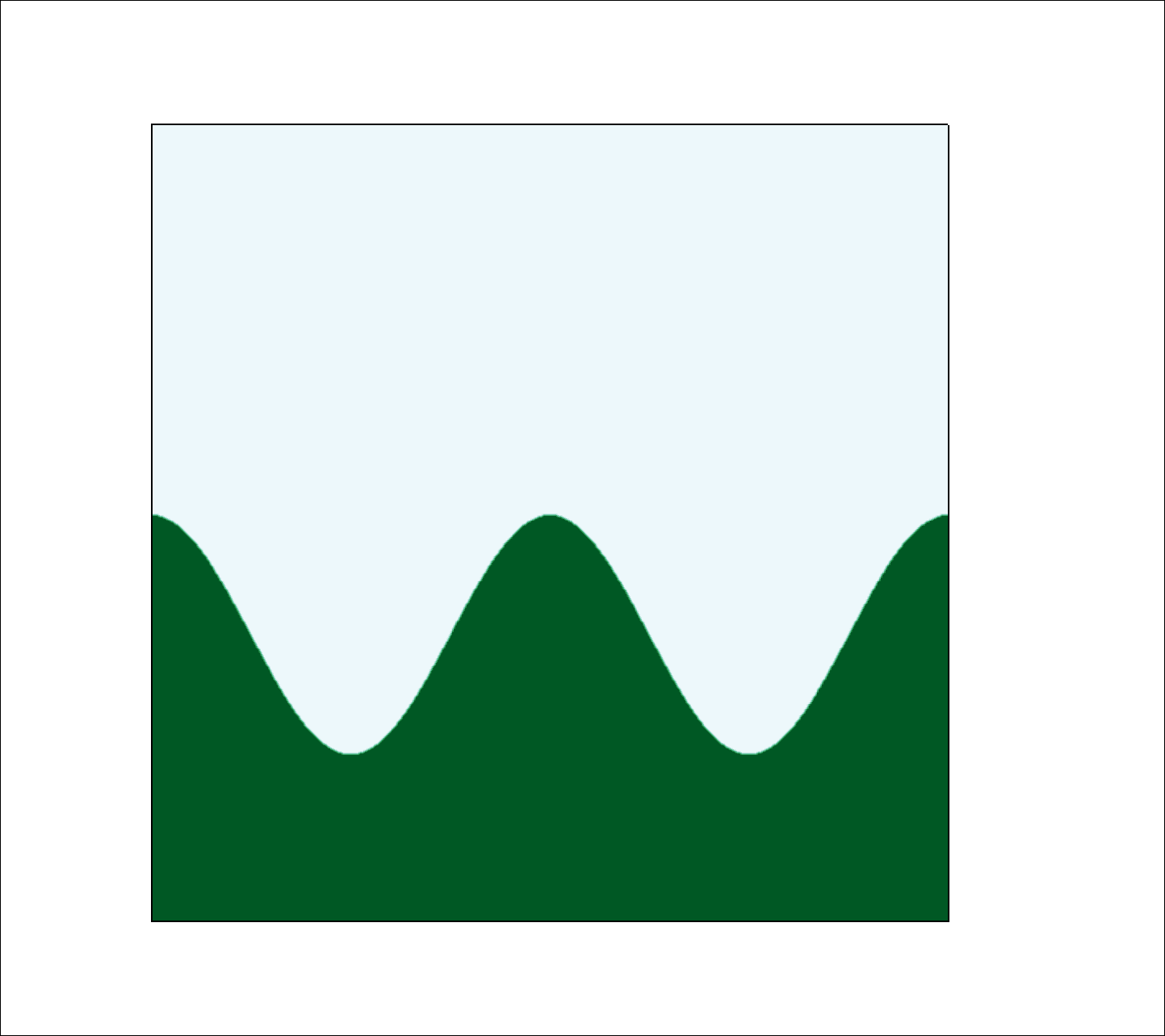}}
    \subfigure[t=146 $\mu$s]    {\includegraphics[width=.42\textwidth,trim={2.5cm 2.5cm 2.5cm 2.5cm},clip]{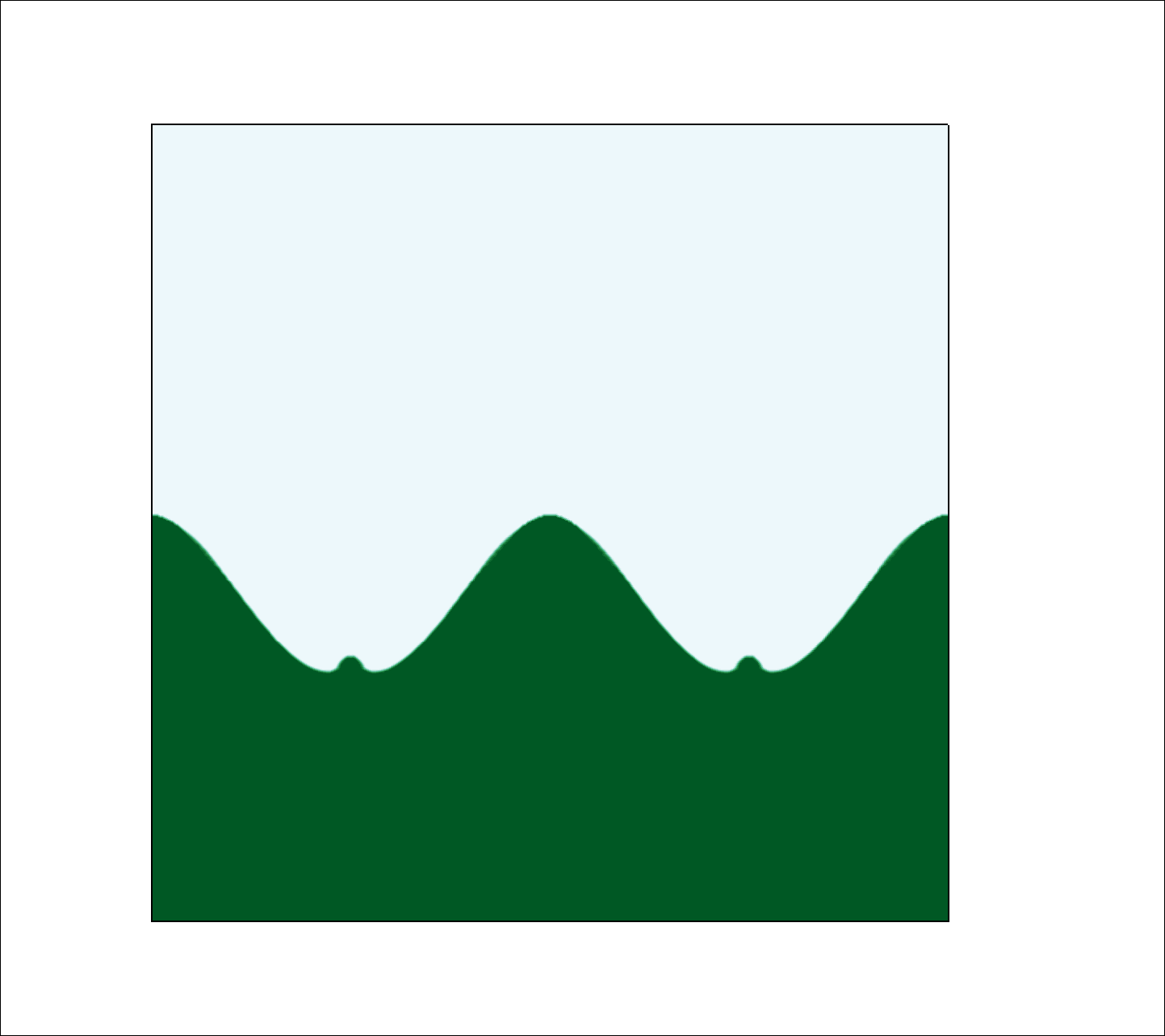}}
	\subfigure[t=202 $\mu$s]{\includegraphics[width=.42\textwidth,trim={2.5cm 2.5cm 2.5cm 2.5cm},clip]{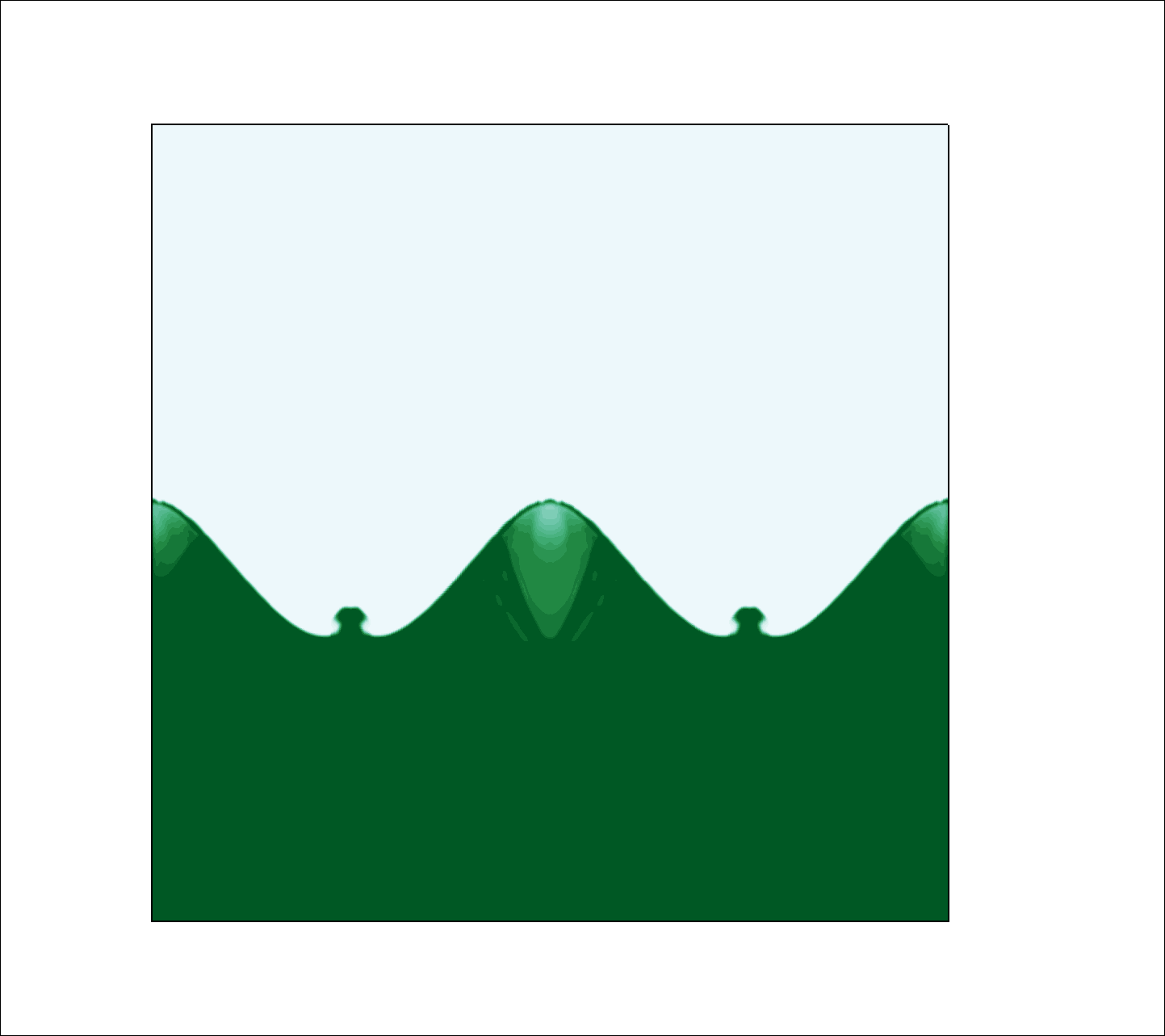}}
	\subfigure[t=290 $\mu$s]{\includegraphics[width=.42\textwidth,trim={2.5cm 2.5cm 2.5cm 2.5cm},clip]{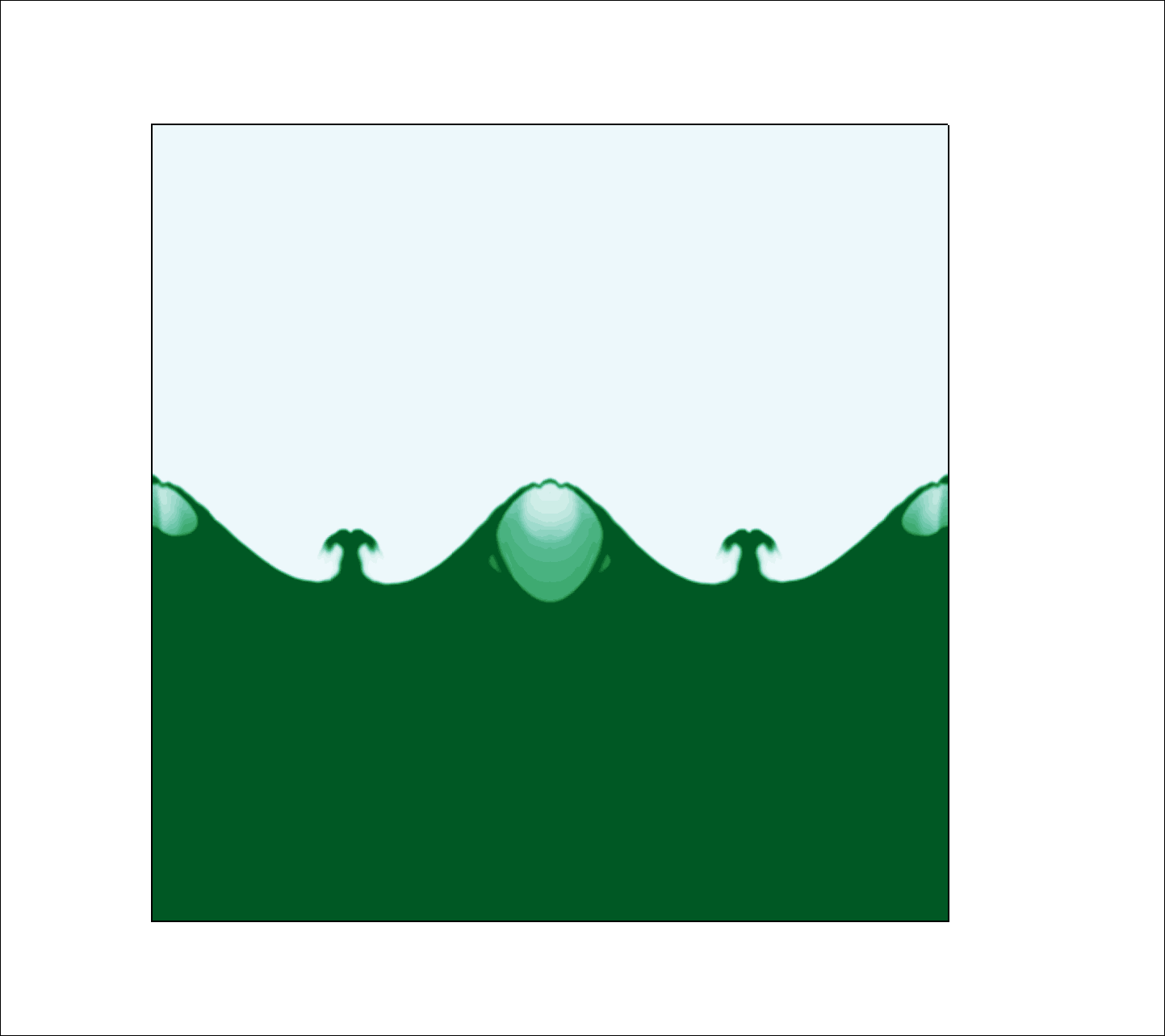}}
	\end{center}
	\protect\caption{Volume fraction of liquid water for numerical solutions of Richtmyer-Meshkov instability with cavitation at different instances. 
    \label{fig:RMinstability}}	
\end{figure}

\section{Concluding Remarks}\label{sec:concluding}
This work presents a novel hybrid approach that combines the Godunov-type scheme with the projection solution procedure to simulate multi-component flow systems characterized by the simultaneous presence of pressure-velocity coupling and pressure-density coupling dominated regions.  The proposed hybrid approach begins by splitting the inviscid flux into the advection part and the pressure part. The solution variables are first updated to their intermediate states by solving the advection part with the all-speed AUSM Riemann solver. We modified advection flux in AUSM to eliminate the pressure flux term that deteriorates the accuracy at the low Mach region. To prevent the advection flux from causing spurious velocities when surface tension is present, the pressure-velocity coupling term in advection flux is modified to ensure it vanishes at material interfaces. Then, we derive the pressure Helmholtz equation to solve the final pressure and update the intermediate states to the solution variables at the next time step. The desirable properties of the proposed solver is discussed. To accurately resolve the complex flow structures including shock waves and material interfaces without obvious numerical oscillations, a newly proposed homogenous ROUND  reconstruction strategy is employed in this work.  We simulate the incompressible multiphase benchmark tests such as rising bubbles and compressible multiphase problems such as shock-droplet interactions. By including phase-transition model, we are able to simulate nucleate boiling at low-Mach nubmers and the Richtmyer-Meshkov instability with cavitation at high-speed. Thus, this work proposes a numerical solver capable of accurately simulating all-Mach-number multicomponent flows involving multiple physical processes.

In our future work, the AMR (adaptive mesh refinement) technique will be applied in the current solver, as the assumption of the four-equation model requires a fine resolution of the thermal boundary layer across the material interface. Then, the proposed solver will be applied to simulate and investigate the lithium boiling tank. 
 
\section*{Acknowledgements}
XD and OKM acknowledge the funding provided by the Engineering and Physical Sciences Research Council and First Light Fusion through the ICL-FLF
Prosperity Partnership (grant number EP/X025373/1).  Dr. Abd Essamade Saufi from FLF is acknowledged for the fruitful discussion.

\clearpage{}
%\section*{References}
\bibliographystyle{elsarticle-num-names.bst}
\bibliography{reference.bib}

\end{document}